\documentclass[11pt,prd,aps,amsfonts,eqsecnum,superscriptaddress,nofootinbib,longbibliography,notitlepage]{revtex4-1}

\usepackage{lmodern}    
\usepackage{microtype}  
\usepackage[english]{babel}
\usepackage{setspace}

\usepackage{fancyhdr}

\usepackage{caption}
\usepackage{graphicx}
\usepackage{xcolor}
\usepackage{subcaption}
\usepackage{rotating}
\usepackage{amsmath, amssymb, graphics, amsthm, isomath}
\usepackage{physics}
\usepackage{verbatim}
\usepackage{tikz}
\usepackage{tikz-cd}
\usepackage{circuitikz}
\usetikzlibrary{patterns}
\usetikzlibrary{through,hobby}
\usetikzlibrary{decorations.markings}
\usetikzlibrary{fadings,shadings}
\usetikzlibrary{plotmarks}
\usetikzlibrary{positioning}
\usetikzlibrary{decorations,arrows}
\usetikzlibrary{decorations.pathreplacing}
\usetikzlibrary{chains, scopes, positioning, backgrounds, shapes, fit, shadows, calc, arrows.meta, decorations.pathreplacing}
\usepackage{blkarray}
\usepackage{svg}
\usepackage{mathrsfs}
\usepackage{algorithm}
\usepackage{algpseudocode}

\usepackage{xcolor}
\definecolor{mycitecolor}{rgb}{0.0, 0.45, 0.85}   

\usepackage[
  colorlinks=true,
  linkcolor=mycitecolor,
  citecolor=mycitecolor,
  urlcolor=mycitecolor,
  hyperindex=true,
  linktocpage=true
]{hyperref}

\usepackage[capitalise,compress]{cleveref}
\usepackage{tcolorbox}

\newtcolorbox{shadedtheorem}{
  colback=gray!15,    
  colframe=white,     
  boxrule=0pt,        
  arc=0pt,            
  left=5pt,           
  right=5pt,
  top=5pt,
  bottom=5pt
}

\newtheoremstyle{upright-hang}
  {6pt}   
  {6pt}   
  {\normalfont\hangindent=1.5em\hangafter=1\relax} 
  {0pt}   
  {\bfseries} 
  {.}     
  {0.5em} 
  {}      

\theoremstyle{upright-hang}
\newtheorem{thm}{Theorem}
\numberwithin{thm}{section}
\newtheorem{cor}[thm]{Corollary}

\newtheorem{lem}[thm]{Lemma}

\newtheorem{prop}[thm]{Proposition}

\newtheorem{defn}[thm]{Definition}

\newtheorem{rmk}[thm]{Remark}

\renewcommand{\thesection}{\arabic{section}}
\renewcommand{\thesubsection}{\thesection.\arabic{subsection}}
\renewcommand{\thesubsubsection}{\thesubsection.\arabic{subsubsection}}

\makeatletter
\renewcommand{\p@subsection}{}
\renewcommand{\p@subsubsection}{}
\makeatother

\makeatletter
\fancypagestyle{plain}{%
  \fancyhf{}%
  \fancyfoot[C]{\thepage}%
}
\fancypagestyle{revtexfooter}{%
  \fancyhf{}%
  \fancyfoot[C]{\thepage}%
}
\makeatother
\AtBeginDocument{\setstretch{1.25}\pagestyle{revtexfooter}}
\makeatletter
\renewcommand\frontmatter@title@format{\LARGE\bfseries\centering\parskip\z@skip}

\def\frontmatter@affiliationfont{\normalfont\selectfont}
\newcommand{\AndySectionStar}[1]{%
  \par\addpenalty\@secpenalty
  \addvspace{3.5ex \@plus 1ex \@minus .2ex}%
  \noindent{\normalfont\Large\bfseries #1\par}%
  \nobreak\vspace{2.3ex \@plus .2ex}%
  \@afterheading}
\newcommand{\AndySubsectionStar}[1]{%
  \par\addpenalty\@secpenalty
  \addvspace{3.25ex \@plus 1ex \@minus .2ex}%
  \noindent{\normalfont\large\bfseries #1\par}%
  \nobreak\vspace{1.5ex \@plus .2ex}%
  \@afterheading}
\newcommand{\AndySubsubsectionStar}[1]{%
  \par\addpenalty\@secpenalty
  \addvspace{3.25ex \@plus 1ex \@minus .2ex}%
  \noindent{\normalfont\normalsize\bfseries #1\par}%
  \nobreak\vspace{1.5ex \@plus .2ex}%
  \@afterheading}
\renewcommand\@seccntformat[1]{\csname the#1\endcsname\quad}
\def\@hangfrom@section#1#2#3{\@hangfrom{#1#2}#3}
\def\@hangfroms@section#1#2{#1#2}
\renewcommand\section{\@ifstar{\AndySectionStar}{\@startsection {section}{1}{\z@}%
  {-3.5ex \@plus -1ex \@minus -.2ex}%
  {2.3ex \@plus .2ex}%
  {\normalfont\Large\bfseries}}}
\renewcommand\subsection{\@ifstar{\AndySubsectionStar}{\@startsection{subsection}{2}{-\parindent}%
  {-3.25ex\@plus -1ex \@minus -.2ex}%
  {1.5ex \@plus .2ex}%
  {\normalfont\large\bfseries}}}
\renewcommand\subsubsection{\@ifstar{\AndySubsubsectionStar}{\@startsection{subsubsection}{3}{-\parindent}%
  {-3.25ex\@plus -1ex \@minus -.2ex}%
  {1.5ex \@plus .2ex}%
  {\normalfont\normalsize\bfseries}}}
\renewcommand*\l@section[2]{%
  \ifnum \c@tocdepth >\z@
    \addpenalty\@secpenalty
    \addvspace{0.75em \@plus\p@}%
    \begingroup
      \parindent\z@
      \rightskip\@pnumwidth
      \parfillskip-\@pnumwidth
      \leavevmode\bfseries
      \setlength\@tempdima{2.3em}%
      \advance\leftskip\@tempdima
      \hskip-\leftskip
      #1\nobreak\hfil\nobreak\hb@xt@\@pnumwidth{\hss #2}\par
    \endgroup
  \fi}
\renewcommand*\l@subsection{\@dottedtocline{2}{1.5em}{3.2em}}
\renewcommand*\l@subsubsection{\@dottedtocline{3}{3.8em}{4.1em}}
\makeatother

\usepackage{mathtools}
\tikzstyle{densely dashed}= [dash pattern=on 4pt off 3pt]

\usepackage{dsfont}

\allowdisplaybreaks

\begin{document}

\title{Preparing approximate $N$-fold cat states with the phase space instruction set}

\author{Shuyan Zhou}
\affiliation{Department of Physics and Center for Theory of Quantum Matter, University of Colorado, Boulder, CO 80309, USA}

\author{Andrew Lucas}
\email{andrew.j.lucas@colorado.edu}
\affiliation{Department of Physics and Center for Theory of Quantum Matter, University of Colorado, Boulder, CO 80309, USA}

\begin{abstract}
    The phase space instruction set is a continuous-variable universal gate set involving single-qubit rotations and qubit-dependent displacements on a single boson.   Using these gates, we prove that a circuit depth $\mathrm{\Omega}(\varphi(N))$ is necessary to approximately prepare a large $N$-fold rotationally invariant Schr\"odinger cat state; here $\varphi(N) \gtrsim N/\log\log N$ is the Euler totient function.  A protocol saturating this asymptotic bound on circuit depth is obtained for every prime number $N$.  This protocol has an asymptotically optimal runtime, when the gates are generated by Hamiltonian evolution.  Our results provide a sharp example where a universal gate set is surprisingly inefficient at preparing a simple family of states, and further imply that converting bosonic circuits between different universal gate sets can be extremely inefficient.
\end{abstract}

\maketitle

\tableofcontents

\section{Introduction}
Many experimental platforms for quantum information processing naturally involve bosonic degrees of freedom.  In circuit quantum electrodynamics (cQED), the strong coupling between superconducting qubits and microwave cavities has made it possible to prepare, manipulate, and measure highly nonclassical states of light~\cite{girvin2017schrodinger,vlastakis2013deterministically,heeres2015cavity,krastanov2015universal}. Among the most important such states are coherent states, which admit a simple geometric interpretation as points in phase space~\cite{glauber1963coherent}. Superpositions of well-separated coherent states are intrinsically non-Gaussian and have become central objects in quantum optics, cQED, and bosonic quantum error correction~\cite{yurke1986generating,vlastakis2013deterministically,mirrahimi2014catqubits,heeres2017universal}. The preparation of these non-Gaussian bosonic states requires resources beyond purely Gaussian control: see e.g. \cite{chabaud2020stellar}.  

In this work, we are interested in a specific non-classical bosonic state:  the $N$-fold cat state
\begin{equation}
\ket{\alpha_N}
:=
\frac{1}{
\sqrt{\mathcal{Z}_{N,\alpha}}
}
\sum_{j=0}^{N-1}
\ket{\alpha\mathrm{e}^{2\pi \mathrm{i} j/N}},
\label{eq:intro-cat}
\end{equation}
where the exact normalization constant is
\begin{equation}
\mathcal{Z}_{N,\alpha}
:=
\sum_{j,k=0}^{N-1}
\langle
\alpha\mathrm{e}^{2\pi \mathrm{i} j/N}
|
\alpha\mathrm{e}^{2\pi \mathrm{i} k/N}
\rangle.
\label{eq:cat-normalization}
\end{equation}
whose coherent-state peaks are arranged at the vertices of a regular $N$-gon in phase space.  We remind the reader that $a |\alpha\rangle = \alpha |\alpha\rangle$ are coherent states, i.e. eigenstates of the lowering operator in a single boson Hilbert space.   Our question is: how easy is it to approximately prepare $|\alpha_N\rangle$?  Clearly, to create the $N$-fold cat states, one must supplement simple Gaussian operations with a non-Gaussian gate.

Formally, it is always possible to prepare $|\alpha_N\rangle$ with arbitrarily high fidelity given access to a universal gate set.   Indeed, often in quantum information processing and error correction, substantial emphasis is placed on obtaining access to any such universal gate set.  Yet a practical question then remains -- how large is the circuit needed to prepare the target state with a fixed universal gate set?  In the case of a single qubit, these questions were addressed a long time ago by Solovay and Kitaev \cite{solovay1995,kitaev1997} with a relatively encouraging result:  any state can be prepared with fidelity $\epsilon$ with a universal gate set, using a circuit of depth $\mathrm{polylog}(1/\epsilon)$.   One might hope that a similar result could hold for a bosonic Hilbert space ``in practice", despite the formal challenges that arise due to the infinite dimensionality.  In other words, we may hope that useful states are pretty easy to prepare with any universal gate set, up to some ``polylog" overhead.

In this paper, we will answer our posed question using a concrete universal gate set, called the ``phase space instruction set", which we review in Section \ref{sec:review}.   This gate set involves coupling a single qubit to a boson/oscillator, and allowing for arbitrary qubit-dependent linear displacement operations on the boson, along with arbitrary single-qubit rotations.  This is a natural gate set for experiments because unconditional oscillator displacements can be implemented by classical drives, single-qubit rotations are standard controls, and qubit-dependent displacements arise directly from qubit-oscillator coupling mechanisms available in cQED and related platforms~\cite{blais2021circuit,eickbusch2022fast,diringer2024fast}. In this sense, the phase space instruction set captures a simple and experimentally motivated way of using one ancillary qubit as the non-Gaussian resource needed for oscillator control. Related oscillator-control problems have been studied in the context of arbitrary electromagnetic-field state synthesis \cite{law1996arbitrary}. In the present setting, by conjugating with qubit rotations, one may equivalently realize conditional displacements controlled by different Pauli operators. This phase-space instruction set is simple enough to allow a transparent gate-count analysis, yet expressive enough to build nontrivial coherent-state superpositions. 

We will show that it is surprisingly hard to prepare good $N$-fold cat states with this universal gate set: a circuit depth $\mathrm{\Omega}(\varphi(N))$ is required to prepare $|\alpha_N\rangle$ for any sufficiently large $\alpha$. Here $\varphi(N)$ is the Euler totient function from number theory, which we review in Section \ref{sec:3}.  For the special case of prime $N$, $\varphi(N) = N-1$, so a linear circuit depth is required to prepare $|\alpha_N\rangle$.  This is exponentially harder than one might have hoped, for reasons that we will explain later.  Our result, formalized as Theorem \ref{thm:bound}, is based on established results in number theory.

Section \ref{sec:4} provides an explicit and asymptotically optimal circuit for preparing high-fidelity $|\alpha_N\rangle$ at sufficiently large $\alpha$, summarized as Theorem \ref{thm:protocol}, for the case of prime $N$.  The protocol again relies on many ideas from number theory to ensure its fidelity.   Somewhat encouragingly, we find in Section \ref{sec:4} that our protocol-seeded optimization pipeline works relatively well even for modest values of $\alpha \sim 10$, allowing for the preparation of 5-fold cat states with fidelity $\sim 80\%$.  We found that other algorithms, such as direct GRAPE \cite{khaneja2005optimal} under random-seed optimization, struggled to reach similar fidelities at this value of $\alpha$.

Our work complements the existing literature on oscillator-control protocols based on displacements together with qubit-mediated nonlinear operations.  For example, SNAP gates and related dispersive-control constructions show that a qubit coupled to an oscillator can provide universal control of the oscillator state~\cite{heeres2015cavity,krastanov2015universal,heeres2017universal}.  Quantum signal processing has recently been shown to prepare a very special kind of multi-component cat state  deterministically~\cite{fong2025engineering,zeytinoglu2024generalized}.  Their algorithm can be much more efficient when the engineered quantum-signal-processing primitive is counted as an available operation, because it uses a different non-Gaussian gate set.  As such, there is no contradiction with our analysis concerning the phase space instruction set.  Indeed, as we will discuss in Section \ref{sec:5}, the tension between these two gate sets certifies a serious inefficiency in converting circuits between two different universal gate sets for a boson and spin.

\section{Setup of the problem}\label{sec:review}
In this section, we provide a self-contained review of some useful results regarding the quantum mechanics of a spin and a boson.
\subsection{Coherent states}
We consider the quantum mechanics of a single qubit coupled to a boson, with total Hilbert space $\mathcal{H} : =  \ell^2(\mathbb N_0)\otimes \mathbb{C}^2$.  In this paper we will usually represent states in this Hilbert space as \begin{equation}
    |\psi\rangle = \int \mathrm{d}^2z \; |z\rangle \otimes \left(\begin{array}{c} u(z) \\ v(z) \end{array}\right),
\end{equation}
where $z \in \mathbb{C}$ is the coherent state parameter, the two-component vector denotes the two states of the qubit, and the coherent states are defined as \begin{equation}
    a |z \rangle = z |z\rangle.
\end{equation}
As usual, the boson Hilbert space is equipped with a natural creation and annihilation operator obeying $[a,a^\dagger]=1$, while the qubit Hilbert space comes with the Pauli matrices $X,Y,Z$.  

It is very natural to work in a coherent state representation in this paper, and we remind the reader of some useful facts about coherent states.  The coherent state representation is overcomplete: in particular, \begin{equation}
    \langle z \mid z+\epsilon\rangle
  = \exp\!\bigg(-\frac{|\epsilon|^{2}}{2} + \mathrm{i}\,\mathrm{Im}\left(\bar z \epsilon\right)\bigg) \label{eq:coherent-overlap} 
\end{equation}
Here $\bar z$ denotes complex conjugation.  For each $z \in \mathbb{C}$, the coherent state $\lvert z\rangle$ is
\begin{equation}
    \lvert z\rangle := D(z)\lvert 0\rangle \label{eq:Dz0}
\end{equation}
where we have defined the displacement operator
\begin{equation}
D(z) := \exp(z a^{\dagger} - \bar z a).
\end{equation}
The displacement operators obey the algebra: \begin{equation}
    D(b) D(c)
  = \mathrm{e}^{\mathrm{i}\,\mathrm{Im}(\bar c b)} D(b+c). \label{eq:D-eps}
\end{equation}
The following error bounds will prove extremely useful: \begin{prop}
For any $\epsilon$, \begin{subequations}\label{eq:prop21}
\begin{align}
\big\lVert D(z)\lvert 0\rangle - D(z)D(\epsilon)\lvert 0\rangle\big\rVert
  &\le |\epsilon|, \label{eq:approx-left}\\
\big\lVert \mathrm{e}^{2\mathrm{i}\,\mathrm{Im}(\bar z \epsilon)}D(z)\lvert 0\rangle - D(\epsilon)D(z)\lvert 0\rangle\big\rVert
  &
   \le |\epsilon|. \label{eq:approx-right}
\end{align}\end{subequations}
\end{prop}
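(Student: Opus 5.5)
Both inequalities reduce to the single estimate $\lVert \lvert 0\rangle - \lvert \epsilon\rangle\rVert \le |\epsilon|$, once we use unitarity of the displacement operators and the composition law \eqref{eq:D-eps}.

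\textbf{Reduction for \eqref{eq:approx-left}.} Since $D(z)$ is unitary,
\[
\lVert D(z)\lvert 0\rangle - D(z)D(\epsilon)\lvert 0\rangle\rVert = \lVert \lvert 0\rangle - \lvert \epsilon\rangle\rVert .
\]

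\textbf{Reduction for \eqref{eq:approx-right}.} Apply \eqref{eq:D-eps} in both orders:
\[
D(\epsilon)D(z) = \mathrm{e}^{\mathrm{i}\,\mathrm{Im}(\bar z\epsilon)}D(\epsilon+z), \qquad D(z)D(\epsilon) = \mathrm{e}^{\mathrm{i}\,\mathrm{Im}(\bar\epsilon z)}D(z+\epsilon).
\]
Since $\mathrm{Im}(\bar\epsilon z) = -\mathrm{Im}(\bar z\epsilon)$, these combine to
\[
D(\epsilon)D(z) = \mathrm{e}^{2\mathrm{i}\,\mathrm{Im}(\bar z\epsilon)}D(z)D(\epsilon).
\]
Substituting this into the left side of \eqref{eq:approx-right}, factoring out the unimodular phase, and using unitarity of $D(z)$ again gives the same quantity $\lVert \lvert 0\rangle - \lvert\epsilon\rangle\rVert$. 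I expect the main risk in the whole proof to be this phase bookkeeping: the sign convention in \eqref{eq:D-eps} must produce exactly $\mathrm{e}^{2\mathrm{i}\,\mathrm{Im}(\bar z\epsilon)}$, so I will verify it directly from the Baker--Campbell--Hausdorff formula, $[za^\dagger-\bar z a,\ \epsilon a^\dagger - \bar\epsilon a] = \bar z\epsilon - z\bar\epsilon = 2\mathrm{i}\,\mathrm{Im}(\bar z\epsilon)$.

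\textbf{The core estimate.} By \eqref{eq:coherent-overlap} with $z=0$, $\langle 0\mid\epsilon\rangle = \mathrm{e}^{-|\epsilon|^2/2}$. Hence
\[
\lVert \lvert 0\rangle - \lvert\epsilon\rangle\rVert^2 = 2 - 2\,\mathrm{Re}\langle 0\mid \epsilon\rangle = 2\bigl(1-\mathrm{e}^{-|\epsilon|^2/2}\bigr) \le |\epsilon|^2,
\]
where the last step uses $1-\mathrm{e}^{-x}\le x$ for $x\ge 0$. Taking square roots gives $\lVert \lvert 0\rangle - \lvert\epsilon\rangle\rVert \le |\epsilon|$, which proves both \eqref{eq:approx-left} and \eqref{eq:approx-right}.
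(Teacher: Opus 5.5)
Your proof is correct and is essentially the paper's argument. The paper evaluates $\lVert \lvert z\rangle - \mathrm{e}^{-\mathrm{i}\,\mathrm{Im}(\bar z\epsilon)}\lvert z+\epsilon\rangle\rVert^2 = 2(1-\mathrm{e}^{-|\epsilon|^2/2})$ directly from \eqref{eq:coherent-overlap} and \eqref{eq:D-eps} and then applies $1-\mathrm{e}^{-x}\le x$; you instead remove $D(z)$ by unitarity first, so only $\langle 0\mid\epsilon\rangle$ is needed, which is the same computation.

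There is one harmless slip in your side check. In fact $[za^\dagger-\bar z a,\ \epsilon a^\dagger-\bar\epsilon a] = z\bar\epsilon-\bar z\epsilon$; it is the reversed commutator that equals $2\mathrm{i}\,\mathrm{Im}(\bar z\epsilon)$ and gives $D(\epsilon)D(z)=\mathrm{e}^{2\mathrm{i}\,\mathrm{Im}(\bar z\epsilon)}D(z)D(\epsilon)$. So the phase you actually derived from \eqref{eq:D-eps} is correct.
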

\begin{proof}
     Combining \eqref{eq:coherent-overlap}, \eqref{eq:Dz0} and \eqref{eq:D-eps}, we find 
\begin{equation}
\label{eq:coherent-continuity}
\big\lVert\,\lvert z\rangle - \mathrm{e}^{-\mathrm{i}\,\mathrm{Im}(\bar z\epsilon)}\lvert z+\epsilon\rangle\,\big\rVert^{2}
= 2\bigl(1 - \mathrm{e}^{-|\epsilon|^{2}/2}\bigr).
\end{equation}
     For any $x\ge 0$, $1-\mathrm{e}^{-x} \le x$; \eqref{eq:prop21} follows.
\end{proof}

\subsection{Phase space instruction set}
The goal of this paper is to solve the following problem:  find a circuit (unitary) $U$, made out of the phase space instruction set, such that 
\begin{equation}
    1-
    \left|
        \left(
            \langle\alpha_N|
            \otimes
            \begin{pmatrix}
                1 & 0
            \end{pmatrix}
        \right)
        U
        \left(
            |0\rangle
            \otimes
            \begin{pmatrix}
                1\\
                0
            \end{pmatrix}
        \right)
    \right|^2
    \leq h.
    \label{eq:goal}
\end{equation}
for specified error $h>0$. The infidelity bound implies $ \left\lVert U \left( |0\rangle \otimes  \left(\begin{array}{c} 1\\ 0 \end{array}\right)\right)  - \mathrm{e}^{\mathrm{i\gamma}}|\alpha_N\rangle \otimes \left(\begin{array}{c} 1\\ 0 \end{array}\right) \right\rVert^2 \leq \epsilon$, where $\epsilon=2\left(1-\sqrt{1-h}\right)$. By choosing the global phase of the output state appropriately, we may take $\gamma=0$. $U$ should be as low depth as we can find.  $|\alpha_N\rangle$ is the cat state \eqref{eq:intro-cat}.  

\begin{defn}[Phase space instruction set]
Let $X_n = n_x X + n_y Y + n_z Z$ be a Pauli matrix oriented around unit vector $n$.   The phase space instruction set corresponds to the set of unitaries \begin{equation} 
    \mathcal{S} := \left\lbrace D(\alpha), D(\alpha X_n),  \mathrm{e}^{\mathrm{i}\theta X_n}\right\rbrace,  \label{eq:phasespaceinstruction}
\end{equation}
for any $\alpha \in \mathbb{C}$, $\theta \in \mathbb{R}$, and $n\in \mathrm{S}^2$.  
\end{defn}
\noindent Above, $D(\alpha X_n)$ is a qubit-conditional displacement operation, e.g. \begin{equation}
    D(zZ) = \left(\begin{array}{cc} D(z) &\ 0 \\ 0 &\ D(-z) \end{array}\right).
\end{equation}   A unitary that creates any state to arbitrary fidelity will exist because:
\begin{prop}
    $\mathcal{S}$ is a (polynomially) universal gate set.   
\end{prop}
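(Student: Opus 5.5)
The plan is to reduce to a known universal gate set for a qubit coupled to an oscillator, and to show that each generator of that set can be built from $\mathcal{S}$ with polynomial overhead. The natural target is the Lie-algebraic (Lloyd--Braunstein) notion of universality. Any unitary generated by a polynomial Hamiltonian in $a,a^\dagger$ (tensored with Paulis) of degree $d$ can then be approximated by a circuit whose length is polynomial in $d$, the evolution time, and $1/\epsilon$. The concrete steps are:
\begin{enumerate}
\item Identify the generators. Writing $D(zX_n)=\exp\bigl(X_n\otimes(za^\dagger-\bar z a)\bigr)$, we get anti-Hermitian generators $\mathrm{i}\,\sigma\otimes x$ and $\mathrm{i}\,\sigma\otimes p$, with $\sigma\in\{1,X,Y,Z\}$ and $x,p$ the quadratures. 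We also get the single-qubit generators $\mathrm{i}\sigma\otimes 1$.
\item Generate higher degrees with commutators. The first commutator is
\begin{equation}
[\mathrm{i}Z x,\mathrm{i}X p]=-(ZX\,xp-XZ\,px)=-\mathrm{i}Y(xp+px)/1\cdot(\text{const}),
\end{equation}
so $Y\otimes(xp+px)$ appears, up to constants. Similarly $[\mathrm{i}Zx,\mathrm{i}Xx]\propto Y x^2$. Further nested commutators with $\sigma\otimes x$ and $\sigma\otimes p$ raise the polynomial degree by one each time. The Pauli factor can be rotated freely using $\mathrm{e}^{\mathrm{i}\theta X_n}$, and an identity Pauli factor is recovered as the product $X\cdot X$ inside commutators.
\item Count the overhead. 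By induction, the generated Lie algebra contains $\sigma\otimes P(x,p)$ for every Hermitian polynomial $P$. Each commutator is realized by the group-commutator formula $\mathrm{e}^{A t}\mathrm{e}^{Bt}\mathrm{e}^{-At}\mathrm{e}^{-Bt}=\mathrm{e}^{[A,B]t^2+O(t^3)}$, and linear combinations are realized by Trotterization. Both cost only polynomially many gates per degree, which gives the ``polynomial'' qualifier.
\item Pass to universality. Since the full polynomial algebra is reached, the standard argument of Lloyd and Braunstein yields (polynomial) universality, and in particular approximability of states such as $|\alpha_N\rangle\otimes(1,0)^T$ on the energy-bounded subspace.
\end{enumerate}

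There is a simpler alternative if one prefers to cite a known result. The set $\mathcal{S}$ contains conditional displacements and qubit rotations, which are exactly the ECD gate set of Eickbusch et al.; that set, like SNAP together with displacements, is known to be universal. One could therefore also cite that result, with step 2 serving as the self-contained justification.

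The main obstacle is making ``polynomially universal'' rigorous in infinite dimensions. Trotter and commutator error bounds involve unbounded operators, so they must be controlled on states of bounded energy (or bounded support in phase space). I would handle this the way Lloyd--Braunstein do: restrict to an $n$-photon cutoff large enough to contain the target up to $\epsilon$, and bound the norms of degree-$d$ generators there by $\mathrm{poly}(n)^{d}$.
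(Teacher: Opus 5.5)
Your steps 1--2 are essentially the paper's proof. The paper likewise passes to the Lie algebra generated by the small-parameter gates. It raises the polynomial degree by commuting terms that carry anticommuting Paulis, $[(a^\dagger)^m a^{n-1}X,\,aY]=2\mathrm{i}(a^\dagger)^m a^n Z+\mathrm{i}m(a^\dagger)^{m-1}a^{n-1}Z$, and strips off the Pauli factor by commuting terms that carry the same Pauli, $[(a^\dagger)^m a^{n+1}Z,\,a^\dagger Z]=(n+1)(a^\dagger)^m a^n$. The only difference is that it works with monomials in $a,a^\dagger$ rather than Hermitian polynomials in $x,p$.

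One caveat. The paper \emph{defines} ``polynomially universal'' as the dynamical Lie algebra containing all Hermitian polynomials in the quadratures tensored with qubit operators. It explicitly says this is not a claim about norm density, and it says nothing about gate-count overhead. So your steps 3--4 are unnecessary for this statement.

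Your opening claim that the circuit length is polynomial in the degree $d$ is also not delivered by nested group commutators plus Trotterization. The gate count grows exponentially with the nesting depth, and the power of $1/\epsilon$ grows with $d$. In addition, the error bounds for these unbounded generators depend on the state's energy. That claim should be dropped rather than used to justify the ``polynomial'' qualifier.
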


\noindent Some qualification here is needed regarding terminology,  because the oscillator Hilbert space is infinite-dimensional.  We adopt the standard definition used in continuous-variable quantum information: a gate set is (polynomially) universal if its dynamical Lie algebra contains all finite Hermitian polynomials in the oscillator quadratures, tensored with operators acting on the ancillary qubit \cite{lloyd1999quantum,eickbusch2022fast}.  This definition does not assert operator-norm density in the full unitary group \cite{keyl2019quantum}.

\begin{proof}
    Consider the limit of small $\theta$ and $\alpha$ in \eqref{eq:phasespaceinstruction}: we can implement unitaries that are the exponential of the algebra generated by $\mathfrak{s} = \lbrace a, a^\dagger, aX_n, a^\dagger X_n, X_n\rbrace$.   First let us show that commutators of elements of $\mathfrak{s}$ can lead to any operator of the form $(a^\dagger)^m a^n X_n$ (for simplicity, take $X_n=Z$ in what follows).  We prove by induction -- if $m+n=1$, it is true by definition. Assume true for $m+n = \ell-1$.  Now, if $m+n=\ell$, notice that \begin{equation}
        \left[ (a^\dagger)^m a^{n-1}X, aY \right] = 2\mathrm{i}(a^\dagger)^m a^nZ + \mathrm{i} m (a^\dagger)^{m-1}a^{n-1}Z \label{eq:universalcommutator}
    \end{equation}
    contains the desired generator, along with something we already have in the algebra.   Next, we see that \begin{equation}
        \left[ (a^\dagger)^m a^{n+1}Z, a^\dagger Z\right] = (n+1)(a^\dagger)^m a^n .
    \end{equation}
    So we can obtain arbitrary polynomials $f(a,a^\dagger)$ and $f(a,a^\dagger)X_n$, which generate a complete set of Hermitian operators on $\mathcal{H}$.
\end{proof}

Lastly, one might wish to trade low circuit depth in $\mathcal{S}$ to a small runtime.   We can interpret the gates in $\mathcal{S}$ as generated by a time-dependent Hamiltonian, such as \begin{equation}
    H(t) = \mathrm{i}\left[z_1(t) a^\dagger - \bar z_1(t) a\right] + \mathrm{i}\left[z_2(t) a^\dagger - \bar z_2(t) a\right] Z + c_X(t)X+c_Y(t)Y+c_Z(t)Z. \label{eq:phasespace-H}
\end{equation}
In this case, the relevant parameter is not circuit depth $D$ per se, but rather \begin{equation}
    t_{\text{run}}  = \sum_{j=1}^D \left( |\alpha_j| + |\theta_j|\right),
\end{equation}
with $\alpha_j$ or $\theta_j$ the magnitude of the parameter in the $j^{\mathrm{th}}$ unitary.   In the specific story of this paper, both metrics will behave similarly.

\section{Hardness of preparing a $N$-fold cat state}
\label{sec:3}

In this section, we will analyze the hardness of preparing the $N$-fold cat state with the phase space instruction set.  An optimistic hope is that an asymptotic gate depth of $\log N$, with $t_{\mathrm{run}}\gtrsim \alpha \log N$ might be achievable.   A simple argument for this is as follows:  the $N$-fold cat state $|\alpha_N\rangle$ is a superposition of $N$ distinct coherent states.  Notice that in the coherent state representation, the phase space instruction set can at most double the number of coherent states in any representation of the wave function after each gate: this is because\begin{equation}
    D(b  Z) \left[ |c\rangle \otimes \left(\begin{array}{c} 1\\ 1 \end{array}\right) \right] = \left(\begin{array}{c} \mathrm{e}^{\mathrm{i}\theta_1} |c+b\rangle \\ \mathrm{e}^{\mathrm{i}\theta_2} |c-b\rangle  \end{array}\right)
\end{equation}
can prepare two coherent state peaks at different places.  Although as written above, each coherent state is entangled with a different component of spin, we might optimistically hope that disentangling the boson and spin is relatively easy.  When $\alpha$ is very large, these coherent state peaks are all very far apart, so $|\alpha_N\rangle$ can't be well approximated by a superposition of $<N$ coherent states.  Combining these facts together, we see that preparing $|\alpha_N\rangle$ (let alone disentangling the spin) will take at least $\log N$ applications of qubit-assisted displacement gates out of the phase space instruction set.  Moreover, these displacements will have to have magnitude $\gtrsim \alpha$ to ensure that the coherent states are in the right places in the $N$-fold cat state.  This leads us to a naive runtime bound $t_{\text{run}} \gtrsim \alpha \log N$.

Somewhat surprisingly, we will find that our runtime bound above is actually wrong -- it is \emph{too pessimistic}!  As we show in Proposition \ref{prop:runtimebound}, there is a sharp bound $t_{\text{run}}\gtrsim \alpha$, which we will saturate in Theorem \ref{thm:protocol}.   However, the circuit depth estimate above was \emph{too optimistic}  --  we will prove the following no-go theorem in Section \ref{subsec:cyclotomic-fixed-accuracy}: \begin{thm}\label{thm:bound}
    Suppose that circuit $U$ is built from the phase space instruction set, and prepares an $N$-fold cat state according to \eqref{eq:goal}, with squared state-vector error $\epsilon < (2N)^{-1}$. Then there exists a finite threshold $\alpha_{\mathrm{low}}^{*}(N,\epsilon)$ such that, whenever $|\alpha|>
\alpha_{\mathrm{low}}^{*}(N,\epsilon),$  
    $U$ has a circuit depth no smaller than $\varphi(N)$, where $\varphi(N)$ is the Euler totient function \eqref{eq:eulertotient}.  
\end{thm}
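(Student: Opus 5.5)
The plan is to read off the coherent-state peak locations of the circuit output as $\{\pm1\}$-combinations of the displacement amplitudes. Fidelity forces these combinations to land near every vertex $\alpha\mathrm{e}^{2\pi\mathrm{i}k/N}$. For large $|\alpha|$, this forces a rank-$\varphi(N)$ lattice into the span of at most $D$ generators.

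\textbf{Step 1: where the peaks of the output can sit.} Write $U=G_D\cdots G_1$ with $G_j\in\mathcal{S}$. Conjugating each conditional displacement $D(b_jX_{n_j})$ by qubit rotations turns it into $D(b_jZ)$ up to a rotation on the qubit. Unconditional displacements only shift every peak by a common amount. So, using \eqref{eq:D-eps} and induction on $D$, I will show that
\begin{equation}
U\left(|0\rangle\otimes\begin{pmatrix}1\\0\end{pmatrix}\right)=\sum_{\sigma\in\{\pm1\}^{J}} |c+\textstyle\sum_{j\in J}\sigma_j b_j\rangle\otimes w_\sigma .
\end{equation}
Here $J$ indexes the conditional displacements, so $|J|\le D$; $c$ is the total unconditional shift; and the $w_\sigma\in\mathbb{C}^2$ satisfy $\sum_\sigma\|w_\sigma\|^2\le 1$ up to overlap corrections. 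This identity is exact, with no error terms.

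\textbf{Step 2: every vertex must be close to some peak.} The target $|\alpha_N\rangle\otimes(1,0)^T$ places weight $\approx 1/N$ on each vertex $\alpha\omega^k$, with $\omega=\mathrm{e}^{2\pi\mathrm{i}/N}$, once $\alpha$ is large. The squared error satisfies $\epsilon<(2N)^{-1}$. I will argue that every vertex must then have some peak $p_k=c+\sum_j\sigma^{(k)}_jb_j$ within a distance $R=R(N,\epsilon)$ independent of $\alpha$. The argument: if no peak lies within $R$ of $\alpha\omega^k$, then by \eqref{eq:coherent-overlap} the overlap of the output with $|\alpha\omega^k\rangle$ is at most $2^{D}\mathrm{e}^{-R^2/2}$. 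That would lose squared norm about $1/N>2\epsilon$. This counting step is where the hypothesis $\epsilon<(2N)^{-1}$ enters.

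Subtracting the relation for $k=0$ then gives
\begin{equation}
\omega^k-1=\sum_{j\in J} m_{kj}\,\beta_j+O(R/|\alpha|),\qquad m_{kj}\in\{-2,0,2\},\quad \beta_j=b_j/\alpha .
\end{equation}

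\textbf{Step 3: the number-theoretic contradiction.} Suppose $D<\varphi(N)$. Choose $\varphi(N)$ indices $k$ for which the $\omega^k-1$ are $\mathbb{Q}$-linearly independent; this is possible because they span $\mathbb{Q}(\omega)$, which has degree $\varphi(N)$. The corresponding integer matrix $M$ has size $\varphi(N)\times|J|$ and rank at most $D<\varphi(N)$. Hence there is a nonzero $q\in\mathbb{Z}^{\varphi(N)}$ with $q^TM=0$. Because the entries of $M$ are bounded, Siegel's lemma or Cramer's rule lets me take $|q|\le Q(N)$.

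Combining this with the approximate relation gives $\big|\sum_k q_k(\omega^k-1)\big|\le C(N)R/|\alpha|$. The left side is a nonzero element of $\mathbb{Q}(\omega)$ viewed inside $\mathbb{C}$, by linear independence. There are only finitely many candidate matrices $M$ and bounded vectors $q$, so the left side is bounded below by some $c(N)>0$. Setting $\alpha^*_{\mathrm{low}}=C(N)R(N,\epsilon)/c(N)$ gives a contradiction whenever $|\alpha|>\alpha^*_{\mathrm{low}}$. Therefore $D\ge\varphi(N)$.

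\textbf{Main obstacle.} I expect Step 2 to be the hardest part. Superpositions of many nearby or overlapping peaks, and the $2^{D}$ growth in their number, have to be controlled while keeping $R$ independent of $\alpha$. Since $D<\varphi(N)$ is assumed, the factor $2^D$ is bounded in terms of $N$, so I would try a Cauchy–Schwarz bound on the overlap. I would choose $R$ with $2^{\varphi(N)}\mathrm{e}^{-R^2/2}$ small compared with $1/(2N)-\epsilon$.
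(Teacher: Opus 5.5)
Your proposal is correct and follows essentially the same route as the paper. The crude $2^{D}\mathrm{e}^{-R^2/2}$ overlap bound forces an $\alpha$-independent peak near every vertex; then a kernel vector of the sign matrix that the cyclotomic map $T$ does not kill, together with finiteness of the set of sign matrices, gives a uniform lower bound that $|\alpha|\to\infty$ contradicts. Subtracting the $k=0$ relation (where $\sum_k\omega^k=0$ is what shows the $\omega^k-1$ still span $\mathbb{Q}(\omega)$) is only a cosmetic alternative to the paper's all-$+1$ rows of $L$ for unconditional displacements and its $\ell^1$-normalized rational kernel vector, with the mild bonus of bounding the number of conditional displacements alone.
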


\noindent The Euler totient function may be unfamiliar to our average reader, so we will review it in Section \ref{sec:cyclotomic} before proving the theorem.  An essential point is that for all sufficiently large $N$, we have the following non-tight bound \cite{hardywright2008introduction}: \begin{equation}
    \varphi(N) > \frac{N}{2 \log \log N}.  \label{eq:varphiN-bound}
\end{equation}   Unfortunately then, Theorem \ref{thm:bound} demonstrates that it is (nearly) exponentially harder to prepare $|\alpha_N\rangle $ than we might have hoped.

\subsection{Review of cyclotomic polynomials}\label{sec:cyclotomic}

In a nutshell, Theorem \ref{thm:bound} follows from the fact that we cannot actually place coherent states in the right locations in phase space using only $\log N$ gates, despite our previous idealistic argument.  The essential mathematics underlying the assertion is: \begin{prop}\label{prop:lin-ind}
    Let $\Omega_N := \lbrace 1,\mathrm{e}^{2\pi \mathrm{i}/N},\ldots, \mathrm{e}^{2\pi \mathrm{i}(N-1)/N}\rbrace$ denote the $N^{\mathrm{th}}$ roots of unity.   Then $\dim_{\mathbb{Q}}\operatorname{span}_{\mathbb{Q}}\Omega_N =\varphi(N).$ 
\end{prop}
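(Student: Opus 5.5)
The plan is to identify $\operatorname{span}_{\mathbb{Q}}\Omega_N$ with the cyclotomic field $\mathbb{Q}(\zeta_N)$, where $\zeta_N := \mathrm{e}^{2\pi\mathrm{i}/N}$, and then read off its dimension from the degree of the minimal polynomial of $\zeta_N$.

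First, since every element of $\Omega_N$ is a power $\zeta_N^j$, the span satisfies $\operatorname{span}_{\mathbb{Q}}\Omega_N = \operatorname{span}_{\mathbb{Q}}\{1,\zeta_N,\ldots,\zeta_N^{N-1}\}$. Because $\zeta_N^N=1$, every polynomial in $\zeta_N$ with rational coefficients reduces to a linear combination of these powers. Hence this span equals $\mathbb{Q}[\zeta_N]$, and since $\zeta_N$ is algebraic, $\mathbb{Q}[\zeta_N]=\mathbb{Q}(\zeta_N)$. Its $\mathbb{Q}$-dimension is therefore $[\mathbb{Q}(\zeta_N):\mathbb{Q}]$, which equals the degree of the minimal polynomial of $\zeta_N$ over $\mathbb{Q}$.

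Second, introduce the cyclotomic polynomial
\begin{equation*}
\Phi_N(x) := \prod_{\substack{1\le k\le N\\ \gcd(k,N)=1}} \left(x-\zeta_N^k\right),
\end{equation*}
whose degree is $\varphi(N)$ by the definition of the totient. Grouping the roots of $x^N-1$ by their exact order gives $x^N-1=\prod_{d\mid N}\Phi_d(x)$. Induction on $N$, using division by the monic integer polynomial $\prod_{d\mid N,\,d<N}\Phi_d$, then shows that $\Phi_N\in\mathbb{Z}[x]$. Since $\zeta_N$ is a root of $\Phi_N$, we get the upper bound $\dim\le\varphi(N)$.

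Third, and this is the main obstacle, we must show that $\Phi_N$ is irreducible over $\mathbb{Q}$; this is the classical theorem of Gauss, proved in general by Dedekind and Landau. I would follow the standard argument. Let $f$ be the minimal polynomial of $\zeta_N$. By Gauss's lemma it is monic in $\mathbb{Z}[x]$, and it divides $x^N-1$. Take any root $\zeta$ of $f$ and any prime $p\nmid N$; we claim $\zeta^p$ is also a root of $f$. Suppose not. Then $\zeta^p$ is a root of $g$, where $x^N-1=fg$, so $\zeta$ is a root of $g(x^p)$ and hence $f(x)\mid g(x^p)$. Reducing mod $p$ gives $g(x^p)\equiv g(x)^p$, so $\bar f$ and $\bar g$ share a factor in $\mathbb{F}_p[x]$. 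Then $x^N-1$ would have a repeated root mod $p$, which contradicts the fact that its derivative $Nx^{N-1}$ is coprime to it when $p\nmid N$. Now any $k$ coprime to $N$ factors into primes not dividing $N$, so iterating the claim shows that every primitive root $\zeta_N^k$ is a root of $f$. Hence $\deg f\ge\varphi(N)$, so $f=\Phi_N$ and the dimension is exactly $\varphi(N)$.

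Since this is textbook number theory, I would present the first two steps in full and cite the irreducibility of $\Phi_N$ \cite{hardywright2008introduction}, sketching the mod-$p$ argument above for completeness.
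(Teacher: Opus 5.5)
Your proof is correct and follows the same route as the paper: factor $z^N-1$ into integer-coefficient cyclotomic polynomials, then use irreducibility of $\Phi_N$ to get $[\mathbb{Q}(\zeta_N):\mathbb{Q}]=\varphi(N)$. You are more explicit than the paper in two places. First, you state outright that the span equals $\mathbb{Q}(\zeta_N)$, which supplies the lower bound the paper leaves implicit. Second, you sketch the mod-$p$ proof of irreducibility, which the paper simply cites.
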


\noindent In words, if there exist $D$ complex numbers $c_1,\ldots, c_D$ and $ND$ integers $\sigma _{j,a}$ such that \begin{equation}
        \mathrm{e}^{2\pi \mathrm{i}j/N} = \sum_{a=1}^D \sigma_{j,a} c_a,
    \end{equation}
then we must have $D\ge \varphi(N)$.  

Proposition \ref{prop:lin-ind} is a well-established piece of mathematics.  Before we explain why it is true, we need to define the Euler totient function!  It counts the numbers smaller than $N$ which are relatively prime to $N$: \begin{equation}
    \varphi(N) := |\lbrace k\in \lbrace 1,2,\ldots, N-1\rbrace  : \gcd(k,N)=1\rbrace|. \label{eq:eulertotient}
\end{equation}

\begin{proof}[Proof of Proposition \ref{prop:lin-ind}]
    $\Omega_N$ are the roots of the polynomial 
    \begin{equation}
    P_N(z)
    = z^N-1
    = \prod_{d\mid N}\Phi_d(z),
    \end{equation}
    where the cyclotomic polynomial $\Phi_d(z)$ is defined as \begin{equation}
        \Phi_d(z) := \prod_{k : \gcd(k,d)=1} \left(z-\mathrm{e}^{2\pi \mathrm{i}k/d}\right).
    \end{equation}
    A non-trivial, but standard ~\cite{hardywright2008introduction}, fact is that each $\Phi_d(z)$ has integer coefficients.   We conclude that $\omega := \mathrm{e}^{2\pi \mathrm{i}/N}$ is a root of $\Phi_N(z)$, a polynomial with degree $\varphi(N)$ by construction. Moreover, $\Phi_N$ is irreducible over $\mathbb{Q}$ and $\deg\Phi_N=\varphi(N).$ It follows that$\left[\mathbb{Q}(\omega):\mathbb{Q}\right]=\varphi(N).$  In other words, $\omega^{\varphi(N)}$ is linearly dependent to $1,\ldots,\omega^{\varphi(N)-1}$ in $\mathbb{Z}$.   Repeatedly using this identity, we can express any $\omega^j$ for $j\ge \varphi(N)$ in terms of a linear combination of $1,\ldots,\omega^{\varphi(N)-1}$ with integer coefficients.
\end{proof}

The last sentence of the proof above also allows us to establish the following useful corollary via a direct proof by contradiction:

\begin{cor}\label{cor:cyclotomic}
    If $P(z)$ is a nonzero polynomial of order $p$ with integer coefficients, and $P(\mathrm{e}^{2\pi \mathrm{i}/N}) = 0$, then $p\ge \varphi(N)$.
\end{cor}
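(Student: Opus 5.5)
We argue by contradiction, using the minimal polynomial of $\omega := \mathrm{e}^{2\pi\mathrm{i}/N}$ over $\mathbb{Q}$. Suppose $P(z)=\sum_{k=0}^{p} b_k z^k$ has integer coefficients, is not identically zero, has degree $p<\varphi(N)$, and satisfies $P(\omega)=0$.

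First, we may divide $P$ by a power of $z$ so that $b_0\neq 0$. This only lowers the degree, and $\omega\neq 0$, so $\omega$ remains a root. By the proof of Proposition \ref{prop:lin-ind}, $\Phi_N$ is irreducible over $\mathbb{Q}$, has $\omega$ as a root, and has degree $\varphi(N)$. Hence $\Phi_N$ is (up to a rational scalar) the minimal polynomial of $\omega$ over $\mathbb{Q}$.

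Next, we view $P$ as an element of $\mathbb{Q}[z]$. The set of rational polynomials vanishing at $\omega$ is an ideal of the principal ideal domain $\mathbb{Q}[z]$. It contains the irreducible polynomial $\Phi_N$ and is a proper ideal, since it does not contain $1$. It is therefore exactly the ideal generated by $\Phi_N$. So $\Phi_N \mid P$ in $\mathbb{Q}[z]$. Because $P\neq 0$, this forces $\deg P\ge \deg\Phi_N=\varphi(N)$, contradicting $p<\varphi(N)$.

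The same conclusion can be phrased in the language of the last sentence of the proof of Proposition \ref{prop:lin-ind}. A relation $P(\omega)=0$ with $\deg P<\varphi(N)$ is a nontrivial $\mathbb{Q}$-linear dependence among $1,\omega,\ldots,\omega^{\varphi(N)-1}$. But $[\mathbb{Q}(\omega):\mathbb{Q}]=\varphi(N)$, and these $\varphi(N)$ powers span $\mathbb{Q}(\omega)$, so they form a basis and admit no such dependence.

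The only nontrivial input is the irreducibility of $\Phi_N$ over $\mathbb{Q}$, which we take from the proof of Proposition \ref{prop:lin-ind}. There is no real obstacle beyond that. The one point to watch is that we never need $P$ to be monic or primitive: working over $\mathbb{Q}$ instead of $\mathbb{Z}$ removes any Gauss-lemma subtleties, and integer coefficients are used only in the trivial inclusion $\mathbb{Z}[z]\subset\mathbb{Q}[z]$.
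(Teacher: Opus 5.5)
Your proof is correct and takes essentially the paper's route. The paper gets the corollary by contradiction from $[\mathbb{Q}(\omega):\mathbb{Q}]=\varphi(N)$ together with the spanning of $\mathbb{Q}(\omega)$ by $1,\omega,\ldots,\omega^{\varphi(N)-1}$, which is exactly your final alternative paragraph; your main ideal-theoretic argument (that $\Phi_N$ generates the ideal of rational polynomials vanishing at $\omega$, so $\Phi_N\mid P$) rests on the same irreducibility of $\Phi_N$, and the preliminary removal of powers of $z$ is harmless but unnecessary.
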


\subsection{Proof of Theorem \ref{thm:bound}}
\label{subsec:cyclotomic-fixed-accuracy}
We prove the result by contradiction.  Let $\omega := \mathrm{e}^{2\pi \mathrm{i}/N}$.  Suppose that we have achieved \eqref{eq:goal} using $M<\varphi(N)$ displacement gates.  As per the discussion at the beginning of the section, we  would find that \begin{equation}
    \left(\begin{array}{cc} 1 &\ 0 \end{array}\right) U \left[|0\rangle \otimes  \left(\begin{array}{c} 1 \\ 0 \end{array}\right)\right] = \sum_{(\sigma_1,\ldots,\sigma_M)\in \lbrace \pm 1 \rbrace } c_\sigma |\beta_\sigma\rangle,
\end{equation} where the complex numbers \begin{equation}
    \beta_\sigma := \sum_{j=1}^M \sigma_j \beta_j.
\end{equation}
Here $\beta_j$ are the displacement parameters that appeared in the circuit $U$. For a qubit-dependent displacement, $\sigma_j\in{\pm1}$, while for an unconditional displacement, $\sigma_j$ is fixed to $+1$ on every nonzero path; they need not be independent.   

We first show that if \eqref{eq:goal} is obeyed, then at least one $\beta_\sigma$ must be sufficiently close to each $\alpha \omega^j$ for $j=0,1,\ldots, N-1$.  After all, we need \begin{equation}
   \left|\langle \alpha \omega^j| \left[|\alpha_N\rangle - \sum_{\sigma}c_\sigma |\beta_\sigma\rangle \right]\right|^2 <  \epsilon < \frac{1}{2N} .
\end{equation}
Using \eqref{eq:coherent-overlap}, we see that \begin{equation}
   | \langle \alpha \omega^j|\alpha_N\rangle | > \frac{1}{\sqrt{\mathcal{Z}_{N,\alpha}}} - \frac{1}{\sqrt{\mathcal{Z}_{N,\alpha}}}\sum_{j=1}^{N-1} \exp\left(-\alpha^2 \left(1-\cos \frac{2\pi j}{N}\right)\right). \label{eq:alpha2-1}
\end{equation}
The normalization constant satisfies
\begin{equation}
N\left(1-\sum_{m=1}^{N-1}
\exp\left[
-|\alpha|^2
\left(
1-\cos\frac{2\pi m}{N}
\right)
\right]\right)
\leq
\mathcal Z_{N,\alpha}
\leq
N\left(1+\sum_{m=1}^{N-1}
\exp\left[
-|\alpha|^2
\left(
1-\cos\frac{2\pi m}{N}
\right)
\right]\right).
\label{eq:normalization-constant-two-sided-bound}
\end{equation}
Since $\sum_{m=1}^{N-1}
\exp\left[
-|\alpha|^2
\left(
1-\cos\frac{2\pi m}{N}
\right)
\right]$ tends to zero as $|\alpha|$ tends to infinity,
we may increase the sufficiently-large-$|\alpha|$ threshold so that
\begin{equation}
0.99N
\leq
\mathcal Z_{N,\alpha}
\leq
1.01N,
\label{eq:normalization-constant-numerical-bound}
\end{equation} 
Since each $|c_\sigma|\le 1$ and the number of $\beta_\sigma$ is at most $2^M$, we have the extremely crude bound: \begin{equation}
   \left| \sum_\sigma c_\sigma \langle \alpha \omega^j| \beta_\sigma\rangle \right| < 2^M \max_\sigma \exp\left(-\frac{1}{2}\left|\alpha \omega^j - \beta_\sigma \right|^2\right).
\end{equation}
Suppose every $\beta_\sigma$ satisfies 
\begin{equation}
\left|\alpha \omega^j-\beta_\sigma\right|^2 \geq 20 \log N+2 M \log 2 . \label{eq:minsigma}
\end{equation} 
Therefore, \begin{equation}
    \left| \sum_\sigma c_\sigma \langle \alpha \omega^j| \beta_\sigma\rangle \right| < \frac{1}{N^{10}}. \label{eq:alpha2-2}
\end{equation}
So long as $\alpha>10N^2$, for example, we further have \begin{equation}
    \sum_{j=1}^{N-1}\exp\left(-|\alpha|^2 \left(1-\cos \frac{2\pi j}{N}\right)\right) < 2\sum_{j=1}^{\infty}\exp\left(-\frac{|\alpha|^2}{10} \left(\frac{2\pi j}{N}\right)^2 \right) < \frac{2}{\mathrm{e}^{N^2}-1}. \label{eq:alpha2-3}
\end{equation}
We can readily see that \eqref{eq:alpha2-1}, \eqref{eq:alpha2-2} and \eqref{eq:alpha2-3} are inconsistent for every $N\ge 2$, since $2/(\mathrm{e}^4-1) < 0.04$.  So \eqref{eq:minsigma} must be false and we need to find some $\beta_\sigma$ close to $\alpha \omega^j$ for every $j$: \begin{equation}
\min_\sigma \left|\alpha \omega^j-\beta_\sigma\right|^2 < 20 \log N+2 M \log 2 =: \eta^2 . \label{eq:minsigma2}
\end{equation}

Since \eqref{eq:minsigma2} must hold for every $j$, let us denote with $\sigma_{(j)}$ one such special choice of $\sigma$ for each $j\in \lbrace 0,1,\ldots, N-1\rbrace$, and let \begin{equation}
    \eta_j := \alpha \omega^j - \beta_{\sigma_{(j)}}.
\label{eq:defination-eta}
\end{equation}
From \eqref{eq:minsigma2}, $|\eta_j| \le \eta$.  Define the $M\times N$ matrix $L$ whose columns are $\sigma_{(j)}$ vectors: \begin{equation}
    L := \left(\begin{array}{cccc} \sigma_{(0)} &\ \sigma_{(1)} &\ \cdots &\ \sigma_{(N-1)} \end{array}\right).
\end{equation}
Define the $\mathbb{Q}$-linear map $T:
\mathbb{Q}^{N}
\longrightarrow
\mathbb{Q}(\omega)$ by
\begin{equation}
T(v)
:=
\sum_{j=0}^{N-1}
v_j\omega^j.
\label{eq:cyclotomic-linear-map}
\end{equation}
By Proposition~\ref{prop:lin-ind} we get
\begin{equation}
\dim_{\mathbb{Q}}\ker T=N-\varphi(N),
\qquad
\dim_{\mathbb{Q}}\ker L\geq N-M.
\label{eq:kernel-dimension}
\end{equation}
Because $M<\varphi(N)$, we have
\begin{equation}
\dim_{\mathbb{Q}}
\ker L
\geq
N-M
>
N-\varphi(N)
=
\dim_{\mathbb{Q}}
\ker T.
\label{eq:kernel-dimension-comparison}
\end{equation}
Hence, for every possible sign matrix $L$, there exists a vector
$v_L\in\mathbb{Q}^{N}$ such that
\begin{equation}
Lv_L
=
0,
\qquad
T(v_L)
\neq
0.
\label{eq:choice-vL}
\end{equation}
We normalize this vector so that
\begin{equation}
\left\|
v_L
\right\|_1
=
\sum_{j=0}^{N-1}
\left|
(v_L)_j
\right|
=
1.
\label{eq:vL-normalization}
\end{equation}

There are only finitely many sign matrices with fewer than
$\varphi(N)$ rows and $N$ columns. We may therefore define
\begin{equation}
u_N
:=
\min_L
\left|
T(v_L)
\right|
>
0,
\label{eq:ustar}
\end{equation}
where the minimum is taken over all possible sign matrices $L$ with
$M<\varphi(N)$. If \eqref{eq:defination-eta} holds,
\begin{equation}
    \alpha T(v_L)
    =
    \sum_{j=0}^{N-1}(v_L)_j\beta_{\sigma_{(j)}}+\sum_{j=0}^{N-1}(v_L)_j\eta_j =
    \beta^{\mathsf{T}}Lv_L  +\sum_{j=0}^{N-1}(v_L)_j\eta_j.
\end{equation}
Since $Lv_L=0$, the first sum vanishes and using the triangle inequality, we get
\begin{equation}
    \alpha T(v_L)=\sum_{j=0}^{N-1}(v_L)_j\eta_j\leq\eta\sum_{j=0}^{N-1}|(v_L)_j|
\end{equation}
Combining with \eqref{eq:ustar}, we could get
\begin{equation}
    \alpha u_N\leq\eta.
\end{equation}
Since $M<\varphi(N)$, we may replace $\eta$ by the uniform upper bound $\eta_N^2=20\log N+2\bigl(\varphi(N)-1\bigr)\log2$, therefore \begin{equation}
    \alpha > \max\left(10N^2, \frac{ \eta_N}{u_N}\right) \label{eq:alphasufficientlarge}
\end{equation} leads to the desired contradiction.  It cannot be possible to make an approximate $N$-fold cat state whenever \eqref{eq:alphasufficientlarge} is obeyed.

\subsection{Bounding the runtime}
We end this section with another very simple result, bounding the runtime needed to (approximately) create $|\alpha_N\rangle$ with the phase space instruction set.

\begin{prop}\label{prop:runtimebound}
For any circuit using the phase space instruction set that prepares $|\alpha_N\rangle$ with fidelity $>\frac{2}{3}$ from a single coherent state $|\beta\rangle$, for any $\beta \in \mathbb{C}$, \begin{equation} \label{eq:trun-alpha}
t_{\mathrm{run}}
\geq
\mathrm{\Omega}(|\alpha|)
\end{equation} for suitably large $\alpha$.  The same conclusions hold for any cat state with arbitrary phases: \begin{equation}
    |\alpha_N^{\text{phases}}\rangle := \frac{1}{\sqrt{\mathcal{Z}}} \sum_{j=0}^{N-1} \mathrm{e}^{\mathrm{i}\gamma_j} \left| \alpha \mathrm{e}^{2\pi \mathrm{i}j/N}\right\rangle . \label{eq:cat-arbitrary-phases}
\end{equation}
\end{prop}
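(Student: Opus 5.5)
The plan is to track where coherent-state peaks can sit after each gate and compare that with the geometry of the target. Since the initial state is a single coherent state $|\beta\rangle$, after gates with displacement parameters $\beta_1,\ldots,\beta_D$ and arbitrary rotations the output is, exactly as in the proof of Theorem~\ref{thm:bound}, a superposition $\sum_\sigma c_\sigma |\beta+\beta_\sigma\rangle\otimes\chi_\sigma$ with $\beta_\sigma=\sum_j\sigma_j\beta_j$. Every peak therefore lies in the disk
\begin{equation}
|\beta_\sigma|\le R:=\sum_j|\beta_j|\le t_{\mathrm{run}}
\end{equation}
about $\beta$. Qubit rotations do not move peaks; they only reshuffle the spinor components $\chi_\sigma$, and this reshuffling is harmless for the argument.

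The key geometric fact is that the $N$ target peaks $\alpha\omega^j$ lie on a circle of radius $|\alpha|$ centred at the origin. Any disk of radius $R$ around $\beta$ must therefore miss some vertex, or at least half of the vertex set, by a distance of order $|\alpha|-R$. Concretely, for $N\ge2$ there are two antipodal-ish vertices, or for $N=2$ exactly antipodal ones, separated by at least $c|\alpha|$ with $c\ge\sqrt3$ for $N\ge3$. For any $\beta$, at least one of the groups of vertices lies at distance $\ge c|\alpha|/2$ from $\beta$. The intermediate goal is: if $R<c|\alpha|/2-\Lambda$ with $\Lambda\sim\sqrt{\log N+\log(1/\delta)}$, then a fixed set $J$ of vertices carrying total weight at least about $1/N$, up to $1/2$, is far from all peaks.

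Next I would convert this into a fidelity bound. The plan is to split $|\alpha_N^{\text{phases}}\rangle$ into the part near vertices within reach and the part far away. Because the vertex coherent states are nearly orthogonal at large $\alpha$ (the estimates \eqref{eq:alpha2-3} and \eqref{eq:normalization-constant-numerical-bound} hold for arbitrary phases $\gamma_j$), the overlap of the output with the far component is controlled by Gaussian tails. A cleaner route than counting the $2^D$ peaks is to use the displacement-conjugated number operator. Take the projector $\Pi$ onto the region $|z-\beta|\le R+\Lambda$, defined via a Glauber--Husimi window or via the spectral projector of $D(\beta)a^\dagger aD(\beta)^\dagger$ below $(R+\Lambda)^2$. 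I would then show the output state lies mostly in the range of $\Pi$: its $\beta$-centred mean photon number grows at most like $R^2$, by the triangle inequality on $\|(a-\beta)\psi\|$, which increases by at most $|\beta_j|$ per displacement and is unchanged by rotations. Markov/Chebyshev then gives that most weight sits below $(R+\Lambda)^2$. On the other side, the target places weight $\gtrsim 1/N$ far outside for any choice of phases. This yields fidelity $\le 1-\Omega(1/N)$, which is weaker than the required $2/3$ unless I balance it. To recover the $2/3$ threshold I would instead choose the split so that at least half the vertices are out of reach. This requires $R<(1-\kappa)|\alpha|$, since a disk of radius $R$ centred anywhere covers at most roughly half the circle of radius $|\alpha|$ when $R\lesssim|\alpha|$. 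The remaining weight is then at most about $1/2+o(1)<2/3$.

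The main obstacle is this last step: getting a clean covering statement, namely that for any $\beta$ a disk of radius $R=\kappa|\alpha|$ with small constant $\kappa$ contains at most a fraction $\le 1/2$ of the $N$ points, and ideally at most one point when $N$ is fixed and $\alpha$ is large. I would take $\kappa$ small enough that the disk meets at most one vertex's neighbourhood, giving fidelity $\lesssim 1/N+o(1)\le 1/2<2/3$ for $N\ge2$. This concludes $t_{\mathrm{run}}\ge R\ge\kappa|\alpha|=\Omega(|\alpha|)$.
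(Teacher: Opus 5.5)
Your geometric core is the same as the paper's. All branch centres lie within distance $R\le t_{\mathrm{run}}$ of $\beta$; the paper phrases this as $|\gamma_\sigma-\gamma_\tau|\le 2L$. Your invariant $\lVert (a-\beta)\psi\rVert\le R$ is correct and depth-independent, since $D(zX_n)^\dagger a\,D(zX_n)=a+zX_n$ and rotations commute with $a$.

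\textbf{The gap is the concentration step.} Write $n_\beta:=D(\beta)a^\dagger a D(\beta)^\dagger$. From $\langle n_\beta\rangle\le R^2$, Markov only gives $\Pr[n_\beta\ge K]\le R^2/K$, which is close to $1$ at $K=(R+\Lambda)^2$ when $\Lambda\ll R$. You also have no variance bound that would let Chebyshev help. A first-moment bound alone cannot give what you need: $\sqrt{0.9}\,|\beta\rangle+\sqrt{0.1}\,|\beta+3R\rangle$ has $\lVert(a-\beta)\psi\rVert^2\approx 0.9R^2$, yet ten percent of its weight sits at distance $3R$. So your claim that the output lies mostly in the range of $\Pi$ is not established. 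Both your \emph{at most half} and your \emph{at most one vertex} fidelity bounds rest on it.

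There are two ways to repair it.

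(i) Apply Markov at the scale $|\alpha|$ rather than at $R+\Lambda$. Take $R=\kappa|\alpha|$ and let $\Pi$ be the spectral projector of $n_\beta$ below $s^2|\alpha|^2$, with $s<\sin(\pi/N)$ so that a slightly larger disk still meets at most one vertex. Then $\sqrt{\mathcal F}\le\lVert\Pi\,|\alpha_N^{\text{phases}}\rangle\rVert+\lVert(1-\Pi)\psi\rVert\le N^{-1/2}+\kappa/s+o(1)$. This is below $\sqrt{2/3}$ once $\kappa/s\lesssim 0.1$. The leakage is a small constant rather than $o(1)$, but that suffices and gives $R\gtrsim 0.1\,s|\alpha|$.

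(ii) Run your induction on all powers. Since $(a-\beta+zX_n)^k=\sum_i\binom{k}{i}(a-\beta)^i(zX_n)^{k-i}$, you get $\lVert(a-\beta)^k\psi\rVert\le R^k$ for every $k$. So every factorial moment of $n_\beta$ is dominated by that of a Poisson variable of mean $R^2$. This gives genuine Gaussian concentration at radius $R+\Lambda$ and makes your $o(1)$ claims true.

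\textbf{Two further corrections.} First, the covering statement you call the main obstacle is the easy part. The vertex spacing is $2|\alpha|\sin(\pi/N)$, so any disk of smaller diameter meets at most one vertex. Second, your intermediate \emph{roughly half} version is false for $N=3$. A disk of radius just above $\sqrt3|\alpha|/2<|\alpha|$ contains two of the three vertices, which lands exactly on the $2/3$ threshold. The $R<(1-\kappa)|\alpha|$ route should therefore be dropped.

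\textbf{Comparison with the paper.} The paper avoids concentration altogether. It picks two target peaks with angular separation in $[\pi/3,5\pi/3]$, hence at distance $\ge|\alpha|$. It then asserts that fidelity above $2/3$ forces generated centres near both, and concludes $2L=\Omega(|\alpha|)$ with an $N$-independent constant. That assertion tacitly requires controlling $2^M$ branches with $M$ unbounded, which Gaussian tails alone do not do. Your displaced-moment invariant, once repaired as above, is a more robust justification of exactly that step. The cost in version (i) is a constant proportional to $\sin(\pi/N)$, which is harmless for fixed $N$.
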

\begin{proof}
 The only gate in the phase space instruction set that can move apart two coherent states is the spin-dependent displacement $D(z X_n)$. If we apply $M$ of these spin-dependent displacement gates for a total runtime of $t_{\mathrm{run}}$, then the most general action of our unitary is \begin{equation}
        U \left(\begin{array}{c} |\beta\rangle \\ 0\end{array}\right) = \sum_{\sigma_1,\ldots, \sigma_M = \pm 1} \left(\begin{array}{c} c^\uparrow_{\sigma_1\cdots \sigma_M} \\ c^\downarrow_{\sigma_1\cdots \sigma_M}\end{array}\right) \left| \beta + \sigma_1 z_1 + \cdots + \sigma_M z_M \right\rangle 
    \end{equation}
    where \begin{equation}
        \sum_{j=1}^M |z_j| \le t_{\text{run}}. \label{eq:zjtrun}
    \end{equation}
    Let
    \begin{equation}
    L
    :=
    \sum_{j=1}^M|z_j|.
    \end{equation}
    Consider two target peaks with centers
$\alpha$ and $\alpha\mathrm{e}^{\mathrm{i}\theta}$, where $\theta\in[\pi/3,5\pi/3]$. Since $\cos\theta\leq 1/2$ on this interval, their separation satisfies
\begin{equation}
\left|
\alpha\mathrm{e}^{\mathrm{i}\theta}-\alpha
\right|^2=2|\alpha|^2\left(1-\cos\theta\right)
\nonumber\geq
|\alpha|^2.
\end{equation}
Thus, these two target coherent-state peaks are separated by at least
$|\alpha|$. For fixed
$N$, the target coherent-state peaks become mutually orthogonal as
$|\alpha|\rightarrow\infty$. A state having non-negligible
contributions near at most half of the target peaks can therefore have
fidelity at most
\begin{equation}
\mathcal{F}
\leq
\frac{1}{2}
+
\mathcal{O}_N\left(
\mathrm{e}^{-c_N|\alpha|^2}
\right)
<
\frac{2}{3}
\end{equation}
for sufficiently large $|\alpha|$. Hence,  preparing the target
cat state with fidelity greater than $2/3$ requires the output to
reproduce two such separated peaks. Let the corresponding generated
coherent-state centers be
\begin{equation}
\gamma_{\sigma}
=
\beta+\sum_{j=1}^{M}\sigma_j z_j,
\qquad
\gamma_{\tau}
=
\beta+\sum_{j=1}^{M}\tau_j z_j.
\end{equation}
Taking their difference eliminates the initial center $\beta$:
\begin{equation}
\gamma_{\sigma}-\gamma_{\tau}
=
\sum_{j=1}^{M}
\left(\sigma_j-\tau_j\right)z_j.
\end{equation}
By the triangle inequality,
\begin{equation}
\left|
\gamma_{\sigma}-\gamma_{\tau}
\right|
\leq
\sum_{j=1}^{M}
\left|
\sigma_j-\tau_j
\right|
|z_j|
\nonumber\leq
2\sum_{j=1}^{M}|z_j|
\nonumber
=
2L.
\end{equation}
Since the two target peaks are separated by a distance of order
$|\alpha|$, reproducing both peaks requires
\begin{equation}
2L
=
\mathrm{\Omega}(|\alpha|).
\end{equation}
Combining this with Eq.~\eqref{eq:zjtrun}, we obtain \eqref{eq:trun-alpha}.
\end{proof}

\section{Protocol for preparing $N$-fold cats with prime $N$}
\label{sec:4}
In this section, we prove the existence of a saturating protocol that can achieve \eqref{eq:goal}, at least for sufficiently large $\alpha$, in the case where $N$ is prime.   We remind the reader that when $N$ is prime, $\varphi(N) = N-1$.  Therefore, our goal will be to establish the following result: \begin{thm}\label{thm:protocol}
    For any prime $N$, there exists a number $\alpha_{\mathrm{up}}^*(N,\epsilon)$ such that if $|\alpha|>\alpha_{\mathrm{up}}^*(N,\epsilon)$, there exists a protocol using a circuit of depth $4N$ consisting of gates in $\mathcal{S}$, that prepares $|\alpha_N\rangle$ with squared state-vector $1-\epsilon$, i.e. achieves \eqref{eq:goal}.   Moreover, for fixed $N$ and $\epsilon$,  this protocol runs in time \begin{equation} \label{eq:trun-good}
        t_{\mathrm{run}} = \mathrm{\Theta}(\alpha).
    \end{equation}
    The protocol can be generalized to prepare cat states with arbitrary phases, as in \eqref{eq:cat-arbitrary-phases}
\end{thm}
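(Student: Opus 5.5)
The approach is to build the cat state by walking a single "active" branch around the edges of the regular $N$-gon, while the coherent-state peaks already deposited at earlier vertices stay put. Concretely, write $\omega=\mathrm{e}^{2\pi\mathrm{i}/N}$. The edge increments are $\delta_k:=\alpha(\omega^{k+1}-\omega^k)$, and they close up because $\sum_{k=0}^{N-1}\delta_k=0$. Composing an unconditional displacement with a conditional one, $D(z)D(zZ)$, gives (up to a phase) a displacement $D(2z)$ acting only on the spin-up branch. Each elementary round of the protocol therefore consists of four gates: a spin rotation $\mathrm{e}^{\mathrm{i}\theta_k X_n}$, a conditional displacement, an unconditional displacement, and a corrective rotation. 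This accounts for the depth $4N$.

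The intended round $k$ does two things. It peels off an amplitude $\approx 1/\sqrt{N}$, with the prescribed phase $\mathrm{e}^{\mathrm{i}\gamma_k}$ (which gives the arbitrary-phase generalization), and leaves it in the spin-up state at vertex $\alpha\omega^k$. It then moves the remaining active amplitude along $\delta_k$ to the next vertex. The rotation angles are fixed by the usual sequential-splitting recursion $\sin^2\theta_k=1/(N-k)$. The displacement magnitudes are $|\delta_k|=\Theta(\alpha/N)$, so the total runtime is $\sum_k(|\delta_k|+|\theta_k|)=\Theta(\alpha)$ for fixed $N$. Combined with Proposition~\ref{prop:runtimebound}, this gives \eqref{eq:trun-good}.

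The rotations act on every peak, not only the active one. So the true state after $M\le 4N$ gates is a sum over at most $2^M$ paths, located at the points $\beta_\sigma=\sum_j\sigma_j z_j$ with $z_j\in\{\alpha\omega^k(\omega-1)/2\}$. The key step is therefore a separation lemma: every $\beta_\sigma$ either coincides exactly with one of the target vertices $\alpha\omega^j$ or lies at distance $\ge c_N|\alpha|$ from all of them. Here I would use primality. For prime $N$, the set $\{\omega,\dots,\omega^{N-1}\}$ is a $\mathbb{Q}$-basis of $\mathbb{Q}(\omega)$ by Proposition~\ref{prop:lin-ind}, and the only integer relation among the powers of $\omega$ is $\sum_j\omega^j=0$ (via Corollary~\ref{cor:cyclotomic}). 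Hence $\beta_\sigma-\alpha\omega^j=\alpha\,T(v)$ for an integer vector $v$ with $\|v\|_1\le 2N+2$. Either $T(v)=0$, which forces an exact coincidence, or $|T(v)|\ge u'_N>0$, by finiteness of the set of such $v$, just as in the definition of $u_N$ in \eqref{eq:ustar}.

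Given this lemma, at large $|\alpha|$ the peaks at distinct lattice points are orthogonal up to $\mathrm{e}^{-c|\alpha|^2}$ errors. Using \eqref{eq:prop21}, the dynamics then reduces exactly to a finite linear-algebra problem on the $2N$-dimensional space spanned by (vertex)$\otimes$(spin), together with a leakage subspace of far-away points. The remaining task, which I expect to be the main obstacle, is to choose the rotation axes and angles so that the amplitude leaking to non-vertex points and to spin-down at the end is exactly zero in this reduced model. To do this, I would order each round so that the corrective rotation undoes, on the already-settled peaks, the rotation applied before the move, exploiting that settled peaks are never displaced. I would use the spare unconditional displacement to recentre so that the spin-down components of settled peaks are returned to their vertices rather than shifted off-lattice. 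If exact cancellation fails, I would instead show that the reduced $2N$-level problem is controllable with $4N$ parameters and solve for the angles.

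Finally, I would sum the exponentially small overlap errors over at most $2^{4N}$ paths. Together with \eqref{eq:normalization-constant-numerical-bound}, this bounds the infidelity by $C_N\mathrm{e}^{-c_N|\alpha|^2}$. Choosing $\alpha^*_{\mathrm{up}}(N,\epsilon)$ so that this is below $\epsilon$ establishes \eqref{eq:goal}.
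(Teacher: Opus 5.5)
There is a genuine gap, and it sits exactly at the step you flagged as the main obstacle. Nothing in your rounds acts differently on the reservoir and on the settled peaks, and the four-gate round you describe provably stalls after one peel.

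Since $D(z)D(zX_n)=D(2z)\otimes P_++I\otimes P_-$, with $P_\pm$ the projectors onto $X_n=\pm1$, your round $R_2\,D(z)D(zX_n)\,R_1$ equals $D(2z)\otimes R_2P_+R_1+I\otimes R_2P_-R_1$. Take a settled peak at $\alpha\omega^j$, $j\neq k$, with common spinor $u$. It must not acquire a $D(2z)$ piece: for prime $N\ge3$ the point $\alpha\omega^j+\delta_k$ is $\Theta(\alpha)$ away from every vertex, and none of your later edge moves returns it. The spare unconditional displacement cannot help, because it shifts every peak equally. Hence $P_+R_1u=0$, so $R_1u^\perp\in\operatorname{ran}P_+$.

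It follows that the part of the reservoir left behind is always $\propto R_2R_1u$, the new settled spinor. The part moved on is always $\propto R_2R_1u^\perp$, which is orthogonal to it. So after the first split the reservoir spinor is exactly orthogonal to the settled one, and in every later round it is transported whole: nothing more is peeled.

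For the same reason, your idea that the corrective rotation undoes the first one on settled peaks fails. Once $R_1$ tilts them, the tilted part is carried $\Theta(\alpha)$ away and cannot be recombined. Your fallback, controllability of a reduced $2N$-level problem with $4N$ parameters, is therefore false for this architecture.

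More generally, any word in large displacements and global rotations equals $\sum_\Delta D(\Delta)\otimes M_\Delta$, with $M_\Delta$ independent of position. Peaks are distinguished only by their spinors and by phases $\mathrm{e}^{\mathrm{i}\,\mathrm{Im}(\Delta\bar p)}$ of size $\alpha^2$, which your plan neither uses nor controls.

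The missing idea is to use a \emph{small} conditional displacement as a position-dependent spin rotation. The paper's step is $V=D(\tfrac{a_*}{2}Z)\,\mathrm{e}^{\mathrm{i}\theta Y}D(bY)D(-\tfrac{a_*}{2}Z)$:
\begin{enumerate}
\item The first conditional displacement correlates position with spin: settled peaks go to $a_j-a_*/2$ and the reservoir to $a_*/2$.
\item The kick $D(bY)$, with $|b|=\pi t_k/\alpha$, moves each peak only by $O(1/\alpha)$. It rotates the spinor at location $p$ about $Y$ by an $O(1)$ angle fixed by $\mathrm{Im}(b\bar p)$ (Lemma~\ref{lem:trig-form}).
\item One chooses $b$ so that $2\theta_*-2\theta_j\approx\theta_\kappa \pmod{2\pi}$ for every settled $j$. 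This aligns all settled spinors along one vector $v_1$, while the reservoir spinor has overlap $\kappa$ with $v_1$.
\item A single global rotation sends $v_1$ to spin up, and $D(\tfrac{a_*}{2}Z)$ restores the positions.
\end{enumerate}

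The simultaneous alignment is Kronecker's theorem, and this is where primality is genuinely needed. The coefficients $\beta_j(\phi)$ are $\mathbb{Q}$-linearly independent because the associated integer polynomial has degree at most $N-2<N-1=\deg\Phi_N$. Your separation lemma, by contrast, holds for every $N$ simply because the point set is finite.

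The resulting error is $O\bigl(\sqrt N\,T_N(\varepsilon_{\mathrm{ph}})/\alpha+N^{3/2}\varepsilon_{\mathrm{ph}}\bigr)$, polynomial in $1/\alpha$ rather than $\mathrm{e}^{-c_N|\alpha|^2}$. Your runtime bookkeeping, with large moves tracing the polygon perimeter, does match the paper's merging of adjacent $D(\pm a_kZ/2)$ gates.
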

\noindent The proof of this result is constructive, but the description of the protocol is somewhat involved.  Crucially, for prime $N$, this protocol is asymptotically optimal.  It saturates both the circuit depth bound from Theorem \ref{thm:bound} (as $\varphi(N)=N-1$ when $N$ is prime), and the runtime bound from Proposition \ref{prop:runtimebound}.  We will present a sketch of how the protocol works in the next subsection.

The optimal scaling of $t_{\text{run}}$ is an important feature of our construction.
Although the cyclotomic obstruction forces many sequential
displacement instructions, the protocol avoids the naive
multiplicative cost that would result from paying the full
phase-space radius at every step. For prime $N$, the protocol matches the established linear depth scaling while remaining economical in its total phase-space motion. It is therefore both asymptotically depth-efficient and geometrically efficient within the control model considered here. 
\begin{rmk}[Order of limits]
\label{rmk:order-of-limits}
Both principal theorems are pointwise in $N$: for every fixed $N$
and fixed target error, the result holds once $|\alpha|$ exceeds an
$N$-dependent threshold. The present proofs do not provide uniform
bounds on these thresholds as $N$ increases. Thus, the comparison of
the lower and upper $N$-scalings is understood along sequences
satisfying $N\rightarrow \infty$ and 
\begin{equation}
|\alpha(N)|
\geq
\max\left\{
\alpha_{\mathrm{low}}^{*}(N,\epsilon),
\alpha_{\mathrm{up}}^{*}(N,\epsilon)
\right\}.
\label{eq:joint-N-alpha-regime}
\end{equation}
\end{rmk}

\subsection{Overview of the protocol}

We first specify the ideal operation we would like to implement. Let $\{a_{j}\}_{j=1}^{n}\subset\mathbb{C}$ be distinct coherent-state parameters, and let $a_{*}\in\mathbb{C}$ be a distinguished point in phase space. We consider input states of the form
\begin{equation}
\ket{\Psi_{\mathrm{in}}} =
\begin{pmatrix}
\sum_{j}c_{j}\lvert a_{j}\rangle \\
c_{*}\lvert 0\rangle
\end{pmatrix}. \label{eq:Psiin}
\end{equation}
where the upper component collects a superposition of already-prepared coherent states and the lower component stores amplitude in the vacuum $\lvert 0\rangle$. 
\begin{defn}[Ideal $\mathcal{V}$ gate] \label{def:idealV}
For a fixed real parameter $\kappa\in(0,1]$, define the \emph{ideal} linear map $\mathcal{V}_{\mathrm{ideal}}$ on such inputs by
\begin{equation}
\label{eq:ideal-action}
\mathcal{V}_{\mathrm{ideal}}
\begin{pmatrix}
\sum_{j}c_{j}\lvert a_{j}\rangle \\
c_{*}\lvert 0\rangle
\end{pmatrix}
:=
\begin{pmatrix}
\sum_{j}c_{j}\lvert a_{j}\rangle + \kappa c_{*}\lvert a_{*}\rangle \\
\sqrt{1-\kappa^{2}}\,c_{*}\lvert 0\rangle
\end{pmatrix}.
\end{equation}
\end{defn}

\noindent In words, $\mathcal{V}_{\mathrm{ideal}}$ takes a fraction $\kappa$ of the amplitude $c_{*}$ initially stored in the vacuum, converts it to the new coherent state $\lvert a_{*}\rangle$ in the upper component, and leaves the remaining amplitude $\sqrt{1-\kappa^{2}}\,c_{*}$ in the vacuum. This is exactly the primitive we need in order to ``grow'' an $N$-fold cat state by successively adding coherent components, as we depict in Figure \ref{fig:protocol}. The initialization of the protocol and the explicit choice of the
amplitude-transfer parameters $\kappa_k$ are given in Appendix~\ref{app:initialization-schedule}.

\begin{figure}[t]
    \centering
    \includegraphics[width=1\textwidth]{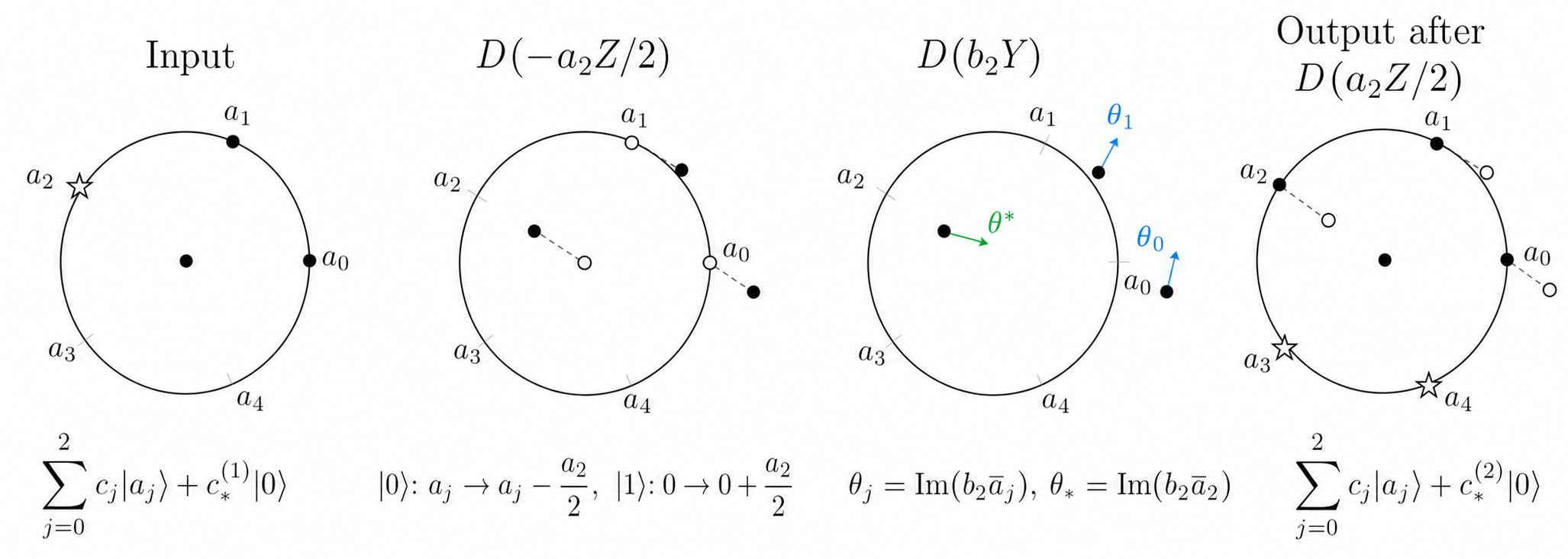}
    \caption{
    One step in the \(N=5\) cat-state protocol, illustrated in phase space. 
    From left to right: 
    (i) the conditional displacement \(D(-a_{2}Z/2)\), which shifts the already-prepared peaks on the \(\ket{0}\) branch by \(-a_{2}/2\) and the lower-branch vacuum component by \(+a_{2}/2\); 
    (ii) the Y-controlled displacement \(D(b_{2}Y)\), which generates the phase responses \(\theta_{0}\), \(\theta_{1}\), and \(\theta_{*}\); 
    (iii) the qubit rotation $\mathrm{e}^{\mathrm{i}\theta_2 Y}$, which rotates the common ancilla direction to the computational basis; and 
    (iv) the final conditional displacement \(D(a_{2}Z/2)\), which returns the state to the original circle while adding the new coherent component \(a_{2}\). 
    Filled dots denote already occupied coherent components, stars denote new or future target peaks, and crosses mark the ideal circle locations.  
    }
    \label{fig:protocol}
\end{figure}

We now define the concrete gate $V$ that we can implement physically, and that is designed to approximate $\mathcal{V}_{\mathrm{ideal}}$ on the above subspace. The gate consists of three conditional displacements on the oscillator, controlled by the qubit, together with an intermediate single-qubit rotation.

\begin{defn}[Actual $V$ gate]
\label{def:V-operator}
Fix a complex number $a_{*}$ and parameters $\lambda>0$, $\kappa,\phi\in\mathbb{R}$. Define
\begin{equation}
\label{eq:V-operator}
V(a_{*},\lambda \mathrm{e}^{\mathrm{i}\phi},\kappa)
:= D\!\left(\frac{a_{*}}{2}Z\right)\,
   \mathrm{e}^{\mathrm{i}\theta Y}\,
   D\!\left(\lambda \mathrm{e}^{\mathrm{i}\phi} Y\right)\,
   D\!\left(-\frac{a_{*}}{2}Z\right).
\end{equation}
 The dependence on $\kappa$ enters through the choice of the parameter $\lambda \mathrm{e}^{\mathrm{i}\phi}$, which will be chosen so that $V(a_{*},\lambda \mathrm{e}^{\mathrm{i}\phi},\kappa)$ approximates the ideal action \eqref{eq:ideal-action}.
\end{defn}

\noindent Our goal is to choose specific values of the parameters
$a_{*}$, $\lambda$, $\phi$, and $\theta$ such that
$V(a_{*},\lambda \mathrm{e}^{\mathrm{i}\phi},\kappa)$ approximates the ideal transformation
$\mathcal{V}_{\mathrm{ideal}}$ to accuracy $h$. To achieve this, it is not enough to tune only the gate parameters: the input state $\ket{\Psi_{\mathrm{in}}}$ must also be restricted to take a special form \eqref{eq:Psiin} with particular choices of $a_j$.  In the preparation of the $N$-fold cat state, these $a_j$ will correspond to a subset of the desired pieces of the final cat state; moreover, the values of $a_j$ must be chosen very carefully to enable the depth-4 circuit $V$ to actually grow a new leg of the cat.  Constructing such $a_j$ is a non-trivial part of the proof of Theorem \ref{thm:protocol}.  We will show how to choose $a_j$ and $V$ such that
\begin{equation}
\label{eq:V-approx-ideal}
\big\lVert \bigl(\mathcal{V}_{\mathrm{ideal}} - V(a_{*},\lambda \mathrm{e}^{\mathrm{i}\phi},\kappa)\bigr)\,\ket{\Psi_{\mathrm{in}}}\big\rVert
\leq h
\end{equation}
is achievable.  If the loss in fidelity after each $\mathcal{V}_{\text{ideal}}$ is $h$, and we must apply $N$ different $\mathcal{V}_{\text{ideal}}$ gates, then we can establish Theorem \ref{thm:protocol} by choosing $h < N^{-1}\epsilon$. 

The key thing needed to prove Theorem \ref{thm:protocol} is, therefore, the prescription for how to choose the physical gates $V$ to ensure \eqref{eq:V-approx-ideal}.  We will again rely on ideas from mathematics --  Diophantine approximability and ergodic theory -- to verify these gates exist.  Similar to the proof of Theorem \ref{thm:bound}, a key idea will be that the independence of coherent state positions in the complex plane enables us to ``independently" control the spin orientation at each coherent state peak at suitable values of $\alpha$.   The proof of the theorem is admittedly rather tedious, and we present it in Appendix \ref{app:proof}.

Finally, we remark that although the actual $V$ gate in \eqref{eq:V-operator} involves two large displacements, we find that these displacements can be combined between adjacent steps to form a much smaller displacement gate, which gives us the asymptotically optimal runtime \eqref{eq:trun-good}.

\subsection{Protocol performance at small $\alpha$}
\label{sec:numerical-results}
Although Theorem \ref{thm:protocol} is only a formal guarantee of a high-fidelity state preparation scheme at very large $\alpha$, it is worthwhile asking how well our formal protocol works at pretty small values $\alpha \sim 10$, which is well below the threshold needed by \eqref{eq:alphasufficientlarge} (we expect a similar concern holds for the minimum required $\alpha$ for Theorem \ref{thm:protocol}).   This is because in experiments it is difficult to prepare cat states with coherent state parameter $\alpha$ too much larger than this.  For simplicity in what follows, we asume that the parameter $\alpha$ is real and positive. We will largely focus on preparing the 5-fold cat states $|\alpha_5\rangle$, as we do not know of any previously-constructed  protocols to prepare such a cat state with the phase space instruction set.  

We compare three levels of control: the unoptimized analytic protocol, a gate-level refinement of that protocol, and a pulse-level GRAPE optimization~\cite{khaneja2005optimal}. For a joint oscillator--qubit output state $\ket{\Psi_{\mathrm{out}}}$, we define the target fidelity by
\begin{equation}
\mathcal{F}
:=
\left|
\left\langle
\Psi_{\mathrm{target}}
\middle|
\Psi_{\mathrm{out}}
\right\rangle
\right|^2,
\qquad
\ket{\Psi_{\mathrm{target}}}
:=
\begin{pmatrix}
\ket{\alpha_5}\\
0
\end{pmatrix}.
\label{eq:numerical-fidelity}
\end{equation}

We first evaluate two explicit analytic protocols for $N=5$. Their scaled displacement parameters are fixed, while the corresponding physical $Y$-controlled displacement amplitudes scale as $1/\alpha$. The parameters are listed in Appendix~\ref{app:N5}. The two protocols are obtained using Diophantine tolerances $\varepsilon_{\mathrm{ph}}=0.029$ and $\varepsilon_{\mathrm{ph}}=0.065$. Figure~\ref{fig:protocol-direct-fidelity} shows the fidelity obtained directly from these protocols. In both cases, the fidelity increases with $\alpha$. The protocol constructed using the looser Diophantine tolerance performs better at moderate $\alpha$, showing that a tighter asymptotic phase-alignment condition does not necessarily give better finite-$\alpha$ performance.

\begin{figure}[t]
    \centering
    \includegraphics[width=0.7\textwidth]{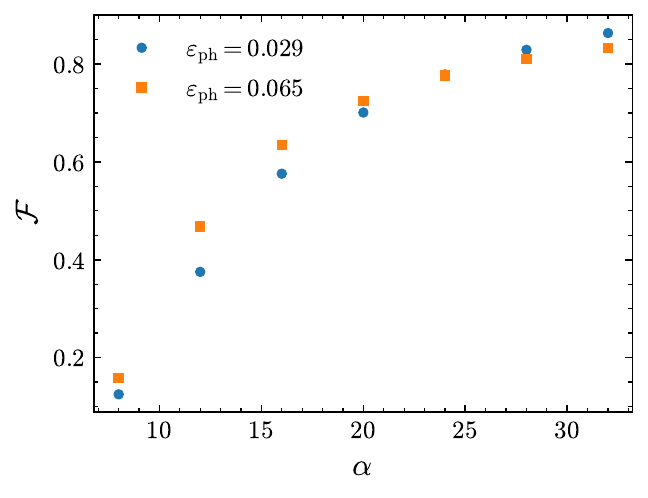}
    \caption{
    Fidelity obtained directly from the protocol as a function of the cat
    radius \(\alpha\).  The two curves correspond to two different choices
    of the Diophantine condition, labelled by \(\varepsilon_{\mathrm{ph}}=0.029\) and
    \(\varepsilon_{\mathrm{ph}}=0.065\).  In both cases, the fidelity increases as
    \(\alpha\) becomes larger. 
    }
    \label{fig:protocol-direct-fidelity}
\end{figure}

We next use the better-performing analytic sequence as a structured optimization seed at $\alpha=12$, and then use a cutoff-free gate-level refinement, which keeps the gate ordering fixed while optimizing the real and imaginary parts of the displacement amplitudes and all qubit-rotation angles.  We
maximize the joint target fidelity $\mathcal F$ defined above. The gradients are computed by automatic differentiation, and the optimization
is performed with L-BFGS-B, a limited-memory quasi-Newton method that approximates the inverse Hessian while enforcing box constraints on the variables~\cite{byrd1995lbfgsb}. To allow the search to move gradually away from
the protocol seed, we use a sequence of increasingly broad optimization
windows for the \(D\) and \(D_Z\) amplitudes. The circuit is evaluated directly in a finite coherent-branch representation, without introducing a Fock-space cutoff. The unoptimized protocol has fidelity $\mathcal{F}=0.468.$ After gate-level refinement, the fidelity increases to $\mathcal{F}=0.595$ as is shown in Fig.~\ref{fig:alpha12-wigner}.  The optimized sequence contains \(512\) merged coherent branches.  Thus, it improves the protocol seed while avoiding any
truncation of the oscillator Hilbert space. 

We then translate the optimized gate sequence into piecewise-constant
Hamiltonian controls.  Working in the rotating frame used for the pulse
simulation, the Hamiltonian is
\begin{equation}
\begin{aligned}
H(t)
={}&c_x(t) I\otimes X
+c_y(t) I\otimes Y
+c_z(t) I\otimes Z +g_{xZ}(t)x\otimes Z
+g_{pZ}(t)p\otimes Z \\
&+u_{xI}(t)x\otimes I
+u_{pI}(t)p\otimes I .
\end{aligned}
\label{eq:grape-hamiltonian-numerics}
\end{equation}
For a time slot of duration \(\mathrm{\Delta} t\), the gate-to-pulse conversion is
\begin{subequations}\begin{align}
D(\beta):
\qquad&
u_{xI}
=
-\frac{\sqrt{2}\,\mathrm{Im}(\beta)}{\mathrm{\Delta} t},
\qquad
u_{pI}
=
\frac{\sqrt{2}\,\mathrm{Re}(\beta)}{\mathrm{\Delta} t},
\\
D_Z(\beta):
\qquad&
g_{xZ}
=
-\frac{\sqrt{2}\,\mathrm{Im}(\beta)}{\mathrm{\Delta} t},
\qquad
g_{pZ}
=
\frac{\sqrt{2}\,\mathrm{Re}(\beta)}{\mathrm{\Delta} t},
\\
R(\theta):
\qquad&
c_y
=
-\frac{\theta}{\mathrm{\Delta} t}.
\end{align}\end{subequations}
The \(D_Y\) gates are implemented by conjugating a \(D_Z\) pulse with two
qubit \(X\)-rotations.  Consequently, the 18-gate 
is converted into a pulse with 26 piecewise-constant time slots and seven
independent control channels. 
The pulse-level simulation uses a Fock cutoff \(n_{\mathrm{cut}}=600\), so
the joint Hilbert-space dimension is \(1200\). We choose this cutoff
because the calculated fidelity is nearly converged at
$n_{\mathrm{cut}}=600$ for the parameter regime considered here. We take the initialized ancilla
superposition as the input state in order to omit one fixed
single-qubit gate.  Hence our initial state is:
\begin{equation}
\ket{\Psi_{\mathrm{in}}}
=
\left(\begin{array}{c} \sqrt{\frac{1}{5}}\ket{0} \\ \sqrt{\frac{4}{5}}\ket{0} \end{array}\right),
\end{equation}
and the target is
\begin{equation}
\ket{\Psi_{\mathrm{target}}}
=
\left(\begin{array}{c} \ket{12_5} \\ 0 \end{array}\right).
\end{equation}

Starting from this pulse, we perform one continuous GRAPE optimization using
L-BFGS-B with all seven control channels allowed to vary.  The control
amplitudes are bounded in the interval \([-25,25]\), and the optimization is
run for 80 iterations.  The final fidelity reaches $\mathcal{F} = 0.71684$. 
Thus, for \(\alpha=12\), the full protocol-seeded optimization pipeline
raises the fidelity from approximately \(0.47\) for the initial protocol gate
sequence to approximately \(0.59\) after gate-level refinement,
and finally to approximately \(0.72\) after pulse-level GRAPE refinement as shown in Fig.~\ref{fig:alpha12-controls} and Fig.~\ref{fig:alpha12-wigner}.

\begin{figure}[t]
    \centering
    \includegraphics[width=0.75\textwidth]{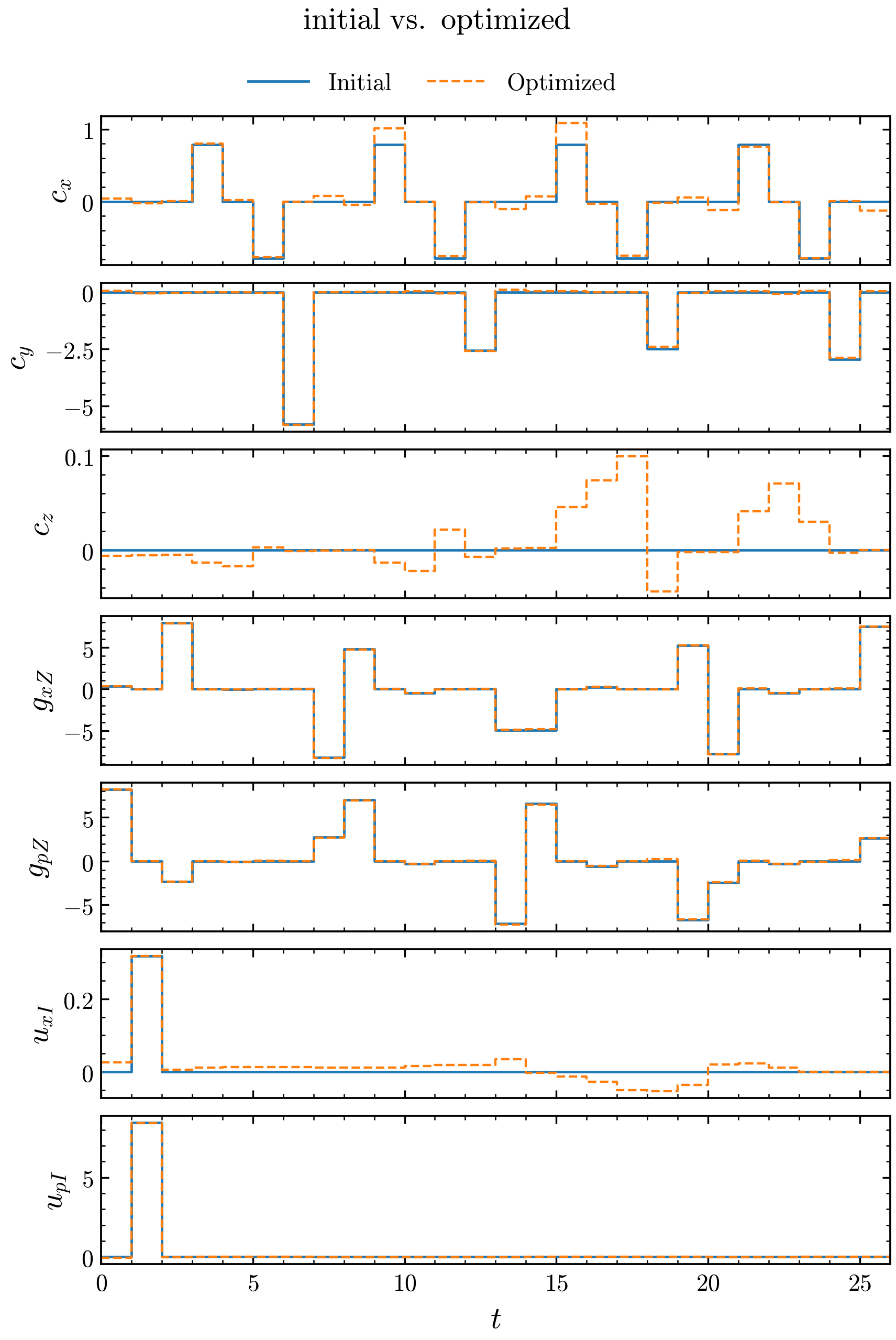}
    \caption{
    Initial pulse obtained by translating the optimized coherent-branch gate
    sequence, shown as solid blue curves, and the pulse after 80 GRAPE
    iterations, shown as dashed orange curves.  The results are for the
    five-component cat state with \(\alpha=12\).
    }
    \label{fig:alpha12-controls}
\end{figure}

\begin{figure}[t]
    \centering
    \includegraphics[width=1\textwidth]{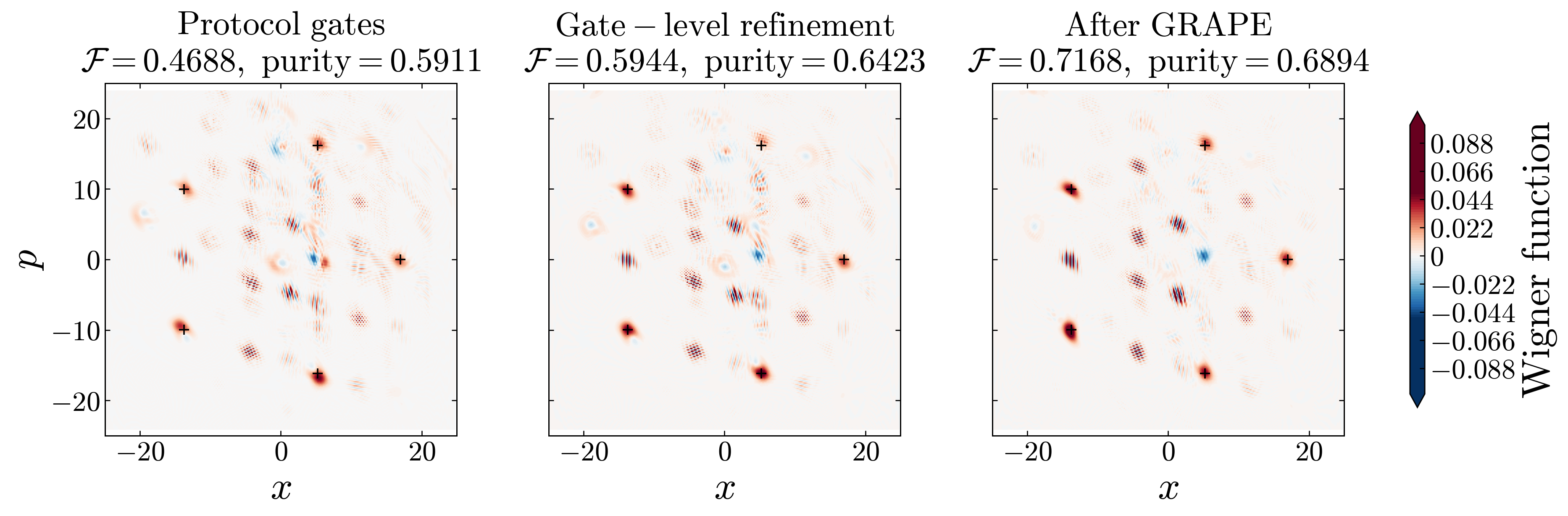}
    \caption{
Reduced oscillator Wigner functions for the preparation of the five-component cat state at $\alpha=12$. From left to right: the output of the unoptimized analytic protocol, the result after gate-level refinement, and the result after pulse-level GRAPE optimization. The corresponding fidelities are $\mathcal{F}=0.4688$, $\mathcal{F}=0.5944$, and $\mathcal{F}=0.7168$, while the oscillator purities are $\mathrm{Tr}(\rho_{\mathrm{osc}}^2)=0.5911$, $0.6423$, and $0.6894$, respectively. Black crosses indicate the target coherent-state locations. All panels use the same phase-space range and color scale.
}
    \label{fig:alpha12-wigner}
\end{figure}

Finally, we compare two complete numerical optimization pipelines. In the protocol-seeded pipeline, the analytic protocol is first refined at the gate level, translated into Hamiltonian controls, and then optimized using GRAPE. In the random-seed pipeline, GRAPE begins directly from a randomly initialized pulse. So the total optimization budgets of the two pipelines are not identical. The following results
should therefore be interpreted as a comparison between two complete
optimization pipelines, rather than as a controlled comparison of the
two pulse initializations alone. 

For small \(\alpha\), direct GRAPE from a random initial pulse can perform
comparably or better.  However, after \(\alpha=7\), the protocol-seeded
method becomes superior.  At \(\alpha=7\), the protocol-seeded method reaches
a fidelity of \(0.843\), compared with \(0.809\) for the random pulse.  At
\(\alpha=9\), the two fidelities are \(0.787\) and \(0.667\), respectively.
The difference becomes particularly large at \(\alpha=12\), where the
protocol-seeded method reaches \(0.717\), while direct GRAPE from a random
pulse reaches only \(0.407\), as shown in Figure \ref{fig:protocol-seed-comparison}.  

\begin{figure}[t]
    \centering
    \includegraphics[width=0.49\textwidth]
        {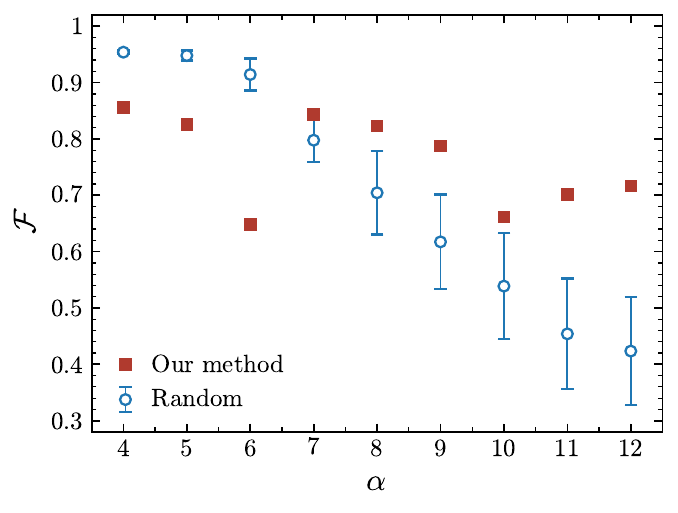}
    \hfill
    \includegraphics[width=0.49\textwidth]
        {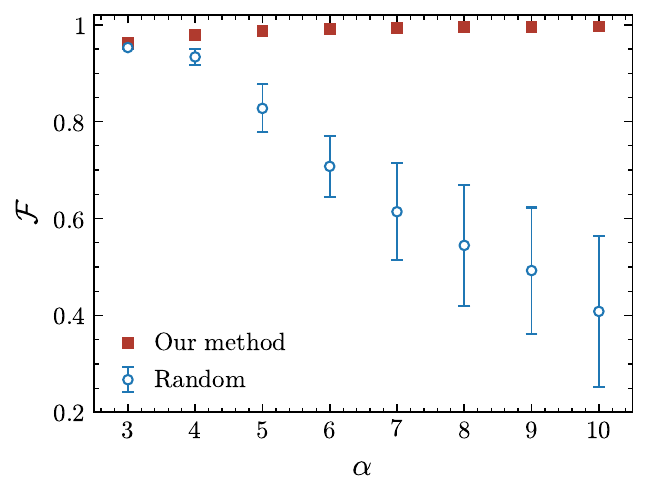}
    \caption{
        Comparison between direct GRAPE optimization from 20 random pulses
        vs. the protocol-seeded pipeline. Both pulse-level calculations use the same number of time slots, pulse duration, seven control channels, amplitude bounds and the same 80-iteration GRAPE
budget. The protocol-seeded pipeline additionally includes a
coherent-branch gate-level optimization, so the total optimization
budgets are not identical. The comparison is therefore between two
complete optimization pipelines. The error bars show the full range across the twenty random initializations. Left: \(N=5\). Right: \(N=3\).
    }
    \label{fig:protocol-seed-comparison}
\end{figure}

Figure \ref{fig:protocol-seed-comparison} also shows the performance of a protocol at $N=3$.  Our method works extremely well here because we can actually solve the ``number-theoretic" equations explicitly: see Appendix \ref{app:N3-exact-solution}.

These results show that our protocol is especially valuable in the
large-\(\alpha\) regime for prime $N>3$, and for the case of $N=3$ at any $\alpha$.  It provides a structured initial condition in a
region of control space that is difficult for a short, blind GRAPE
optimization to find from a random pulse.

\section{Compilation difficulty}
\label{sec:5}
In this section, we combine the results from before to establish the following simple theorem: \begin{thm}
    There exist state preparation protocols for which there is no efficient circuit compilation (as measured by runtime) between the phase-space instruction set, and a universal gate set with access to the gate \begin{equation}
        U_{\text{rot}}(\theta) := \exp\left(\mathrm{i}\theta a^\dagger a Z\right).
    \end{equation}
\end{thm}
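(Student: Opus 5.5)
The plan is to instantiate the ``state preparation protocol'' in the statement as the family of $N$-fold cat states $|\alpha_N\rangle$ with $N$ prime, at $|\alpha|$ above the threshold of Theorem~\ref{thm:bound}. I will compare the cost of preparing this family in the two gate sets. Since the statement is phrased in terms of runtime, I first fix what ``efficient compilation'' means. A compiler replaces each gate $g$ of the $U_{\text{rot}}$-based set, of runtime $\tau_g$, by a circuit in $\mathcal{S}$. The compilation is efficient if the total compiled runtime is at most $\mathrm{poly}(\sum_g \tau_g, \log(1/\epsilon))$, with error $\epsilon$.

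The key physical input I will state explicitly is that every nontrivial compiled instruction carries a minimal duration $\tau_0>0$. In other words, the runtime of an $\mathcal{S}$-circuit of depth $D$ is at least $\tau_0 D$ once each gate is rounded up to the hardware clock, so runtime upper-bounds depth. Without such an assumption, the runtime metric $t_{\text{run}}$ of Section~\ref{sec:review} cannot see depth at all. Indeed, Proposition~\ref{prop:runtimebound} and Theorem~\ref{thm:protocol} show that the $\mathcal{S}$ runtime is $\Theta(\alpha)$ at fixed $N$, so no separation is possible at fixed $N$. The separation must therefore come from the $N$-dependence, along the joint regime of Remark~\ref{rmk:order-of-limits}.

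Step one is an upper bound in the $U_{\text{rot}}$ set. Start from $|\alpha\rangle\otimes|{+}\rangle$ (runtime $O(\alpha)$ for the displacement). Then use $O(\log N)$ rounds, each consisting of a controlled rotation $U_{\text{rot}}(\pi 2^{k}/N)$ followed by a qubit rotation, to build the $\mathbb{Z}_N$-symmetric superposition and disentangle the qubit. This follows the quantum-signal-processing constructions of \cite{fong2025engineering,zeytinoglu2024generalized}, which I will invoke for the disentangling step rather than rederive. The result is $O(\log N)$ gates and runtime $O(\alpha+\log N)$.

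Step two is the lower bound in $\mathcal{S}$. Theorem~\ref{thm:bound} forces depth at least $\varphi(N)=N-1$, so with the clock assumption the compiled runtime is at least $\tau_0(N-1)$. Comparing the two bounds, an efficient compiler would give runtime $\mathrm{poly}(\alpha+\log N,\log(1/\epsilon))$. Here $\alpha$ may be taken as the smallest admissible value $\max(\alpha^*_{\mathrm{low}},\alpha^*_{\mathrm{up}})$, which however is not uniformly controlled in $N$. That is the main obstacle. I would handle it by restricting to sequences where $\log|\alpha(N)|=o(\log N)$ is admissible. If that cannot be certified, I would fall back to stating the theorem as a separation in the compiled depth-weighted runtime at fixed $|\alpha|$ growth rate. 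In either case the contradiction $N-1\le \mathrm{poly}(\log N)$ establishes the claim.
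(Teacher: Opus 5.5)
Your route has a genuine gap at Step one. To get a separation in $N$ against the depth bound $\varphi(N)=N-1$ of Theorem~\ref{thm:bound}, the $U_{\text{rot}}$-based circuit may use only $N^{o(1)}$ gates, and nothing supports your $O(\log N)$ count. The QSP construction of \cite{fong2025engineering} that the paper relies on uses $O(N)$ single-qubit rotations. The doubling-angle scheme you sketch also does not disentangle the qubit. Write $U_{\text{rot}}(\theta_j)=\sum_{s=\pm1}\mathrm{e}^{\mathrm{i}s\theta_j a^\dagger a}\Pi_s$, with $\Pi_\pm$ the $Z$-projectors. The branch labelled by a sign string $\sigma$ then ends at $\alpha\mathrm{e}^{\mathrm{i}\sum_j\sigma_j\theta_j}$, and its qubit vector is proportional to the final rotation $R_k$ applied to the $Z=\sigma_k$ eigenstate. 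Disentangling therefore requires branches with opposite last signs to land on the same peak and interfere. With $\theta_j=\pi2^j/N$ and $\lceil\log_2 N\rceil$ rounds, some peaks are reached by a single sign string. By the symmetry $\sigma\to-\sigma$ these come in pairs with opposite last signs, and so carry orthogonal qubit states; for $N=5$ these are the peaks at angles $\pm\pi/5$. Whether any $O(\log N)$ sequence achieves the needed interference is exactly what you would have to prove, and the cited works do not provide it. Once the source circuit costs $\Theta(N)$, a target depth of $N-1$ gives no separation, even with your clock constant $\tau_0$. That constant is not the paper's pulse-area $t_{\mathrm{run}}$, and it would charge the source circuit per gate as well. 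The non-uniformity of $\alpha^*_{\mathrm{low}}(N,\epsilon)$ is a second unresolved obstacle on this route. The quantity $u_N$ in \eqref{eq:ustar} is a minimum over exponentially many sign matrices, and nothing certifies that sequences with $\log|\alpha(N)|=o(\log N)$ are admissible.

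The missing idea is that the separation lives in $\alpha$ at fixed $N$. Your claim that no separation is possible at fixed $N$ comes only from charging the initial $O(\alpha)$ displacement to the $U_{\text{rot}}$ side. The statement asks only for \emph{some} protocol, so take the task of mapping a given coherent state $\ket{\alpha}$ to a phase-decorated $N$-fold cat. Proposition~\ref{prop:runtimebound} holds for an arbitrary initial coherent state $\ket{\beta}$ and arbitrary relative phases. Two output peaks at distance $\gtrsim|\alpha|$ differ by $\sum_j(\sigma_j-\tau_j)z_j$, so every $\mathcal{S}$ circuit has $t_{\mathrm{run}}=\Omega(|\alpha|)$ even when $\ket{\alpha}$ is supplied for free. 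The QSP circuit $U_0$ of \cite{fong2025engineering} prepares \eqref{eq:cat-special-phases} from $\ket{\alpha}$ with $O(N)$ gates of $O(1)$ cost each, i.e.\ runtime $O(N)$ independent of $\alpha$. At fixed $N$ and error, the ratio of runtimes therefore diverges as $|\alpha|\to\infty$, and no runtime-preserving compilation can exist. This is the paper's argument, which it supplements with a Trotter estimate $t_{\mathrm{run}}\gg\alpha^3$ for naively compiling each $U_{\text{rot}}$ into $\mathcal{S}$. It uses only the paper's $t_{\mathrm{run}}$ and needs neither Theorem~\ref{thm:bound} nor a clock assumption. Because $N$ stays fixed, it also avoids the order-of-limits issue of Remark~\ref{rmk:order-of-limits}. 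Note also that $U_0$ produces the chirped phases of \eqref{eq:cat-special-phases}. Proposition~\ref{prop:runtimebound} covers these phases, but Theorem~\ref{thm:bound}, as stated, does not.
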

\begin{proof}
     There is a recent \cite{fong2025engineering} construction of a unitary $U_0$ which efficiently takes a single coherent state into the cat state: \begin{equation}
         U_0 |\alpha\rangle \approx  \frac{1}{\sqrt{N}}
\sum_{\ell=0}^{N-1}
\exp\left[
\frac{\mathrm{i}\pi\ell(\ell-N)}{N}
\right]
\ket{\alpha \mathrm{e}^{\mathrm{i}2\pi\ell/N}}. \label{eq:cat-special-phases}
     \end{equation} The approach is based on quantum signal processing \cite{Motlagh:2023oqc}.   Since this circuit requires $\mathrm{O}(N)$ single-qubit rotation gates, each of which take O(1) time to implement, the runtime of this circuit is $\mathrm{O}(N)$. 
     
     As remarked in Proposition \ref{prop:runtimebound} and Theorem \ref{thm:protocol}, the specific choice of phases in \eqref{eq:cat-special-phases} do not make the state any easier to prepare for the phase space instruction set.  There is no efficient way to compile operations between these two circuits that preserves the runtime.
\end{proof} 

A more physical picture for this compilation inefficiency result can be given.   Suppose we wish to convert a circuit made out of $2N$ $U_{\text{num}}$ gates into a circuit made out of $\mathcal{S}$ gates.   \eqref{eq:universalcommutator} suggests how we can do that, interweaving together $D(\epsilon X)$ and $D(\epsilon Y)$ to make $U_{\text{num}}(\epsilon^2)$ as in a Trotter expansion.  If the approximate $U_{\text{rot}}$ gate is to act on a state $|\psi\rangle$, the Trotter error in this expansion could scale as $\lVert \epsilon^3 a^3 |\psi\rangle \rVert \sim (\epsilon \alpha)^3$ (this is the next order in the Taylor series beyond the one of interest).  Here we estimate $a\sim \alpha$ as we are acting on a coherent state of size $\alpha$. Since the protocol uses angular steps of size $\theta \sim N^{-1}$, to make the error in the circuit of depth \begin{equation}
    D \sim \frac{\theta N}{\epsilon^2} \sim \frac{1}{\epsilon^2}
\end{equation} very small, we might take \begin{equation}
    1 \gg (\epsilon \alpha)^3 D \sim \alpha^3 \epsilon 
\end{equation}   To implement the circuit at fixed $\theta$ also requires, as $\alpha$ becomes asymptotically large, \begin{equation}
    t_{\text{run}} \sim D\cdot \epsilon \sim \frac{1}{\epsilon} \gg \alpha^3.
\end{equation}

In contrast, we saw from Theorem \ref{thm:protocol} that we can achieve the optimal (for the phase space instruction set) $t_{\text{run}} \sim \alpha$, which is a parametrically better scaling at large $\alpha$.  We conclude that the cross-compilation into QSP may be a suboptimal protocol for the phase space instruction set (although our estimate of the Trotter error is also rather harsh; there may be a more efficient cross-compilation).

\section{Conclusion}
In this work, we studied the preparation of rotationally symmetric $N$-fold cat
states using the phase space instruction set, consisting of single-qubit rotations
and qubit-dependent displacements of a single bosonic mode.  We proved that, in
the large-amplitude regime, any sufficiently accurate preparation protocol must
use at least $\varphi(N)$ independent phase-space instructions, where $\varphi$
is Euler's totient function.  The obstruction comes from the cyclotomic structure
of the $N$th roots of unity.  For prime $N$, this lower bound is linear in $N$.

We also gave an explicit preparation protocol for prime $N$ whose depth scales
linearly with $N$, matching the lower-bound scaling up to constant factors.  The
protocol builds the target state by adding coherent-state components one at a
time, while choosing displacement directions so that previously prepared
components remain aligned and the new component is coherently incorporated.  Thus,
for prime-fold cat states, the lower bound and the construction together identify
the correct asymptotic scaling within the phase space instruction set.  Although being large circuit depth, this protocol has an asymptotically optimal runtime; in this respect, it may still be efficient for experiments if large-displacement gates are slow.

Our story illustrates some subtle features of bosonic control: even though non-Gaussian controls might grow the number of coherent states exponentially fast, it is not trivial to move these desired coherent states to the right locations to form highly non-Gaussian states like $N$-fold cat states.  The circuit complexity of our protocol is beyond any simple bound based simply on quantum speed limits, and comes from the algebraic structure of the target state. Each of these features crucially rely on the large-dimensional Hilbert space.

The result also gives a useful warning for bosonic compilation.  In a finite
qubit system, once a universal gate set is fixed, changing between universal gate
sets often introduces only relatively mild overheads for state preparation.  For
oscillators, the situation can be more delicate.  A multi-component cat state may
be natural to prepare using an instruction set with engineered nonlinear
operations, such as SNAP-type or signal-processing-based controls, while the same
state requires linearly many independent displacement directions in the phase
space instruction set studied here.  Thus, even when two oscillator control
models are both universal, the cost of translating a preparation strategy from
one model to another can reflect the detailed algebraic structure of the target
state.  Understanding such structure-dependent overheads is an important step
toward a more quantitative theory of compilation for continuous-variable quantum
systems.

\addcontentsline{toc}{section}{Acknowledgements}
 \section*{Acknowledgments}
We thank Shraddha Singh and Nathan Wiebe for useful discussions.  AL is especially grateful to Steve Girvin for discussions, and for suggesting this problem to him. This work was supported by the Department of Energy under Grant DE-SC0024324.   

We acknowledge Codex 5.5 for assistance in writing the code used to perform the GRAPE simulations. We acknowledge ChatGPT 5.4 Thinking for teaching us about cyclotomic polynomials and related  number theory, and ChatGPT 5.5/5.6 Pro for helping to check the proofs.
\begin{appendix}

\makeatletter


\let\@sectioncntformat\@seccntformat

\let\@hangfrom@section\@hang@from


\renewcommand\section{\@ifstar{\AndySectionStar}{\@startsection{section}{1}{\z@}%

  {-3.5ex \@plus -1ex \@minus -.2ex}%

  {2.3ex \@plus .2ex}%

  {\normalfont\Large\bfseries}}}

\renewcommand\subsection{\@ifstar{\AndySubsectionStar}{\@startsection{subsection}{2}{\z@}%

  {-3.25ex\@plus -1ex \@minus -.2ex}%

  {1.5ex \@plus .2ex}%

  {\normalfont\large\bfseries}}}

\renewcommand\subsubsection{\@ifstar{\AndySubsubsectionStar}{\@startsection{subsubsection}{3}{\z@}%

  {-3.25ex\@plus -1ex \@minus -.2ex}%

  {1.5ex \@plus .2ex}%

  {\normalfont\normalsize\bfseries}}}

\makeatother

\renewcommand{\thesubsection}{\thesection.\arabic{subsection}}
\renewcommand{\theequation}{\thesection.\arabic{equation}}
\renewcommand{\thesubsubsection}{\thesubsection.\arabic{subsubsection}}
\makeatletter
\renewcommand\section{\@ifstar{\AndySectionStar}{\@startsection {section}{1}{-\parindent}%
  {-3.5ex \@plus -1ex \@minus -.2ex}%
  {2.3ex \@plus .2ex}%
  {\normalfont\Large\bfseries}}}
\renewcommand\subsection{\@ifstar{\AndySubsectionStar}{\@startsection{subsection}{2}{-\parindent}%
  {-3.25ex\@plus -1ex \@minus -.2ex}%
  {1.5ex \@plus .2ex}%
  {\normalfont\large\bfseries}}}
\renewcommand\subsubsection{\@ifstar{\AndySubsubsectionStar}{\@startsection{subsubsection}{3}{-\parindent}%
  {-3.25ex\@plus -1ex \@minus -.2ex}%
  {1.5ex \@plus .2ex}%
  {\normalfont\normalsize\bfseries}}}
\makeatother

\section{Proof of Theorem \ref{thm:protocol}}\label{app:proof}
\subsection{Initialization of the protocol}
\label{app:initialization-schedule}
We begin by specifying the ideal sequence of states used in the construction. This fixes the initial oscillator--qubit state and the amplitude-transfer parameter $\kappa_k$ at each step of the protocol. For the $N$-fold cat-state protocol, define the target coherent-state
locations by
\begin{equation}
a_k
:=
\alpha
\exp\left(
\frac{2\pi\mathrm{i}k}{N}
\right),
\qquad
k=0,\ldots,N-1.
\label{eq:protocol-target-locations}
\end{equation}
The ideal protocol begins from the initialized state
\begin{equation}
\ket{\Psi_0^{\mathrm{ideal}}}
:=
\begin{pmatrix}
\dfrac{1}{\sqrt{N}}\ket{a_0}
\\[0.7em]
\sqrt{\dfrac{N-1}{N}}\ket{0}
\end{pmatrix}.
\label{eq:protocol-initial-state}
\end{equation}
which can be
prepared exactly from the oscillator vacuum and the spin-up state by
the gate
\begin{equation}
U_{\mathrm{init}}
:=
D\left(\frac{a_0}{2}\right)
D\left(\frac{a_0}{2}Z\right)
\mathrm{e}^{-\mathrm{i}\vartheta_{\mathrm{init}}Y}
\label{eq:protocol-initialization-gate}
\end{equation}
where \begin{equation}
    \vartheta_{\mathrm{init}}
:=
\arctan\sqrt{N-1}.
\end{equation}

Suppose that the first $k$ coherent-state components have already
been prepared, where $k=1,\ldots,N-1$. Immediately before adding the
component at $a_k$, the ideal state has the form
\begin{equation}
\ket{\Psi_{k-1}^{\mathrm{ideal}}}
=
\begin{pmatrix}
\dfrac{1}{\sqrt{N}}
\displaystyle\sum_{j=0}^{k-1}
\ket{a_j}
\\[0.9em]
\sqrt{\dfrac{N-k}{N}}\ket{0}
\end{pmatrix}.
\label{eq:protocol-state-before-step-k}
\end{equation}
Notice that this ideal state is \emph{not} normalized identically t the initial state.  However, this is acceptable to us, because there will be errors in the state preparation; the ``error kets" that track the imperfections in preparing the ideal state will include terms that correct for the normalization of the overall wave function.  Next, choose (recall the definition of $\kappa$ in Definition \ref{def:idealV})
\begin{equation}
\kappa_k
:=
\frac{1}{\sqrt{N-k}},
\qquad
k=1,\ldots,N-1.
\label{eq:protocol-kappa-schedule}
\end{equation}
The ideal action is
\begin{equation}
V_k^{\mathrm{ideal}}
\ket{\Psi_{k-1}^{\mathrm{ideal}}}
=
\begin{pmatrix}
\dfrac{1}{\sqrt{N}}
\displaystyle\sum_{j=0}^{k}
\ket{a_j}
\\[0.9em]
\sqrt{\dfrac{N-k-1}{N}}\ket{0}
\end{pmatrix},
\label{eq:protocol-state-after-step-k}
\end{equation}
where $V_k^{\mathrm{ideal}}$ denotes
$\mathcal{V}_{\mathrm{ideal}}$ with
$a_*=a_k$ and $\kappa=\kappa_k$.

At the final step,
\begin{equation}
\kappa_{N-1}
=
1,
\label{eq:protocol-final-kappa}
\end{equation}
so the lower component is completely emptied and the ideal output is
\begin{equation}
\ket{\Psi_{N-1}^{\mathrm{ideal}}}
=
\begin{pmatrix}
\dfrac{1}{\sqrt{N}}
\displaystyle\sum_{j=0}^{N-1}
\ket{a_j}
\\[0.9em]
0
\end{pmatrix}.
\label{eq:protocol-ideal-final-state}
\end{equation}
Thus, the protocol adds one coherent-state component of amplitude
$1/\sqrt{N}$ at every step. Let $\ket{\Psi_k}$
denote the actual normalized physical state after step $k$. We write
\begin{equation}
\ket{\Psi_k}
=
\ket{\Psi_k^{\mathrm{ideal}}}
+
\ket{\mathcal{E}_k},
\qquad
\langle\Psi_k|\Psi_k\rangle
=
1,
\label{eq:ideal-plus-error-state}
\end{equation}
where $\ket{\mathcal{E}_k}$ contains both the implementation error and
the small correction required to make the physical state normalized.

\subsection{Approximate action of the $V$ gate}
The first step of the proof is to determine the approximate action of the $V$ gate on an input state which takes a suitable form \eqref{eq:Psiin}.  
A short calculation gives: 
\begin{align}
D\!\left(\lambda \mathrm{e}^{\mathrm{i}\phi}Y\right)&D\!\left(-\frac{a_{*}}{2}Z\right)\ket{\Psi_{\mathrm{in}}} \notag \\
&=
\begin{pmatrix}
\frac{D(\lambda \mathrm{e}^{\mathrm{i}\phi})+D(-\lambda \mathrm{e}^{\mathrm{i}\phi})}{2}
& -i\frac{D(\lambda \mathrm{e}^{\mathrm{i}\phi})-D(-\lambda \mathrm{e}^{\mathrm{i}\phi})}{2} \\
i\frac{D(\lambda \mathrm{e}^{\mathrm{i}\phi})-D(-\lambda \mathrm{e}^{\mathrm{i}\phi})}{2}
& \frac{D(\lambda \mathrm{e}^{\mathrm{i}\phi})+D(-\lambda \mathrm{e}^{\mathrm{i}\phi})}{2}
\end{pmatrix}
\begin{pmatrix}
\displaystyle
\sum_j c_jD\!\left(-\frac{a_*}{2}\right)\ket{a_j}
\\[0.7em]
\displaystyle
c_*\left|\frac{a_*}{2}\right\rangle
\end{pmatrix}.
\end{align}
To proceed we invoke the small-$\lambda$ approximation.

\begin{lem}
\label{lem:trig-form} Let $b := \lambda \mathrm{e}^{\mathrm{i}\phi}$, and define
\begin{subequations}\begin{align}
\theta_{j} &:= \mathrm{Im}\bigl(b \overline{a}_{j}\bigr), \\ 
\theta_{*} &:= \mathrm{Im}\bigl(b \overline{a}_{*}\bigr).
\end{align}\end{subequations}
Let
\begin{equation}
\ket{\Psi_{\mathrm{exact}}^{(Y)}}
:=
D\!\left(\lambda \mathrm{e}^{\mathrm{i}\phi}Y\right)D\!\left(-\frac{a_{*}}{2}Z\right)\ket{\Psi_{\mathrm{in}}},
\end{equation}
with $\ket{\Psi_{\mathrm{in}}}$ as above, and define the \emph{trigonometric approximation}
\begin{align}
\ket{\Psi_{\mathrm{trig}}^{(Y)}}
&:=
\sum_{j}c_{j}
\begin{pmatrix}
\cos\bigl(\theta_{*} - 2\theta_{j}\bigr)\\[0.2em]
\sin\bigl(\theta_{*} - 2\theta_{j}\bigr)
\end{pmatrix}
D\!\left(-\frac{a_{*}}{2}\right)\lvert a_{j}\rangle
\;+\;
c_{*}
\begin{pmatrix}
\sin(\theta_{*})\\
\cos(\theta_{*})
\end{pmatrix}
\bigl\lvert\tfrac{a_{*}}{2}\bigr\rangle.
\label{eq:post-Y-trig}
\end{align}
Then, for all $b\in\mathbb{C}$,
\begin{align}
\bigl\||\Psi_{\mathrm{exact}}^{(Y)}\rangle-|\Psi_{\mathrm{trig}}^{(Y)}\rangle\bigr\|
&\le |b|\left(\sum_j|c_j|+|c_*|\right).
\label{eq:lambda-error-explicit}
\end{align}
\end{lem}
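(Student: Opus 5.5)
The plan is a direct computation: I will decompose the qubit into eigenstates of $Y$ and apply Proposition 2.1 term by term. Write $b=\lambda\mathrm{e}^{\mathrm{i}\phi}$. The state $D(-\tfrac{a_*}{2}Z)\ket{\Psi_{\mathrm{in}}}$ is a finite sum of terms of the form $c\,\ket{w}\otimes e$. Here $e=(1,0)^{\mathsf T}$ or $(0,1)^{\mathsf T}$, and $\ket{w}$ is either $D(-\tfrac{a_*}{2})\ket{a_j}$ or $\ket{a_*/2}$.

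By \eqref{eq:D-eps}, $D(-\tfrac{a_*}{2})\ket{a_j}$ equals a unimodular phase times $D(w_j)\ket{0}$, where $w_j:=a_j-\tfrac{a_*}{2}$. That phase is common to both spinor components, so I will carry it along unchanged. Both the exact state and the trigonometric state are linear in these terms. The triangle inequality therefore reduces \eqref{eq:lambda-error-explicit} to a single-term claim, which I state as ($\star$): for every $w\in\mathbb C$ and every unit qubit vector $e$,
\begin{equation*}
\bigl\| D(bY)\,D(w)\ket{0}\otimes e - R_w\, e\otimes D(w)\ket{0}\bigr\| \le |b|,
\end{equation*}
where $R_w$ is the $2\times2$ matrix obtained from the block matrix for $D(bY)$ by replacing $D(\pm b)$ with the scalar $\mathrm{e}^{\pm 2\mathrm{i}\,\mathrm{Im}(\bar w b)}$.

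To prove ($\star$), I split $e=P_+e+P_-e$ with $P_\pm$ the projectors onto the $Y=\pm1$ eigenstates. On these two orthogonal pieces $D(bY)$ acts as $D(\pm b)$, and $R_w$ acts as the phase $\mathrm{e}^{\pm 2\mathrm{i}\,\mathrm{Im}(\bar w b)}$. The error vector is then
\begin{equation*}
P_+e\otimes\bigl(D(b)-\mathrm{e}^{2\mathrm{i}\mathrm{Im}(\bar wb)}\bigr)D(w)\ket0 \;+\; P_-e\otimes\bigl(D(-b)-\mathrm{e}^{-2\mathrm{i}\mathrm{Im}(\bar wb)}\bigr)D(w)\ket0 .
\end{equation*}
The two pieces are orthogonal in the qubit factor. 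By \eqref{eq:approx-right} with $\epsilon=\pm b$ and $z=w$, each oscillator factor has norm at most $|b|$. Hence the squared norm is at most $|b|^2(\|P_+e\|^2+\|P_-e\|^2)=|b|^2$. This orthogonal decomposition is the main point to get right. Bounding the matrix entries one at a time would only give $\sqrt2|b|$, not the constant $1$ claimed in \eqref{eq:lambda-error-explicit}.

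It remains to check that $R_w e$ reproduces the trigonometric spinors in \eqref{eq:post-Y-trig}. With $\chi:=2\,\mathrm{Im}(\bar w b)$, the matrix is
\begin{equation*}
R_w=\begin{pmatrix}\cos\chi & \sin\chi\\ -\sin\chi & \cos\chi\end{pmatrix},
\end{equation*}
using $-\mathrm{i}(\mathrm{e}^{\mathrm{i}\chi}-\mathrm{e}^{-\mathrm{i}\chi})/2=\sin\chi$.

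For the upper-branch terms, $w=w_j$ and $\chi=2\theta_j-\theta_*$. The first column of $R_w$ is then $(\cos(\theta_*-2\theta_j),\,\sin(\theta_*-2\theta_j))^{\mathsf T}$, which matches the first term of \eqref{eq:post-Y-trig}.

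For the lower-branch term, $w=a_*/2$ and $\chi=\mathrm{Im}(\bar a_* b)=\theta_*$. The second column of $R_w$ is $(\sin\theta_*,\cos\theta_*)^{\mathsf T}$, which matches the second term of \eqref{eq:post-Y-trig}.

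Summing ($\star$) over the terms weighted by $|c_j|$ and $|c_*|$ then gives \eqref{eq:lambda-error-explicit}.
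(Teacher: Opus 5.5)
Your proof is correct and follows essentially the same route as the paper: on each displaced coherent component you replace $D(\pm b)$ by the phases $\mathrm{e}^{\pm\mathrm{i}\chi}$ from the coherent-state continuity bound, read the trigonometric spinors off the block form of $D(bY)$, and sum the per-component errors with the triangle inequality. The one refinement is your split $D(bY)=D(b)\otimes P_+ + D(-b)\otimes P_-$, which makes the two error pieces orthogonal and so actually justifies the per-component constant $1$ that the paper asserts after bounding each $D(\pm b)$ separately; bounding the two spinor entries one at a time would only give $\sqrt{2}\,|b|$, so your argument fills in a step the paper leaves implicit.
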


\begin{proof}

Set $b=\lambda \mathrm{e}^{\mathrm{i}\phi}$. We have
\begin{equation}
D(bY)
=
\begin{pmatrix}
\frac{D(b)+D(-b)}{2}
&
-\mathrm{i}\frac{D(b)-D(-b)}{2}
\\[0.4em]
\mathrm{i}\frac{D(b)-D(-b)}{2}
&
\frac{D(b)+D(-b)}{2}
\end{pmatrix}.
\end{equation}

We first calculate the upper component directly:
\begin{align}
&
\begin{pmatrix}
\frac{D(b)+D(-b)}{2}
&
-\mathrm{i}\frac{D(b)-D(-b)}{2}
\end{pmatrix}
\begin{pmatrix}
\displaystyle
\sum_j c_jD\!\left(-\frac{a_*}{2}\right)\ket{a_j}
\\[0.7em]
\displaystyle
c_*\left|\frac{a_*}{2}\right\rangle
\end{pmatrix}
\nonumber\\
&\quad=
\sum_j c_j
\frac{D(b)+D(-b)}{2}
D\!\left(-\frac{a_*}{2}\right)\ket{a_j}
-
\mathrm{i}c_*
\frac{D(b)-D(-b)}{2}
\left|\frac{a_*}{2}\right\rangle.
\label{eq:first-row-multiplication}
\end{align}
Using \eqref{eq:coherent-continuity}, we could get for the coherent state centered at $a_j-a_*/2$, the corresponding
phase is
\begin{align}
2\operatorname{Im}
\left[
b\left(a_j-\frac{a_*}{2}\right)^*
\right]
&=
2\operatorname{Im}(ba_j^*)
-
\operatorname{Im}(ba_*^*)
=
2\theta_j-\theta_*.
\end{align}
For the coherent state centered at $a_*/2$, the phase is
\begin{equation}
2\operatorname{Im}
\left[
b\left(\frac{a_*}{2}\right)^*
\right]
=
\theta_*.
\end{equation}
Therefore, the expression in \eqref{eq:first-row-multiplication}
is approximated by
\begin{align}
&
\sum_j c_j
\frac{
\mathrm{e}^{\mathrm{i}(2\theta_j-\theta_*)}
+
\mathrm{e}^{-\mathrm{i}(2\theta_j-\theta_*)}
}{2}
D\!\left(-\frac{a_*}{2}\right)\ket{a_j}
-
\mathrm{i}c_*
\frac{
\mathrm{e}^{\mathrm{i}\theta_*}
-
\mathrm{e}^{-\mathrm{i}\theta_*}
}{2}
\left|\frac{a_*}{2}\right\rangle
\nonumber\\
&=
\sum_j c_j
\cos(2\theta_j-\theta_*)
D\!\left(-\frac{a_*}{2}\right)\ket{a_j}
+
c_*\sin(\theta_*)
\left|\frac{a_*}{2}\right\rangle
\nonumber\\
&=
\sum_j c_j
\cos(\theta_*-2\theta_j)
D\!\left(-\frac{a_*}{2}\right)\ket{a_j}
+
c_*\sin(\theta_*)
\left|\frac{a_*}{2}\right\rangle.
\label{eq:first-row-trigonometric}
\end{align}
This is the upper component of
$\ket{\Psi_{\mathrm{trig}}^{(Y)}}$; the calculation is analogous  for the lower component.

It remains to bound the approximation error.  Consider first one of
the coherent-state components in the upper qubit component, and write
its coherent-state vector simply as $\ket{a}$.  Let $\vartheta$ denote
the phase appearing in the approximation \eqref{eq:coherent-continuity}. Since 
\begin{align}
    \left\|
        D(\pm b)\ket{a}
        -
        \mathrm{e}^{\pm \mathrm{i}\vartheta}\ket{a}
    \right\|
    &\leq
    \sqrt{2\left(1-\mathrm{e}^{-|b|^2/2}\right)},
    \label{eq:Db-positive-error}
\end{align}
we have 
\begin{equation}
    \begin{pmatrix}
        \dfrac{D(b)+D(-b)}{2}\ket{a}
        \\[0.8em]
        \mathrm{i}\dfrac{D(b)-D(-b)}{2}\ket{a}
    \end{pmatrix} \approx     \begin{pmatrix}
        \dfrac{
            \mathrm{e}^{\mathrm{i}\vartheta}
            +
            \mathrm{e}^{-\mathrm{i}\vartheta}
        }{2}\ket{a}
        \\[0.8em]
        \mathrm{i}\dfrac{
            \mathrm{e}^{\mathrm{i}\vartheta}
            -
            \mathrm{e}^{-\mathrm{i}\vartheta}
        }{2}\ket{a}
    \end{pmatrix} + \left(\begin{array}{c} |\mathcal{E}_{\text{trig}}^\uparrow\rangle \\ |\mathcal{E}_{\text{trig}}^\downarrow\rangle\end{array}\right). 
    \label{eq:exact-column-component}
\end{equation}

Applying the triangle inequality over all coherent-state components
gives
\begin{align}
\bigl\|
    \ket{\Psi_{\mathrm{exact}}^{(Y)}}
    -
    \ket{\Psi_{\mathrm{trig}}^{(Y)}}
\bigr\|
&\leq
\left(\sum_j 
|c_j|
\sqrt{
    2\left(1-\mathrm{e}^{-|b|^2/2}\right)
}
+
|c_*|
\sqrt{
    2\left(1-\mathrm{e}^{-|b|^2/2}\right)
}\right)
\nonumber\\
&\leq
|b|
\left(
    \sum_j|c_j|+|c_*|
\right),
\label{eq:total-error-before-b-bound}
\end{align}
which completes the proof.
\end{proof}

\subsection{Phase alignment and the ideal map}

We now impose a phase alignment condition that will allow us to factor out a common direction in the qubit space.

\begin{defn}[Phase alignment condition]
\label{def:phase-alignment}
Let $\kappa\in(0,1]$ and $\delta>0$.  Fix the unique angle
$\theta_\kappa\in(0,\pi/2]$ such that
\begin{equation}
    \mathrm{e}^{\mathrm{i}\theta_\kappa}
    =
    \sqrt{1-\kappa^2}
    +
    \mathrm{i}\kappa.
    \label{eq:2pi-kappa-fix}
\end{equation}
We say that $b=\lambda\mathrm{e}^{\mathrm{i}\phi}$ satisfies the
\emph{$(\kappa,\delta)$--phase-alignment condition} for the data
$\{a_j\},a_*$ if there exist angles $\{\vartheta_j\}$ such that,
for every $j$,
\begin{subequations}
\begin{align}
    2\theta_*-2\vartheta_j
    &=
    \theta_\kappa
    \qquad
    (\mathrm{mod}\ 2\pi),
    \label{eq:phase-align-ideal}
    \\
    |\theta_j-\vartheta_j|
    &\leq
    \delta,
    \label{eq:phase-align-error}
\end{align}
\end{subequations}

where $\theta_{j} = \mathrm{Im}(b a_{j}^{*})$ and $\theta_{*} = \mathrm{Im}(b a_{*}^{*})$.
In other words, the actual phases $\theta_{j}$ are within $\delta$ of some ideal phases $\vartheta_{j}$ that solve the exact alignment condition \eqref{eq:phase-align-ideal}.
\end{defn}

\noindent For a fixed $j$, define the unit vectors in $\mathbb{R}^{2}$:
\begin{subequations}
    \begin{align}
v_{1}(\vartheta_{j}) &:=
\begin{pmatrix}
\cos(\theta_{*}-2\vartheta_{j}) \\
\sin(\theta_{*}-2\vartheta_{j})
\end{pmatrix}, \\
v_{1}^{\perp}(\vartheta_{j}) &:=
\begin{pmatrix}
-\sin(\theta_{*}-2\vartheta_{j}) \\
\cos(\theta_{*}-2\vartheta_{j})
\end{pmatrix}.
\end{align}
\end{subequations}
Using the decomposition along with \eqref{eq:phase-align-ideal}, for each $j$ and each choice of $\vartheta_{j}$, one has the identity 
\begin{equation}
\label{eq:decomposition}
\begin{pmatrix}
\sin(\theta_{*})\\
\cos(\theta_{*})
\end{pmatrix}
=
\sin(2\theta_{*}-2\vartheta_{j})\,v_{1}(\vartheta_{j})
+
\cos(2\theta_{*}-2\vartheta_{j})\,v_{1}^{\perp}(\vartheta_{j}).
\end{equation}

\begin{lem}
\label{lem:ancilla-Lipschitz}
Define
\begin{equation}
u(x) :=
\begin{pmatrix}
\cos(\theta_{*}-2x) \\
\sin(\theta_{*}-2x)
\end{pmatrix}.
\end{equation}
Then for any $x,y\in\mathbb{R}$,
\begin{equation}
\label{eq:ancilla-Lipschitz}
\|u(x) - u(y)\| \le 2\,|x-y|.
\end{equation}
\end{lem}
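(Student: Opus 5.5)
The plan is to reduce the estimate to a statement about chords on the unit circle. Both $u(x)$ and $u(y)$ are unit vectors in $\mathbb{R}^2$: they are the points on the unit circle at angles $\theta_*-2x$ and $\theta_*-2y$. The angle between them is therefore $\Delta := 2(x-y)$, up to sign, and the distance between two unit vectors separated by angle $\Delta$ is the chord length. This gives
\begin{equation}
\|u(x)-u(y)\|^2 = 2 - 2\cos(2x-2y) = 4\sin^2(x-y),
\end{equation}
which follows from expanding the square and applying the cosine subtraction formula, since $\cos(\theta_*-2x)\cos(\theta_*-2y)+\sin(\theta_*-2x)\sin(\theta_*-2y)=\cos(2y-2x)$. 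Note that the common offset $\theta_*$ drops out entirely.

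Taking square roots gives $\|u(x)-u(y)\| = 2|\sin(x-y)|$. The elementary inequality $|\sin t|\le |t|$ for all real $t$ then yields \eqref{eq:ancilla-Lipschitz} immediately.

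An alternative route is to observe that $u$ is differentiable, with $u'(x) = 2\,(\sin(\theta_*-2x),\,-\cos(\theta_*-2x))^{\mathsf T}$, so $\|u'\|\equiv 2$. The fundamental theorem of calculus, $u(x)-u(y)=\int_y^x u'(s)\,\mathrm{d}s$, together with the triangle inequality for integrals then gives the same bound. I would present the chord computation, since it is exact and avoids vector-valued integrals.

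There is no real obstacle here; the only care needed is the trigonometric identity $2-2\cos\Delta = 4\sin^2(\Delta/2)$ and tracking the factor of $2$ that comes from the $2x$ inside the arguments. This factor is what makes the Lipschitz constant $2$ rather than $1$, and it is the constant that will later convert the phase-alignment tolerance $\delta$ of Definition~\ref{def:phase-alignment} into a qubit-direction error of at most $2\delta$ per coherent component.
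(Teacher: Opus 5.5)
Your proof is correct. Your main argument differs from the paper's, which uses the route you list as your alternative: it notes $\lVert u'(x)\rVert = 2$ and applies the triangle inequality to $u(y)-u(x)=\int_x^y u'(z)\,\mathrm{d}z$. You instead compute the distance exactly from the chord formula, $\lVert u(x)-u(y)\rVert = 2\lvert\sin(x-y)\rvert$, where the offset $\theta_*$ cancels, and then apply $\lvert\sin t\rvert\le\lvert t\rvert$. The exact identity tells you slightly more. The constant $2$ is approached only as $x-y\to 0$, and you also get the stronger global bound $\min\bigl(2\lvert x-y\rvert,\,2\bigr)$. The paper's speed-bound argument is more portable: it uses only that $u$ has speed $2$, so it would work unchanged for any curve with velocity bounded by $2$, not just a uniformly parametrized circle. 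Either one-line argument is enough for the later estimate $\lVert e_j\rVert\le 2\delta$.
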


\begin{proof}
Notice that $\lVert u^\prime(x)\rVert = 2$.  \eqref{eq:ancilla-Lipschitz} then follows from the triangle inequality applied to $u(y) - u(x) = \int\limits_x^y \mathrm{d}z \; u^\prime(z).$
\end{proof}

We now impose the $(\kappa,\delta)$--phase alignment condition. By \eqref{eq:phase-align-ideal}, for all $j$,
\begin{equation}
\sin(2\theta_{*}-2\vartheta_{j})=\kappa,
\end{equation}
and hence 
\begin{equation}
\begin{pmatrix}
\sin(\theta_{*})\\
\cos(\theta_{*})
\end{pmatrix}
=
\kappa\,v_{1}(\vartheta_{j})+
\sqrt{1-\kappa^{2}}\;v_{1}^{\perp}(\vartheta_{j}).
\end{equation}
Next, write
\begin{equation}
e_j:=v_1(\theta_j)-v_1(\vartheta_j). 
\label{eq:v1-split}
\end{equation}
Then by Lemma~\ref{lem:ancilla-Lipschitz} and \eqref{eq:phase-align-error},
\begin{equation}
\label{eq:e-j-bound}
\|e_j\|\le 2|\theta_j-\vartheta_j|\le 2\delta.
\end{equation}

Fix any index $j_*$ corresponding to one of the already-prepared
coherent-state components. Since the phase-alignment condition holds
for every $j$, it holds in particular for $j=j_*$. Setting
$j=j_*$ in Eq.~\eqref{eq:decomposition} gives
\begin{align}
\begin{pmatrix}
\sin\theta_*\\
\cos\theta_*
\end{pmatrix}
&=
\kappa\,v_1(\vartheta_{j_*})
+
\sqrt{1-\kappa^2}\,
v_1^\perp(\vartheta_{j_*}).
\label{eq:special-component-decomposition}
\end{align}
Substituting \eqref{eq:v1-split} and \eqref{eq:special-component-decomposition} into \eqref{eq:post-Y-trig},
we obtain an exact decomposition of the state $|\Psi^{(Y)}_{\text{trig}}\rangle$ into an ideal term plus an error term:
\begin{align}
D\!\left(\lambda \mathrm{e}^{\mathrm{i}\phi}Y\right)
D\!\left(-\frac{a_*}{2}Z\right)\ket{\Psi_{\mathrm{in}}}
&=
\sum_j c_j\,v_1(\theta_j)D\!\left(-\frac{a_{*}}{2}\right)\ket{a_j}
+
c_*
\begin{pmatrix}
\sin\theta_*\\
\cos\theta_*
\end{pmatrix}
\ket{a_*/2} +
\ket{\mathcal E_{\mathrm{trig}}}
\notag \\
&=
\sum_j c_j\,v_1(\vartheta_j)D\!\left(-\frac{a_{*}}{2}\right)\ket{a_j} 
+
\kappa c_*\,v_1(\vartheta_{j_*})\ket{a_*/2} \notag \\
&\;\;\;\;\; +
\sqrt{1-\kappa^2}\,c_*\,v_1^\perp(\vartheta_{j_*})\ket{a_*/2}+
\ket{\mathcal E_{\mathrm{trig}}}
+
\ket{\mathcal E_{\mathrm{phase}}},
\label{eq:postY-ideal-plus-error}
\end{align}
with
\begin{equation}
\ket{\mathcal E_{\mathrm{phase}}}:=\sum_j c_j\,e_jD\left(-\frac{a_*}{2}\right)\ket{a_j},
\end{equation} and \begin{equation}
\label{eq:global-error}
    \|\ket{\mathcal E_{\mathrm{phase}}}\|
\le \sum_j |c_j|\,\|e_j\|
\le 2\delta\sum_j |c_j|.
\end{equation}
We define
\begin{equation}
\ket{\mathcal E}=\ket{\mathcal E_{\mathrm{trig}}}
+
\ket{\mathcal E_{\mathrm{phase}}}
\end{equation}
More concretely, write 
\begin{equation}
    \beta_{j_*}
    :=
    2\vartheta_{j_*}-\theta_*.
\label{eq:rotation_angle}
\end{equation}
Consider the rotation in the ancilla Bloch sphere about the $Y$-axis by angle $-\beta_{j_*}$, which in the $\{|0\rangle,|1\rangle\}$ basis is represented by the real $2\times 2$ matrix
\begin{equation}
    \mathrm{e}^{-\mathrm{i}\beta_{j_*}Y}
    =
    \begin{pmatrix}
        \cos\beta_{j_*} & -\sin\beta_{j_*}\\
        \sin\beta_{j_*} & \cos\beta_{j_*}
    \end{pmatrix}.
\end{equation}
A direct computation shows that
\begin{subequations}
\begin{align}
    \mathrm{e}^{-\mathrm{i}\beta_{j_*}Y}
    v_1(\vartheta_{j_*})
    &=
    \begin{pmatrix}
        1\\
        0
    \end{pmatrix},
    \\
    \mathrm{e}^{-\mathrm{i}\beta_{j_*}Y}
    v_1^\perp(\vartheta_{j_*})
    &=
    \begin{pmatrix}
        0\\
        1
    \end{pmatrix}.
\end{align}
\end{subequations}
Applying the ancilla rotation
$\mathrm{e}^{-\mathrm{i}\beta_{j_*}Y}$ to \eqref{eq:postY-ideal-plus-error}, if the diophantine condition holds we obtain 
\begin{align}
&
\left(
    \mathrm{e}^{-\mathrm{i}\beta_{j_*}Y}
    \otimes I
\right)
D\!\left(\lambda\mathrm{e}^{\mathrm{i}\phi}Y\right)
D\!\left(-\frac{a_*}{2}Z\right)
\ket{\Psi_{\mathrm{in}}}
\nonumber\\
&=
\begin{pmatrix}
\displaystyle
\sum_j c_j D(-a_*/2)\lvert a_j\rangle
+\kappa c_*\lvert a_*/2\rangle
\\[6pt]
\displaystyle
\sqrt{1-\kappa^2}\,c_*\lvert a_*/2\rangle
\end{pmatrix}
+
\ket{\widetilde{\mathcal E}}.
\label{eq:rotated-phase-aligned-state}
\end{align}
where the rotated error ket is
\begin{equation}
    \ket{\widetilde{\mathcal E}}
    :=
    \left(
        \mathrm{e}^{-\mathrm{i}\beta_{j_*}Y}
        \otimes I
    \right)
    \ket{\mathcal E}.
\end{equation}
Since the ancilla rotation is unitary, $\lVert |\widetilde{\mathcal{E}}\rangle \rVert =\lVert |\mathcal{E}\rangle \rVert  $. Finally, we apply the last conditional displacement $D(a_{*}Z/2)$, which simply translates the oscillator coherent states back to $|a_{j}\rangle$ and $|a_{*}\rangle$: 
\begin{align}
V(a_{*},\lambda \mathrm{e}^{\mathrm{i}\phi},\kappa)
\ket{\Psi_{\mathrm{in}}}
&\approx
\begin{pmatrix}
\displaystyle
\sum_{j}c_{j}\lvert a_{j}\rangle
+\kappa c_{*}\lvert a_{*}\rangle
\\[6pt]
\displaystyle
\sqrt{1-\kappa^{2}}\,c_{*}\lvert 0\rangle
\end{pmatrix}.
\label{eq:factored}
\end{align}

\begin{figure}[t]
    \centering
    \includegraphics[width=0.5\textwidth]{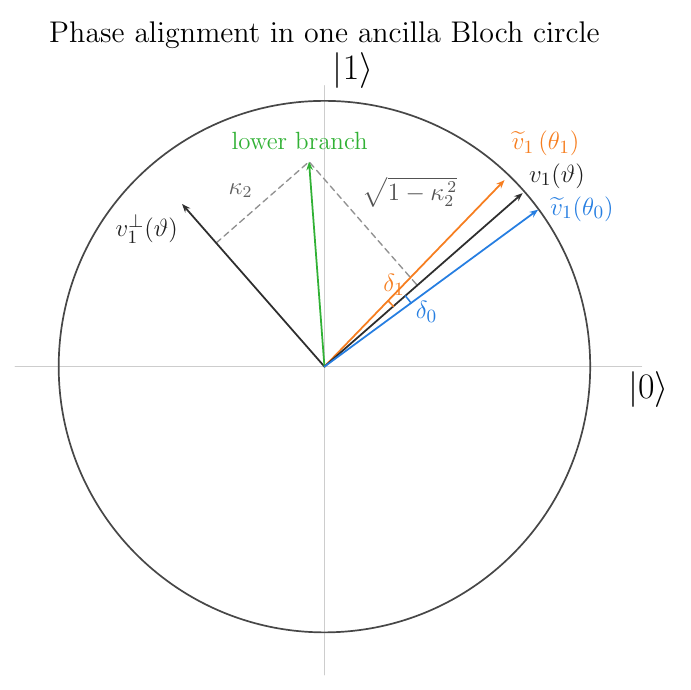}
    \caption{
     View of the phase-alignment condition for one addition gate. The black arrow is the ideal common direction $v_1(\vartheta)$ fixed by the target angle $\theta_{K_2}$, while the blue and orange arrows are the actual ancilla directions $\tilde{v}_1\left(\theta_0\right)$ and $\tilde{v}_1\left(\theta_1\right)$ seen by the two already-prepared peaks. When $b_2$ is chosen correctly, these actual directions differ from the ideal one only by small errors $\delta_0, \delta_1$. The green arrow is the lower-branch ancilla vector, decomposed as $\kappa_2 v_1(\theta)+\sqrt{1-\kappa_2^2} v_1^{\perp}(\vartheta)$. After the subsequent qubit rotation, the common direction $v_1(\vartheta)$ is mapped to the computational basis.
    }
    \label{fig:phase}
\end{figure}

\subsection{Diophantine condition for guaranteeing phase alignment}
We now isolate the Diophantine input needed to guarantee the
approximate phase-alignment condition of
Definition~\ref{def:phase-alignment}.  For angles
$\theta,\theta'\in\mathbb{R}$, define their distance on the unit
circle by
\begin{equation}
\mathsf{d}(\theta,\theta')
:=
\inf_{m\in\mathbb{Z}}
\left|
\theta-\theta'-2\pi m
\right|.
\label{eq:circle-distance}
\end{equation}
\begin{prop}[Diophantine solvability from $\mathbb{Q}$-linear independence]
\label{prop:diophantine-solvability}
Suppose that the relevant phase differences depend linearly on the
real control parameter $\lambda$, so that
\begin{equation}
    2\theta_*-2\theta_j
    =
    \lambda\gamma_j,
    \qquad
    j=1,\ldots,n,
    \label{eq:linear-phase-coefficients}
\end{equation}
for some fixed real numbers $\gamma_1,\ldots,\gamma_n$. If
$\gamma_1,\ldots,\gamma_n$ are linearly independent over
$\mathbb{Q}$, then for every $\varepsilon_{\mathrm{ph}}>0$ there exists
$\lambda\in\mathbb{R}$ such that 
\begin{equation}
    \mathsf{d}\!\left(
        2\theta_*-2\theta_j,\,
        \theta_\kappa
    \right)
    <
    2\pi\varepsilon_{\mathrm{ph}},
    \qquad
    j=1,\ldots,n.
    \label{eq:epsilon-box}
\end{equation}
Thus, the approximate modular condition is solvable whenever the phase
coefficients $\gamma_1,\ldots,\gamma_n$ are
$\mathbb{Q}$-linearly independent.
\end{prop}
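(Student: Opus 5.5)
This is a simultaneous Diophantine approximation statement, and the plan is to obtain it from the Kronecker–Weyl theorem for linear flows on the torus.

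\emph{Reduction to the torus.} Rescale the phases by $2\pi$ and set $x_j(\lambda) := \lambda\gamma_j/(2\pi) \bmod 1$, so that $x(\lambda) = (x_1(\lambda),\ldots,x_n(\lambda))$ is a one-parameter linear flow on $\mathbb{T}^n = \mathbb{R}^n/\mathbb{Z}^n$. Let $t := \theta_\kappa/(2\pi) \bmod 1$. By \eqref{eq:linear-phase-coefficients}, the condition \eqref{eq:epsilon-box} becomes
\begin{equation}
\|x_j(\lambda) - t\|_{\mathbb{R}/\mathbb{Z}} < \varepsilon_{\mathrm{ph}}
\end{equation}
for every $j$. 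In words, the orbit $x(\lambda)$ must enter the open cube $B$ of half-width $\varepsilon_{\mathrm{ph}}$ centred at the diagonal point $(t,\ldots,t)$.

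\emph{Density of the orbit via Weyl's criterion.} I would show that $\{x(\lambda):\lambda\in\mathbb{R}\}$ is dense (indeed equidistributed) in $\mathbb{T}^n$. Take a nonzero $k\in\mathbb{Z}^n$ and the character $e_k(x) = \mathrm{e}^{2\pi\mathrm{i}k\cdot x}$. Along the flow,
\begin{equation}
e_k(x(\lambda)) = \mathrm{e}^{\mathrm{i}\lambda\, k\cdot\gamma}.
\end{equation}
Since the $\gamma_j$ are $\mathbb{Q}$-linearly independent, $k\cdot\gamma \neq 0$. Hence
\begin{equation}
\frac{1}{T}\int_0^T \mathrm{e}^{\mathrm{i}\lambda\, k\cdot\gamma}\,\mathrm{d}\lambda
\end{equation}
has modulus at most $2/(T|k\cdot\gamma|)$, which tends to $0$ as $T\to\infty$.

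Next, approximate the indicator of $B$ uniformly by trigonometric polynomials. Concretely, pick a continuous bump $f\ge 0$ supported in $B$ with $\int_{\mathbb{T}^n} f > 0$, and approximate it by a Fejér sum $P$ with $\|f-P\|_\infty < \tfrac12\int f$. The time average of $P$ along the flow then converges to its constant Fourier coefficient $\int P$. Consequently
\begin{equation}
\liminf_{T\to\infty}\frac{1}{T}\int_0^T f(x(\lambda))\,\mathrm{d}\lambda \;\ge\; \int f - 2\|f-P\|_\infty \;>\; 0.
\end{equation}
So $f(x(\lambda)) > 0$ for some $\lambda$, and that $\lambda$ satisfies \eqref{eq:epsilon-box}.

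\emph{Where care is needed.} The substantive step is the nonvanishing of $k\cdot\gamma$, and this is exactly where $\mathbb{Q}$-linear independence is used. Independence over $\mathbb{Q}$ is equivalent to independence over $\mathbb{Z}$, by clearing denominators.

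Two things need more than a citation. First, the target is a single diagonal point $(t,\ldots,t)$, so no extra condition such as $\gamma_j\neq\gamma_{j'}$ is required beyond independence, because density covers every point of $\mathbb{T}^n$. Second, the resulting $\lambda$ may be large, but the statement only asks for some $\lambda\in\mathbb{R}$; smallness of $|b|$ for Lemma~\ref{lem:trig-form} is a separate issue. Alternatively, one could invoke Kronecker's theorem directly as a standard result and skip the Fejér argument.
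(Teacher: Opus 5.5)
Your proof is correct and follows the same route as the paper. The paper simply cites the density of the linear flow $\lambda\mapsto(\lambda\gamma_1,\ldots,\lambda\gamma_n)$ in $(\mathbb{R}/2\pi\mathbb{Z})^n$ under $\mathbb{Q}$-linear independence, and concludes that the orbit enters the open box defined by \eqref{eq:epsilon-box}; your Weyl-criterion and Fej\'er-approximation argument is a self-contained proof of that cited Kronecker--Weyl density fact.
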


\begin{proof}
By linear independence, the orbit $(\lambda \gamma_1,\ldots, \lambda \gamma_n)$
is dense in $\mathbb{T}^n=(\mathbb{R}/2\pi\mathbb{Z})^n$ ~\cite{einsiedler2011ergodic}.  Therefore it must enter the finite-volume subset obeying \eqref{eq:epsilon-box} for some finite $\lambda$.
\end{proof}

\begin{prop}[Phase-misalignment error from the $\varepsilon_{\mathrm{ph}}$-approximate modular condition]
\label{prop:epsilon2-error}
Assume that the approximate modular condition
\eqref{eq:epsilon-box} holds. Then the
$(\kappa,\delta)$--phase-alignment condition of
Definition~\ref{def:phase-alignment} holds with
\begin{equation}
    \delta=\pi\varepsilon_{\mathrm{ph}}.
\end{equation}
Moreover, the phase-alignment error ket $\ket{\mathcal E_{\mathrm{phase}}}$ introduced
in \eqref{eq:global-error} satisfies
\begin{equation}
    \bigl\|\ket{\mathcal E_{\mathrm{phase}}}\bigr\|
    \leq
    2\pi\varepsilon_{\mathrm{ph}}
    \sum_j|c_j|.
    \label{eq:phase-misalignment-bound}
\end{equation}
\end{prop}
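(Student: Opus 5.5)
The plan is to construct the ideal angles $\vartheta_j$ by hand, as lifts of the exact modular condition, and then to bound $|\theta_j-\vartheta_j|$ directly from \eqref{eq:epsilon-box}. Fix $j$. By \eqref{eq:epsilon-box} and the definition \eqref{eq:circle-distance} of $\mathsf{d}$, there is an integer $m_j$ such that
\begin{equation}
r_j := 2\theta_*-2\theta_j-\theta_\kappa-2\pi m_j, \qquad |r_j|<2\pi\varepsilon_{\mathrm{ph}}.
\end{equation}
I then define
\begin{equation}
\vartheta_j := \theta_j + \tfrac{1}{2} r_j .
\end{equation}
With this choice,
\begin{equation}
2\theta_*-2\vartheta_j = 2\theta_*-2\theta_j - r_j = \theta_\kappa + 2\pi m_j ,
\end{equation}
so \eqref{eq:phase-align-ideal} holds exactly modulo $2\pi$. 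Moreover $|\theta_j-\vartheta_j| = |r_j|/2 < \pi\varepsilon_{\mathrm{ph}}$, which is \eqref{eq:phase-align-error} with $\delta=\pi\varepsilon_{\mathrm{ph}}$.

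The only subtle point is the factor of $2$. Knowing $2\vartheta_j$ modulo $2\pi$ fixes $\vartheta_j$ only modulo $\pi$, so a naive choice of representative could produce a large discrepancy $|\theta_j-\vartheta_j|$. The construction above sidesteps this: rather than picking some representative of $\vartheta_j$, it perturbs $\theta_j$ itself by half of the residual $r_j$. The same ambiguity is harmless for the ancilla vectors, because $v_1(\vartheta_j)$ depends on $\vartheta_j$ only through $2\vartheta_j$, and so is unaffected by shifts of $\vartheta_j$ by $\pi$.

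For the error bound, I substitute $\delta=\pi\varepsilon_{\mathrm{ph}}$ into \eqref{eq:e-j-bound}, which comes from Lemma~\ref{lem:ancilla-Lipschitz}. This gives $\|e_j\|\le 2\delta = 2\pi\varepsilon_{\mathrm{ph}}$ for every $j$. Then \eqref{eq:global-error} yields
\begin{equation}
\bigl\|\ket{\mathcal E_{\mathrm{phase}}}\bigr\| \le \sum_j |c_j|\,\|e_j\| \le 2\pi\varepsilon_{\mathrm{ph}}\sum_j|c_j| ,
\end{equation}
which is \eqref{eq:phase-misalignment-bound}.

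I do not expect any real obstacle. The only step requiring care is the choice of lift of $\vartheta_j$, and defining $\vartheta_j$ relative to $\theta_j$ as above handles it.
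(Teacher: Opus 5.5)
Your proof is correct and is the paper's argument: your $\vartheta_j=\theta_j+\tfrac12 r_j$ simplifies to $\theta_*-\tfrac12\theta_\kappa-\pi m_j$, which is exactly the comparison angle the paper chooses. Both arguments then bound $|\theta_j-\vartheta_j|=\tfrac12|r_j|<\pi\varepsilon_{\mathrm{ph}}$ and substitute $\delta=\pi\varepsilon_{\mathrm{ph}}$ into \eqref{eq:global-error}.
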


 \begin{proof}
By \eqref{eq:epsilon-box} and the definition of the circle distance
in \eqref{eq:circle-distance}, for each $j$ there exists
$m_j\in\mathbb Z$ such that
\begin{equation}
\left|
2\theta_*-2\theta_j-\theta_\kappa-2\pi m_j
\right|
<
2\pi\varepsilon_{\mathrm{ph}}.
\label{eq:phase-mismatch-representative}
\end{equation}
Choose the comparison angle $\vartheta_j$ in
Definition~\ref{def:phase-alignment} to be
\begin{equation}
\vartheta_j
:=
\theta_*-\frac{\theta_\kappa}{2}-\pi m_j.
\end{equation}
Then
\begin{equation}
2\theta_*-2\vartheta_j
=
\theta_\kappa+2\pi m_j,
\end{equation}
and hence the exact alignment condition
\eqref{eq:phase-align-ideal} is satisfied. Furthermore,
\begin{align}
|\theta_j-\vartheta_j|
&=
\frac{1}{2}
\left|
2\theta_*-2\theta_j-\theta_\kappa-2\pi m_j
\right|
\nonumber\\
&<
\pi\varepsilon_{\mathrm{ph}}.
\end{align}
Thus \eqref{eq:phase-align-error} holds with
$\delta=\pi\varepsilon_{\mathrm{ph}}$, proving the
$(\kappa,\pi\varepsilon_{\mathrm{ph}})$--phase-alignment condition.
Finally, substituting $\delta=\pi\varepsilon_{\mathrm{ph}}$ into
\eqref{eq:global-error} gives
\begin{equation}
\bigl\|\ket{\mathcal E_{\mathrm{phase}}}\bigr\|
\leq
2\pi\varepsilon_{\mathrm{ph}}\sum_j|c_j|,
\end{equation}
which is \eqref{eq:phase-misalignment-bound}.
\end{proof}

\subsection{Specialization to equally-spaced coherent states on a circle}

We now address the existence of parameters $b$ satisfying the phase alignment condition. We specialize to coherent states arranged on a circle in phase space.

\begin{defn}[Circular configuration]
Fix $N\ge 2$ and $\alpha\in\mathbb{R}$ with $\alpha> 0$. For $0\le j\le n\le N-2$, define
\begin{subequations}
    \begin{align}
       a_{j} &= \alpha \mathrm{e}^{2\pi \mathrm{i} j/N},  \\
       a_{*} &= \alpha \mathrm{e}^{2\pi \mathrm{i} (n+1)/N}
    \end{align}
\end{subequations}
Then, for some real angle $\theta$, write $b = \lambda \mathrm{e}^{\mathrm{i}\phi}$ and parametrize \begin{subequations}
\label{eq:circular-phase-parametrization}
\begin{align}
\theta_{j}
&= \alpha\lambda\,\mathrm{Im}\left(\mathrm{e}^{\mathrm{i}\phi} \mathrm{e}^{-2\pi \mathrm{i} j/N}\right)
   = \alpha\lambda \sin\!\left(\phi - \frac{2\pi j}{N}\right), \\
\theta_{*}
&= \alpha\lambda\,\mathrm{Im}\left(\mathrm{e}^{\mathrm{i}\phi} \mathrm{e}^{-2\pi \mathrm{i} (n+1)/N}\right)
   = \alpha\lambda \sin\!\left(\phi - \frac{2\pi (n+1)}{N}\right).
\end{align}\end{subequations}
\end{defn}

\noindent Recall that $\theta_\kappa$ was fixed in
\eqref{eq:2pi-kappa-fix}. The phase alignment condition is equivalent to 
\begin{equation}
\theta_{*} - \theta_{j}
= \frac{\theta_\kappa}{2} + \pi m_{j},\qquad m_{j}\in\mathbb{Z}.
\end{equation}

For the purposes of our construction, however, requiring exact phase
alignment is not appropriate; what is needed is an approximate version. Accordingly, we require  
\begin{equation}
\label{eq:mod-approx}
\mathsf{d}\big(
2\theta_{*} - 2\theta_{j},\,\theta_{\kappa}
\big)
< 2\pi\varepsilon_{\mathrm{ph}}
\quad\text{for all }j=0,\dots,n,
\end{equation}
for a prescribed accuracy $\varepsilon_{\mathrm{ph}}>0$.

\begin{prop}[Rational independence and existence of phases]
\label{prop:rational-independence}
Fix a prime $N$ and $0\le n\le N-3$. For $j=0,\dots,n$ let
\begin{equation}
\beta_{j}(\phi)
:
= \frac{\alpha}{\pi}\Biggl(
\sin\!\Bigl(\phi - \frac{2\pi (n+1)}{N}\Bigr)
-
\sin\!\Bigl(\phi - \frac{2\pi j}{N}\Bigr)
\Biggr),
\end{equation}
where $\theta_{*}$ and $\theta_{j}$ are as defined above for $\lambda=1$.
Then for almost every $\phi\in[0,2\pi)$, the real numbers $\beta_0(\phi),\ldots, \beta_n(\phi)$
are linearly independent over $\mathbb{Q}$. In particular, for such a choice of $\phi$ and for any $\theta_{\kappa}\in\mathbb{R}$ and any $\varepsilon_{\mathrm{ph}}>0$, there exists $\lambda\in\mathbb{R}$ such that the approximate modular condition \eqref{eq:mod-approx} holds.
\end{prop}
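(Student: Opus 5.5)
The plan is to reduce $\mathbb{Q}$-linear independence of the $\beta_j(\phi)$ to the linear independence of roots of unity in Proposition~\ref{prop:lin-ind}, and then invoke Proposition~\ref{prop:diophantine-solvability}.

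\textbf{Step 1: complex rewriting.} Write $\omega=\mathrm{e}^{2\pi\mathrm{i}/N}$. Since $\sin(\phi-2\pi k/N)=\mathrm{Im}(\mathrm{e}^{\mathrm{i}\phi}\omega^{-k})$, we have
\begin{equation*}
\beta_j(\phi)=\frac{\alpha}{\pi}\,\mathrm{Im}\!\left(\mathrm{e}^{\mathrm{i}\phi}\bigl(\omega^{-(n+1)}-\omega^{-j}\bigr)\right).
\end{equation*}

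\textbf{Step 2: one rational relation at a time.} Fix a nonzero vector $q=(q_0,\ldots,q_n)\in\mathbb{Q}^{n+1}$. A relation $\sum_j q_j\beta_j(\phi)=0$ is equivalent to $\mathrm{Im}(\mathrm{e}^{\mathrm{i}\phi}w_q)=0$, where
\begin{equation*}
w_q:=\Bigl(\sum_{j=0}^n q_j\Bigr)\omega^{-(n+1)}-\sum_{j=0}^n q_j\,\omega^{-j}.
\end{equation*}
If $w_q\neq 0$, this condition holds only for the two values of $\phi\in[0,2\pi)$ with $\mathrm{e}^{\mathrm{i}\phi}w_q\in\mathbb{R}$. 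There are countably many $q$, so the set of bad $\phi$ is a countable union of finite sets. It therefore has measure zero, which gives the almost-every statement.

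\textbf{Step 3 (main obstacle): show $w_q\neq0$ for every nonzero $q$.} This is where primality of $N$ and the restriction $n\le N-3$ are used.

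The exponents $-(n+1),0,-1,\ldots,-n$ are $n+2\le N-1$ distinct residues mod $N$. For prime $N$, the minimal polynomial of $\omega$ is $\Phi_N(z)=1+z+\cdots+z^{N-1}$, of degree $\varphi(N)=N-1$. By Proposition~\ref{prop:lin-ind} and Corollary~\ref{cor:cyclotomic}, the only $\mathbb{Q}$-linear relation among $1,\omega,\ldots,\omega^{N-1}$ is, up to scaling, $\sum_{k=0}^{N-1}\omega^k=0$. This follows because the kernel of $T$ has dimension $N-\varphi(N)=1$ and is spanned by the all-ones vector.

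Hence any relation supported on at most $N-1$ distinct powers of $\omega$ must be trivial. The coefficient of $\omega^{-j}$ in $w_q$ is $-q_j$, and these exponents are distinct from $-(n+1)$. So $w_q=0$ would force all $q_j=0$, a contradiction.

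\textbf{Step 4: existence of $\lambda$.} For a good $\phi$, compute
\begin{equation*}
2\theta_*-2\theta_j=2\alpha\lambda\bigl(\sin(\phi-2\pi(n+1)/N)-\sin(\phi-2\pi j/N)\bigr)=\lambda\cdot 2\pi\beta_j(\phi).
\end{equation*}
So \eqref{eq:linear-phase-coefficients} holds with $\gamma_j=2\pi\beta_j(\phi)$. Multiplying by the common real factor $2\pi$ preserves $\mathbb{Q}$-independence, so the $\gamma_j$ are $\mathbb{Q}$-linearly independent. Proposition~\ref{prop:diophantine-solvability} then yields a $\lambda$ satisfying \eqref{eq:mod-approx}, for any $\theta_\kappa$ and any $\varepsilon_{\mathrm{ph}}>0$.
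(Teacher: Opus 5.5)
Your proposal is correct and follows the paper's proof: you rewrite $\sum_j q_j\beta_j(\phi)$ as $\frac{\alpha}{\pi}\mathrm{Im}\bigl(\mathrm{e}^{\mathrm{i}\phi}P(\zeta)\bigr)$ with $\zeta=\mathrm{e}^{-2\pi\mathrm{i}/N}$, show $P(\zeta)\neq 0$ from the cyclotomic minimal polynomial of degree $N-1$, take a countable union of null sets over relations, and then apply Proposition~\ref{prop:diophantine-solvability}. The only difference is cosmetic. You obtain nonvanishing from the fact that the one-dimensional kernel is spanned by the all-ones vector, so a relation that misses some residue must be trivial. The paper instead uses the equivalent degree bound $\deg P_k\le n+1\le N-2<N-1$.
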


\begin{proof}
For each nonzero integer vector$
k=(k_0,\ldots,k_n)\in\mathbb Z^{n+1}\setminus\{0\}$, 
define
\begin{equation}
F_k(\phi)
:=
\sum_{j=0}^{n}k_j\beta_j(\phi).
\end{equation}
Set
\begin{equation}
\zeta:=\mathrm{e}^{-2\pi\mathrm{i}/N}.
\end{equation}
Using
\begin{equation}
\sin\left(
\phi-\frac{2\pi m}{N}
\right)
=
\operatorname{Im}\left(
\mathrm{e}^{\mathrm{i}\phi}\zeta^m
\right),
\end{equation}
we obtain
\begin{equation}
F_k(\phi)
=
\frac{\alpha}{\pi}
\operatorname{Im}\left(
\mathrm{e}^{\mathrm{i}\phi}P_k(\zeta)
\right),
\label{eq:Fk-polynomial-form}
\end{equation}
where
\begin{equation}
P_k(z)
:=
\left(
\sum_{j=0}^{n}k_j
\right)z^{n+1}
-
\sum_{j=0}^{n}k_jz^j
\in\mathbb Z[z].
\label{eq:Pk-polynomial}
\end{equation}

Suppose that $F_k$ vanishes identically. Equation
\eqref{eq:Fk-polynomial-form} then implies
\begin{equation}
\operatorname{Im}\left(
\mathrm{e}^{\mathrm{i}\phi}P_k(\zeta)
\right)
=
0
\end{equation}
for every $\phi$, and hence $P_k(\zeta)=0$. However,
\begin{equation}
\deg P_k
\leq
n+1
\leq
N-2
<
N-1
=
\deg\Phi_N.
\end{equation}
Since $\Phi_N$ is the minimal polynomial of $\zeta$ over
$\mathbb Q$, Corollary~\ref{cor:cyclotomic} implies that $P_k$ must
be the zero polynomial. Comparing the coefficients of
$1,z,\ldots,z^n$ then gives
\begin{equation}
k_0=\cdots=k_n=0,
\end{equation}
contrary to the choice of $k$. Thus, $F_k$ is not identically zero
for any nonzero integer vector $k$.

For each such $k$, the zero set
\begin{equation}
Z_k
:=
\left\{
\phi\in[0,2\pi):
F_k(\phi)=0
\right\}
\end{equation}
has Lebesgue measure zero. Since
$\mathbb Z^{n+1}\setminus\{0\}$ is countable, the exceptional set
\begin{equation}
Z
:=
\bigcup_{k\in\mathbb Z^{n+1}\setminus\{0\}}Z_k
\end{equation}
also has measure zero. Therefore, for every
$\phi\in[0,2\pi)\setminus Z$, no nontrivial integer relation among
$\beta_0(\phi),\ldots,\beta_n(\phi)$ exists. These numbers are thus
linearly independent over $\mathbb Q$ for almost every $\phi$.

Finally, for such a $\phi$, we have
\begin{equation}
\theta_{*}(\lambda,\phi)-\theta_{j}(\lambda,\phi)
= \pi\lambda\,\beta_{j}(\phi),
\end{equation}
Therefore, Proposition~\ref{prop:diophantine-solvability}, applied to
the $\mathbb{Q}$-linearly independent numbers
$\beta_0(\phi),\ldots,\beta_n(\phi)$, directly implies that there exists
$\lambda\in\mathbb{R}$ satisfying the approximate modular condition
\eqref{eq:mod-approx}.
\end{proof}

\subsection{The last step of the protocol}
In the last step of the protocol, assuming the errors have not accumulated (which we'll return to at the end),  we have a state of the form
\begin{equation}
\label{eq:final-step-input}
\ket{\Psi_{\mathrm{in}}^{(N-1)}} =\frac{1}{\sqrt{N}}
\begin{pmatrix}
\displaystyle \sum_{m=0}^{N-2}\lvert \alpha\,\mathrm{e}^{2\pi \mathrm{i} m/N}\rangle \\
\lvert 0\rangle
\end{pmatrix},
\end{equation}
and we want to prove we could use $V(\alpha \mathrm{e}^{2\pi \mathrm{i} (N-1)/N},\lambda \mathrm{e}^{\mathrm{i}\phi},1)$ to add the last coherent state $\lvert \alpha\,\mathrm{e}^{2\pi \mathrm{i} (N-1)/N}\rangle$ in the upper component, producing an $N$-fold cat state.  The strategy is the same as before, but we need to be a little bit careful because the locations of the coherent states are no longer linearly independent over $\mathbb{Z}$.  Explicitly, in the last step,  
\begin{align}
&D(\lambda \mathrm{e}^{\mathrm{i}\phi}Y)D\!\left(-\frac{\alpha \mathrm{e}^{2\pi \mathrm{i} (N-1)/N}}{2}Z\right)
\ket{\Psi_{\mathrm{in}}^{(N-1)}} =\frac{1}{\sqrt{N}}D(\lambda \mathrm{e}^{\mathrm{i}\phi}Y)
\begin{pmatrix}
\displaystyle \sum_{m=0}^{N-2} D\!\left(-\frac{\alpha \mathrm{e}^{2\pi \mathrm{i} (N-1)/N}}{2}\right)\lvert \alpha \mathrm{e}^{2\pi \mathrm{i} m/N}\rangle \\
\displaystyle \left\lvert \frac{\alpha \mathrm{e}^{2\pi \mathrm{i} (N-1)/N}}{2}\right\rangle
\end{pmatrix} \nonumber\\[0.4em]
&\quad\approx\frac{1}{\sqrt{N}}
\sum_{m=0}^{N-2}
\begin{pmatrix}
\cos(\theta_{N-1} - 2\theta_{m}) \\
\sin(\theta_{N-1} - 2\theta_{m})
\end{pmatrix}
D\!\left(-\frac{\alpha \mathrm{e}^{2\pi \mathrm{i} (N-1)/N}}{2}\right)\lvert \alpha \mathrm{e}^{2\pi \mathrm{i} m/N}\rangle \notag \\
&\;\;\;\;\;\;\;
+\frac{1}{\sqrt{N}}
\begin{pmatrix}
\sin(\theta_{N-1}) \\
\cos(\theta_{N-1})
\end{pmatrix}
\left\lvert \frac{\alpha \mathrm{e}^{2\pi \mathrm{i} (N-1)/N}}{2}\right\rangle.
\label{eq:last-Y-trig}
\end{align}
To obtain a decoupled form analogous to the ideal map, we seek a choice of phases such that the ancilla vector corresponding to the new coherent state is aligned with that of the existing superposition.

Analogously to the preceding steps, we require that, for every
$m=0,1,\ldots,N-2$,
\begin{equation}
\label{eq:approx-decoupling-condition}
\mathsf{d}\!\left(
2\theta_{N-1}-2\theta_m,\,
\frac{\pi}{2}
\right)
<
2\pi\varepsilon_{\mathrm{ph}},
\end{equation}
where $\mathsf{d}$ is the distance on the unit circle defined in
\eqref{eq:circle-distance}. Thus, for each $m$, there exists
$\delta_m\in\mathbb{R}$ such that
\begin{equation}
2\theta_{N-1}-2\theta_m
=
\frac{\pi}{2}+\delta_m
\qquad
(\mathrm{mod}\ 2\pi),
\end{equation}
with $|\delta_m|<2\pi\varepsilon_{\mathrm{ph}}$. 
It follows that
\begin{equation}
\theta_{N-1}-2\theta_m
=
\frac{\pi}{2}-\theta_{N-1}+\delta_m
\qquad
(\mathrm{mod}\ 2\pi).
\end{equation}

Consequently,
\begin{align}
\left\|
\begin{pmatrix}
\cos(\theta_{N-1}-2\theta_m)\\
\sin(\theta_{N-1}-2\theta_m)
\end{pmatrix}
-
\begin{pmatrix}
\sin(\theta_{N-1})\\
\cos(\theta_{N-1})
\end{pmatrix}
\right\|
&=
\left\|
\begin{pmatrix}
\cos\!\left(\frac{\pi}{2}-\theta_{N-1}+\delta_m\right)\\
\sin\!\left(\frac{\pi}{2}-\theta_{N-1}+\delta_m\right)
\end{pmatrix}
-
\begin{pmatrix}
\cos\!\left(\frac{\pi}{2}-\theta_{N-1}\right)\\
\sin\!\left(\frac{\pi}{2}-\theta_{N-1}\right)
\end{pmatrix}
\right\|
\nonumber\\
\leq{}&
|\delta_m|<
2\pi\varepsilon_{\mathrm{ph}}.
\label{eq:last-step-ancilla-alignment-error}
\end{align}
Using the circular phase parametrization
\eqref{eq:circular-phase-parametrization},
condition \eqref{eq:approx-decoupling-condition} becomes
\begin{equation}
\label{eq:decoupling-Diophantine}
\mathsf{d}\!\left(
2\alpha\lambda
\left[
\sin\!\left(
\phi-\frac{2\pi(N-1)}{N}
\right)
-
\sin\!\left(
\phi-\frac{2\pi m}{N}
\right)
\right],
\frac{\pi}{2}
\right)
<
2\pi\varepsilon_{\mathrm{ph}}
\end{equation}
for every $m=0,1,\ldots,N-2$.

\begin{prop}[Rational independence of the sine phases]
\label{prop:sine-phase-rational-independence}
Let $N$ be prime and define
\begin{equation}
s_m(\phi)
:=
\sin\left(
\phi-\frac{2\pi m}{N}
\right),
\qquad
m=0,\ldots,N-2.
\end{equation}
Then, for almost every $\phi\in[0,2\pi)$, the numbers
\begin{equation}
1,s_0(\phi),\ldots,s_{N-2}(\phi)
\end{equation}
are linearly independent over $\mathbb Q$.
\end{prop}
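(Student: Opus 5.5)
The plan mirrors the proof of Proposition~\ref{prop:rational-independence}: reduce any rational relation to a polynomial identity at $\zeta:=\mathrm{e}^{-2\pi\mathrm{i}/N}$, then take a countable union of measure-zero sets. Fix a nonzero integer vector $(k_{-1},k_0,\ldots,k_{N-2})\in\mathbb{Z}^{N}\setminus\{0\}$ and set
\begin{equation}
G_k(\phi):=k_{-1}+\sum_{m=0}^{N-2}k_m s_m(\phi)=k_{-1}+\operatorname{Im}\left(\mathrm{e}^{\mathrm{i}\phi}Q_k(\zeta)\right),
\qquad
Q_k(z):=\sum_{m=0}^{N-2}k_m z^m\in\mathbb{Z}[z].
\end{equation}
This identity follows from $s_m(\phi)=\operatorname{Im}(\mathrm{e}^{\mathrm{i}\phi}\zeta^m)$.

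The first step is to show that $G_k$ is not identically zero. Write $Q_k(\zeta)=r\mathrm{e}^{\mathrm{i}\psi}$, so that $G_k(\phi)=k_{-1}+r\sin(\phi+\psi)$. This is a trigonometric polynomial, and it vanishes identically only if both $r=0$ and $k_{-1}=0$. To see this, compare Fourier modes: the constant term is $k_{-1}$, and the first harmonic has amplitude $r$. If $r=0$, then $Q_k(\zeta)=0$. But $\deg Q_k\le N-2<N-1=\varphi(N)$ because $N$ is prime, so Corollary~\ref{cor:cyclotomic} forces $Q_k\equiv 0$, that is, $k_0=\cdots=k_{N-2}=0$. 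Combined with $k_{-1}=0$, this contradicts $k\neq 0$. The constant $1$ is harmless precisely because it lives in the zeroth Fourier mode while all the $s_m$ live in the first harmonic.

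The second step handles the zero sets. Since $G_k$ is a nonzero real-analytic function on the circle (a nonconstant sinusoid plus a constant, or a nonzero constant), its zero set $Z_k\subset[0,2\pi)$ is finite, with at most two points. It therefore has Lebesgue measure zero. The union $Z=\bigcup_{k\neq 0}Z_k$ is countable, so it also has measure zero. For every $\phi\notin Z$, no nontrivial integer relation among $1,s_0(\phi),\ldots,s_{N-2}(\phi)$ exists. Since a nontrivial $\mathbb{Q}$-relation can be cleared of denominators to give a $\mathbb{Z}$-relation, this is exactly $\mathbb{Q}$-linear independence.

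The only step needing care is the first one: confirming that the degree bound $\deg Q_k\le N-2$ falls strictly below $\varphi(N)$. This is where primality enters. For composite $N$ the same argument would fail, because $N-2\ge\varphi(N)$ in that case. Everything else is routine. Where this result will be used, the independence of $1$ together with the $s_m$ is what allows the affine target $\pi/2$ in \eqref{eq:decoupling-Diophantine} to be reached, via Kronecker's theorem applied to the numbers $2\alpha(s_{N-1}-s_m)/(2\pi)$.
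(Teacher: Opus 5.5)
Your proof is correct and follows the paper's argument: you write the relation as $k_{-1}+\operatorname{Im}(\mathrm{e}^{\mathrm{i}\phi}Q_k(\zeta))$, use $\deg Q_k\le N-2<\varphi(N)$ to force all coefficients to vanish, and take a countable union of null zero sets. Your Fourier-mode justification and the explicit bound of at most two zeros per $G_k$ fill in details the paper leaves implicit. One caution on your closing aside, which is not part of the proof: Proposition~\ref{prop:diophantine-solvability} reaches the target via the continuous flow $\lambda\mapsto\lambda\gamma$, which only needs the frequencies themselves to be $\mathbb{Q}$-independent, so the constant $1$ plays no role there; a discrete Kronecker argument would instead need $1$ together with the rescaled numbers $\alpha f_m/\pi$ to be independent, and this statement does not supply that.
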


\begin{proof}
The proof is the same cyclotomic argument used in
Proposition~\ref{prop:rational-independence}. Indeed, for an integer
vector $(\ell,k_0,\ldots,k_{N-2})$, consider
\begin{equation}
F_{\ell,k}(\phi)
:=
\ell+\sum_{m=0}^{N-2}k_ms_m(\phi)
=
\ell+
\operatorname{Im}\left[
\mathrm{e}^{\mathrm{i}\phi}P_k(\zeta)
\right],
\end{equation}
where
\begin{equation}
\zeta:=\mathrm{e}^{-2\pi\mathrm{i}/N},
\qquad
P_k(z):=\sum_{m=0}^{N-2}k_mz^m.
\end{equation}
If $F_{\ell,k}$ vanished identically, then $\ell=0$ and
$P_k(\zeta)=0$. Since
\begin{equation}
\deg P_k\leq N-2<\deg\Phi_N=N-1,
\end{equation}
the minimal-polynomial property implies $P_k=0$, and hence
$k_0=\cdots=k_{N-2}=0$. The same countable-union argument as in
Proposition~\ref{prop:rational-independence} then proves the claim
for almost every $\phi$.
\end{proof}

\begin{prop}[Phase-difference functions]
Fix $N\ge 2$. For $m=0,1,\dots,N-2$ define
\begin{equation}
\label{eq:fm-def}
f_{m}(\phi)
:=
\sin\!\left(\phi - \frac{2\pi (N-1)}{N}\right)
-
\sin\!\left(\phi - \frac{2\pi m}{N}\right).
\end{equation}

If $N$ is prime, then for almost every $\phi$, the numbers
$f_0(\phi),\ldots,f_{N-2}(\phi)$ are linearly independent over
$\mathbb{Q}$. Then \eqref{eq:decoupling-Diophantine} is equivalent to
\begin{equation}
\label{eq:approx-mod-constraint}
\mathsf{d}\!\left(
2\alpha\lambda f_m(\phi),
\frac{\pi}{2}
\right)
<
2\pi\varepsilon_{\mathrm{ph}},
\qquad
m=0,1,\ldots,N-2.
\end{equation}
\end{prop}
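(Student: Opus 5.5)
The linear-independence claim follows the cyclotomic argument of Proposition~\ref{prop:rational-independence}. There is one new feature: the relevant integer polynomial now has degree $N-1=\deg\Phi_N$, so Corollary~\ref{cor:cyclotomic} alone does not force it to vanish. Instead we will use that $\Phi_N$ is the minimal polynomial of $\zeta$ and check by hand that no nonzero rational multiple of $\Phi_N$ has the required shape.

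Fix a nonzero $k=(k_0,\ldots,k_{N-2})\in\mathbb{Z}^{N-1}$ and set $F_k(\phi):=\sum_m k_m f_m(\phi)$. With $\zeta=\mathrm{e}^{-2\pi\mathrm{i}/N}$ and $\sin(\phi-2\pi m/N)=\operatorname{Im}(\mathrm{e}^{\mathrm{i}\phi}\zeta^m)$, we get $F_k(\phi)=\operatorname{Im}\bigl(\mathrm{e}^{\mathrm{i}\phi}P_k(\zeta)\bigr)$, where
\begin{equation}
P_k(z):=\Bigl(\sum_{m=0}^{N-2}k_m\Bigr)z^{N-1}-\sum_{m=0}^{N-2}k_m z^m\in\mathbb{Z}[z].
\end{equation}
If $F_k\equiv 0$, then taking $\phi=0$ and $\phi=\pi/2$ gives $P_k(\zeta)=0$.

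Since $N$ is prime, $\Phi_N(z)=1+z+\cdots+z^{N-1}$ is the minimal polynomial of $\zeta$ over $\mathbb{Q}$. Hence $\Phi_N\mid P_k$ in $\mathbb{Q}[z]$, and because $\deg P_k\le N-1$ we must have $P_k=c\,\Phi_N$ for some $c\in\mathbb{Q}$.

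Comparing the coefficients of $z^0,\ldots,z^{N-2}$ gives $-k_m=c$ for every $m$. The top coefficient then gives $\sum_m k_m=-(N-1)c=c$, so $Nc=0$ and $c=0$. This forces $k=0$, a contradiction. This coefficient comparison is the only step that differs from the earlier propositions, and it is where I expect the care to be needed.

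It follows that for each nonzero $k$ the function $F_k(\phi)=A_k\sin\phi+B_k\cos\phi$ has $(A_k,B_k)\neq(0,0)$. Such a function has at most two zeros in $[0,2\pi)$. The union of these zero sets over the countably many nonzero $k$ is countable, hence of measure zero. For every $\phi$ outside this set, $f_0(\phi),\ldots,f_{N-2}(\phi)$ are $\mathbb{Q}$-linearly independent.

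Finally, the equivalence of \eqref{eq:decoupling-Diophantine} with \eqref{eq:approx-mod-constraint} is just substitution of definition \eqref{eq:fm-def}. The first argument of $\mathsf{d}$ in \eqref{eq:decoupling-Diophantine} is literally $2\alpha\lambda f_m(\phi)$.

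We also note a consequence that the later argument will need: with $\gamma_m=2\alpha f_m(\phi)$, Proposition~\ref{prop:diophantine-solvability} then supplies a $\lambda$ satisfying \eqref{eq:approx-mod-constraint}.
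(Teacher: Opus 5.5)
Your proof is correct, and it takes a slightly different route from the paper.

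The paper never handles the degree-$(N-1)$ polynomial directly. It uses $\sum_{m=0}^{N-1}\sin(\phi-2\pi m/N)=0$ to eliminate $s_{N-1}$, writing $f(\phi)=-(I+J)s(\phi)$ with $(I+J)^{-1}=I-\tfrac1N J$. It then imports the $\mathbb{Q}$-independence of $s_0,\ldots,s_{N-2}$ from Proposition~\ref{prop:sine-phase-rational-independence}. There the relevant polynomial has degree at most $N-2$, so Corollary~\ref{cor:cyclotomic} applies directly. Independence transfers because a rational relation $k^{\mathsf T}f=0$ becomes $((I+J)k)^{\mathsf T}s=0$ with $(I+J)k\neq 0$.

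You instead keep $P_k$ at degree $N-1$, use minimality of $\Phi_N$ to force $P_k=c\,\Phi_N$, and kill $c$ by comparing coefficients. These are the same computation seen from opposite ends. The relation the paper exploits is exactly $\Phi_N(\zeta)=0$, and reducing your $P_k$ modulo $\Phi_N$ gives $-\sum_{m}((I+J)k)_m z^m$. So $P_k=c\,\Phi_N$ holds iff $(I+J)k=0$, and your equation $Nc=0$ is the kernel computation behind $\det(I+J)=N\neq 0$.

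The paper's version is more modular: it reuses an earlier lemma through an explicit rational change of basis. Yours is self-contained, and it shows exactly why the restriction $n\le N-3$ in Proposition~\ref{prop:rational-independence} no longer suffices at the last step and what replaces it. Your treatment of the measure-zero exceptional set (each $F_k$ is a nonzero sinusoid with at most two zeros in $[0,2\pi)$) is fine. So is the equivalence with \eqref{eq:decoupling-Diophantine}, which is indeed just substitution.
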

\begin{proof} 
Let
\begin{equation}
s(\phi):=
\begin{pmatrix}
s_0(\phi)\\
\vdots\\
s_{N-2}(\phi)
\end{pmatrix},
\qquad
f(\phi):=
\begin{pmatrix}
f_0(\phi)\\
\vdots\\
f_{N-2}(\phi)
\end{pmatrix}.
\end{equation}
Since
\begin{equation}
\sum_{m=0}^{N-1}\sin\!\left(\phi-\frac{2\pi m}{N}\right)=0,
\end{equation}
Then \eqref{eq:fm-def} can be written in matrix form as
\begin{equation}
\label{eq:f-As}
f(\phi)=-(I+J)\,s(\phi),
\end{equation}
where $I$ is the $(N-1)\times(N-1)$ identity matrix and $J$ is the $(N-1)\times(N-1)$ all-ones matrix.  
The matrix $I+J$ is invertible over $\mathbb{Q}$, and we find that
\begin{equation}
\label{eq:s-in-terms-of-f}
s(\phi)=-(I-\tfrac1N J)\,f(\phi).
\end{equation}
In particular, the vectors $f(\phi)$ and $s(\phi)$ are related by an invertible rational linear transformation. 

The functions $\lbrace  s_0,\ldots, s_{N-2}\rbrace $ are linearly independent over $\mathbb{Q}$ by our earlier cyclotomic argument in Proposition \ref{prop:sine-phase-rational-independence}.  Hence, $\lbrace f_0, \ldots, f_{N-2}\rbrace$ are also linearly independent over $\mathbb{Q}$. Therefore, Proposition~\ref{prop:diophantine-solvability} guarantees a finite $\lambda$ satisfying \eqref{eq:approx-mod-constraint}.
\end{proof}

\subsection{Bounding the total error}
\label{subsec:error-accumulation}

We now combine the two local error mechanisms for a single application of the gate $V(a_{*},\theta,\theta_{\kappa})$:
\begin{itemize}
\item \emph{Implementation / trigonometric reduction error} from \eqref{eq:total-error-before-b-bound}.
\item \emph{Phase-alignment (Diophantine) mismatch on the circle} from \eqref{eq:phase-misalignment-bound}.
\end{itemize}
Since all remaining ingredients of $V(a_{*},\lambda \mathrm{e}^{\mathrm{i}\phi},\theta_{\kappa})$ are unitary, the total \emph{single-step} state error for an input of the form $|\Psi_{\mathrm{in}}\rangle$, given in \eqref{eq:Psiin}, 
is bounded by the sum of \eqref{eq:total-error-before-b-bound} and \eqref{eq:phase-misalignment-bound}:
\begin{align}
\label{eq:step-error-raw}
\bigl\|
V(a_{*},\lambda \mathrm{e}^{\mathrm{i}\phi},\theta_{\kappa})\ket{\Psi_{\mathrm{in}}}
-
\mathcal{V}_{\mathrm{ideal}}\ket{\Psi_{\mathrm{in}}}
\bigr\|
&\le
\bigl\|\ket{\mathcal E_{\mathrm{trig}}}\bigr\|
+
\bigl\|\ket{\mathcal E_{\mathrm{phase}}}\bigr\|
\nonumber \\
&\leq
|b| \,
\Bigl(\sum_{j}|c_{j}| + |c_{*}|\Bigr)+ 2\pi\varepsilon_{\mathrm{ph}}\,
\Bigl(\sum_{j}|c_{j}|\Bigr).
\end{align}

The ideal protocol has coefficients
\begin{equation}
    |c_j|=\frac{1}{\sqrt{N}},
    \qquad
    j=1,\ldots,k,
    \qquad
    |c_*|=\sqrt{\frac{N-k}{N}}.
\end{equation}
Consequently,
\begin{subequations}\begin{align}
    \sum_j|c_j|+|c_*|
    &=
    \frac{k+\sqrt{N-k}}{\sqrt{N}}
    \leq
    \sqrt{N},
    \label{eq:protocol-coefficient-bound-total}
    \\
    \sum_j|c_j|
    &=
    \frac{k}{\sqrt{N}}
    \leq
    \sqrt{N}.
    \label{eq:protocol-coefficient-bound-existing}
\end{align}\end{subequations}
Here we used $1\leq N-k$ and hence
$\sqrt{N-k}\leq N-k$.  Thus, at every step of the protocol, the
constants in \eqref{eq:step-error-raw} may be bounded by
\begin{equation}
\label{eq:step-error}
\bigl\|
V_k\ket{\Psi_{\mathrm{in}}^{(k)}}
-
V_k^{\mathrm{ideal}}\ket{\Psi_{\mathrm{in}}^{(k)}}
\bigr\|
\leq
\sqrt{N}\bigl(|b_k|+2\pi\varepsilon_{\mathrm{ph}}\bigr).
\end{equation}
Define
\begin{equation}
t_k
:=
\frac{\alpha\lambda_k}{\pi}.
\label{eq:dimensionless-tk-error}
\end{equation}
For each step, choose an angle $\phi_k$ for which the relevant
phase-difference frequencies are linearly independent over
$\mathbb{Q}$.  By Proposition~\ref{prop:diophantine-solvability},
for every fixed $\varepsilon_{\mathrm{ph}}>0$, there exists a finite value of $t_k$
such that all modular conditions at step $k$ hold with tolerance
$\varepsilon_{\mathrm{ph}}$.
Once $t_k$ has been fixed, the physical displacement is
\begin{equation}
b_k
=
\lambda_k\mathrm{e}^{\mathrm{i}\phi_k}
=
\frac{\pi t_k}{\alpha}
\mathrm{e}^{\mathrm{i}\phi_k},
\label{eq:bk-from-tk-error}
\end{equation}
and therefore
\begin{equation}
|b_k|
=
\frac{\pi|t_k|}{\alpha}.
\label{eq:bk-size-error}
\end{equation}
Let
\begin{equation}
U_{\mathrm{phys}}
:=
V_{N-1}\cdots V_1
\label{eq:physical-circuit-error}
\end{equation}
and
\begin{equation}
U_{\mathrm{ideal}}
:=
V_{N-1}^{\mathrm{ideal}}
\cdots
V_1^{\mathrm{ideal}}.
\label{eq:ideal-circuit-error}
\end{equation}
For a normalized initial state $\ket{\Psi_{\mathrm{in}}}$, the
difference between the physical and ideal circuits can be written as a telescoping sum:
\begin{equation}
U_{\mathrm{phys}}\ket{\Psi_{\mathrm{in}}}
-
U_{\mathrm{ideal}}\ket{\Psi_{\mathrm{in}}}=
\sum_{k=1}^{N-1}
V_{N-1}\cdots V_{k+1}
\left(
V_k-V_k^{\mathrm{ideal}}
\right)
V_{k-1}^{\mathrm{ideal}}
\cdots
V_1^{\mathrm{ideal}}
\ket{\Psi_{\mathrm{in}}}.
\label{eq:telescoping-error}
\end{equation}
The physical gates multiplying each error term from the left and right are
unitary, while the ideal maps on the right produce the ideal state entering the $k$th step, where $V_{k-1}^{\mathrm{ideal}}\cdots V_1^{\mathrm{ideal}} \ket{\Psi_{\mathrm{in}}}=\ket{\Psi_{k-1}^{\mathrm{ideal}}}$.  Taking norms, using the triangle inequality, and applying
Eq.~\eqref{eq:step-error} at every step gives
\begin{equation}
\left\|
U_{\mathrm{phys}}\ket{\Psi_{\mathrm{in}}}
-
U_{\mathrm{ideal}}\ket{\Psi_{\mathrm{in}}}
\right\|
\leq{}
\sqrt{N}
\sum_{k=1}^{N-1}
\left(
|b_k|+2\pi\varepsilon_{\mathrm{ph}}
\right).
\label{eq:total-error-before-t}
\end{equation}
Using Eq.~\eqref{eq:bk-size-error}, we obtain
\begin{equation}
\left\|
U_{\mathrm{phys}}\ket{\Psi_{\mathrm{in}}}
-
U_{\mathrm{ideal}}\ket{\Psi_{\mathrm{in}}}
\right\|
\le 
\frac{\pi\sqrt{N}}{\alpha}
\sum_{k=1}^{N-1}|t_k|
+
2\pi(N-1)\sqrt{N}\,\varepsilon_{\mathrm{ph}}.
\label{eq:total-error-alpha}
\end{equation}
For fixed $N$ and fixed phase tolerance $\varepsilon_{\mathrm{ph}}$, define
\begin{equation}
T_N(\varepsilon_{\mathrm{ph}})
:=
\sum_{k=1}^{N-1}|t_k|.
\label{eq:TN-error}
\end{equation}
Since the protocol has finitely many steps and every $t_k$ is finite,
we have
\begin{equation}
T_N(\varepsilon_{\mathrm{ph}})
<
\infty.
\label{eq:TN-error-finite}
\end{equation}
Thus,
\begin{equation}
\left\|
U_{\mathrm{phys}}\ket{\Psi_{\mathrm{in}}}
-
U_{\mathrm{ideal}}\ket{\Psi_{\mathrm{in}}}
\right\|
\leq
\frac{\pi\sqrt{N}}{\alpha}
T_N(\varepsilon_{\mathrm{ph}})
+
2\pi(N-1)\sqrt{N}\,\varepsilon_{\mathrm{ph}}.
\label{eq:total-error-TN-rho}
\end{equation}
Let $h>0$ be the prescribed tolerance for the total state error,
and fix an error-budget parameter
\begin{equation}
0<\rho<1.
\label{eq:rho-range-error}
\end{equation}
We allocate the fraction $\rho$ of the total error budget to the
phase-alignment error.  Thus, choose $\varepsilon_{\mathrm{ph}}$ so that
\begin{equation}
2\pi(N-1)\sqrt{N}\,\varepsilon_{\mathrm{ph}}
=
\rho h.
\label{eq:phase-error-rho-budget}
\end{equation}
Equivalently,
\begin{equation}
\varepsilon_{\mathrm{ph}}
=
\frac{\rho h}
{2\pi(N-1)\sqrt{N}}.
\label{eq:epsilon-rho-choice}
\end{equation}
The remaining error budget for the implementation error is
$(1-\rho)h$.  It is therefore sufficient to require
\begin{equation}
\frac{\pi\sqrt{N}}{\alpha}
T_N(\varepsilon_{\mathrm{ph}})
<
(1-\rho)h.
\label{eq:implementation-rho-budget}
\end{equation}
Solving for $\alpha$ gives
\begin{equation}
\alpha
>
\frac{\pi\sqrt{N}}
{(1-\rho)h}
T_N(\varepsilon_{\mathrm{ph}}).
\label{eq:alpha-rho-epsilon}
\end{equation}

\subsection{Time cost of the protocol}
\label{subsec:phase-space-runtime}

We next estimate the control cost of our constructive protocol. We use
the dimensionless pulse-area cost introduced in
Section~\ref{sec:review}. More precisely, for a circuit $U$, define
\begin{equation}
t_{\mathrm{run}}(U)
:=
\sum_{g\in\mathcal{D}(U)}
|\beta_g|
+
\sum_{r\in\mathcal{R}(U)}
|\vartheta_r|,
\label{eq:runtime-cost-definition-section5}
\end{equation}
where $\mathcal{D}(U)$ is the set of unconditional and
qubit-dependent displacement gates in the circuit, $\beta_g$ is the
displacement amplitude of gate $g$, $\mathcal{R}(U)$ is the set of
single-qubit rotations, and $\vartheta_r$ is the rotation angle of
gate $r$. By calling this the runtime, we are assuming that the coefficients of the Hamiltonian \eqref{eq:phasespace-H} are bounded. 

The initialization used in the constructive protocol was given in \eqref{eq:protocol-initialization-gate}.
The pulse-area cost of the initialization is
\begin{equation}
t_{\mathrm{run}}
\left(
U_{\mathrm{init}}
\right)
=
|a_0|
+
|\vartheta_{\mathrm{init}}|
=
|\alpha|
+
|\vartheta_{\mathrm{init}}|.
\label{eq:section5-initialization-cost}
\end{equation}

If the cost of each $V_k$ were evaluated separately, the two
conditional displacements of size $|a_k|/2$ would appear to give a
total contribution of order $N|\alpha|$. This estimate is not the
cost of the compiled circuit, because conditional displacements in
neighboring blocks can be merged: in the product $V_{k+1}V_k$, we have
\begin{equation}
D\left(
-\frac{a_{k+1}}{2}Z
\right)
D\left(
\frac{a_k}{2}Z
\right)
=
D\left(
\frac{a_k-a_{k+1}}{2}Z
\right).
\label{eq:section5-neighboring-displacements-merge}
\end{equation}
The merged displacements connect neighboring vertices of the regular
$N$-gon. Their magnitudes are
\begin{equation}
\left|
\frac{a_k-a_{k+1}}{2}
\right|
=
|\alpha|
\sin\left(
\frac{\pi}{N}
\right).
\label{eq:section5-polygon-edge-length}
\end{equation}
There are $N-2$ such interior merged displacements. Including the two
unmerged endpoint displacements and the two displacements in
$U_{\mathrm{init}}$, the total contribution from displacements whose
amplitudes are proportional to $|\alpha|$ is bounded by
\begin{equation}
t_{\text{run, disp}}
\leq
\frac{|a_{N-1}|}{2}
+
\sum_{k=1}^{N-2}
\frac{|a_k-a_{k+1}|}{2}
+
\frac{|a_1|}{2}
+
|a_0|
\\
=
2|\alpha|
+
(N-2)|\alpha|
\sin\left(
\frac{\pi}{N}
\right).
\label{eq:section5-large-displacement-cost}
\end{equation}
Using $\sin |x| \le |x|$ we obtain
\begin{equation}
t_{\text{run, disp}}
\leq
\left(
2+\pi
\right)
|\alpha|.
\label{eq:section5-large-displacement-cost-bound}
\end{equation}
Thus, the large conditional displacements trace the perimeter of the
polygon rather than repeatedly traveling a distance of order
$|\alpha|$ at every protocol step.

For qubit rotations, each rotation angle may be represented by a
value $\theta_k \in [-\pi, \pi]$.   
The total single-qubit-rotation contribution consequently obeys
\begin{equation}
t_{\text{run, rot}}
\leq
\pi N.
\label{eq:section5-rotation-cost}
\end{equation}

The remaining displacement contribution comes from the small
$Y$-controlled kicks:
\begin{equation}
t_{\text{run, small}}
=
\sum_{k=1}^{N-1}
|b_k|.
\label{eq:section5-small-displacement-cost}
\end{equation}
Combining
\eqref{eq:bk-size-error}, \eqref{eq:section5-large-displacement-cost},
\eqref{eq:section5-rotation-cost}, and
\eqref{eq:section5-small-displacement-cost} gives the explicit bound

\begin{equation}
t_{\mathrm{run}}
\leq
\left(
2+\pi
\right)
|\alpha|
+
\pi N
+
\frac{\pi}{
|\alpha|
}
T_N\left(
\varepsilon_{\mathrm{ph}}
\right).
\label{eq:section5-runtime-TN}
\end{equation}
Thus, for fixed $N$ and fixed phase-alignment tolerance, we obtain \eqref{eq:trun-good}.
\subsection{Extension to generalized cat states}
\label{app:general_initial}
The same construction can be extended to generalized cat states with
nonuniform amplitudes and phases. Let
\begin{equation}
u_k:=c_k\mathrm{e}^{\mathrm{i}\Gamma_k},\qquad k=0,\ldots,N-1,
\label{eq:generalized-coefficient}
\end{equation}
where $c_k>0$ is the magnitude of the $k$th coefficient and
$\Gamma_k\in\mathbb{R}$ is its phase. Because an overall phase is physically irrelevant, we may assume $\Gamma_0=0$. Because of the normalization, we assume
\begin{equation}
\sum_{k=0}^{N-1}|u_k|^2=\sum_{k=0}^{N-1}c_k^2=1.
\label{eq:generalized-coefficient-normalization}
\end{equation}
In the large-separation regime, the generalized target state is
\begin{equation}
\ket{\mathrm{cat}_{\mathrm{gen}}}=\sum_{k=0}^{N-1}u_k\ket{a_k}=\sum_{k=0}^{N-1} c_k\mathrm{e}^{\mathrm{i}\Gamma_k}\ket{a_k}.
\label{eq:generalized-cat-state}
\end{equation}
Define the remaining amplitude reservoir before step $k$ by
\begin{equation}
r_k:=\sqrt{\sum_{\ell=k}^{N-1}c_\ell^2},\qquad k=0,\ldots,N,
\label{eq:generalized-reservoir}
\end{equation}
The generalized protocol begins from
\begin{equation}
\ket{\Psi_0^{\mathrm{gen}}}:=
\begin{pmatrix}u_0\ket{a_0}\\[0.7em]
r_1\ket{0}
\end{pmatrix}.
\label{eq:generalized-initial-state}
\end{equation}
Before adding the component at $a_k$, the generalized ideal state is
\begin{equation}
\ket{\Psi_{k-1}^{\mathrm{gen}}}=
\begin{pmatrix}
\sum_{j=0}^{k-1}u_j\ket{a_j}\\[0.7em]
r_k\ket{0}
\end{pmatrix}.
\label{eq:generalized-state-before-step}
\end{equation}
At step $k$, we could find
\begin{equation}
\kappa_k=\frac{c_k}{r_k}=\frac{c_k}{\sqrt{\displaystyle\sum_{\ell=k}^{N-1}c_\ell^2}},
\label{eq:generalized-kappa-and-phase}
\end{equation}
which has the same meaning of \eqref{eq:2pi-kappa-fix}.Let $V_k$ denote the ideal real-amplitude transfer map with
$a_*=a_k$ and $\kappa=\kappa_k$ as in \eqref{eq:V-operator}
\begin{equation}
V_k
\begin{pmatrix}
\sum_{j=0}^{k-1}u_j\ket{a_j}\\[0.7em]
r_k\ket{0}
\end{pmatrix}
=
\begin{pmatrix}
\sum_{j=0}^{k-1}u_j\ket{a_j}+\kappa_kr_k\ket{a_k}\\[0.7em]
\sqrt{1-\kappa_k^2}r_k\ket{0}
\end{pmatrix}.
\label{eq:generalized-real-transfer}
\end{equation}
This map produces the correct magnitude $c_k=\kappa_kr_k$, but it does
not yet produce the phase $\mathrm{e}^{\mathrm{i}\Gamma_k}$.
To introduce this phase, define the generalized transfer gate by
\begin{equation}
V_k^{\mathrm{gen}}:=\mathrm{e}^{\mathrm{i}\Gamma_k Z/2}V_k\mathrm{e}^{-\mathrm{i}\Gamma_k Z/2}.
\label{eq:generalized-gate-conjugation}
\end{equation}
Although this expression contains two explicit $Z$ rotations, conjugation only rotates the axes of the existing qubit rotation and $Y$-controlled displacement, while leaving the $Z$-controlled displacements unchanged. Thus, $V_k^{\mathrm{gen}}$ still requires only four instruction-set gates, and the $4N$ depth bound is unchanged. Calculating it explicitly, we get
Finally, applying $\mathrm{e}^{\mathrm{i}\Gamma_k Z/2}$ gives
\begin{align}
V_k^{\mathrm{gen}}
\begin{pmatrix}
\sum_{j=0}^{k-1} u_j\ket{a_j}
\\[0.7em]
r_k\ket{0}
\end{pmatrix}
&=\mathrm{e}^{\mathrm{i}\Gamma_k Z/2}V_k\mathrm{e}^{-\mathrm{i}\Gamma_k Z/2}
\begin{pmatrix}
\sum_{j=0}^{k-1} u_j\ket{a_j}\\[0.7em]
r_k\ket{0}
\end{pmatrix}
\nonumber=\mathrm{e}^{\mathrm{i}\Gamma_k Z/2}V_k
\begin{pmatrix}
\mathrm{e}^{-\mathrm{i}\Gamma_k/2}\sum_{j=0}^{k-1} u_j\ket{a_j}\\[0.7em]
\mathrm{e}^{\mathrm{i}\Gamma_k/2}r_k\ket{0}
\end{pmatrix}\\
&=\begin{pmatrix}
\sum_{j=0}^{k-1} u_j\ket{a_j}+\mathrm{e}^{\mathrm{i}\Gamma_k}\kappa_kr_k\ket{a_k}\\[0.9em]
\sqrt{1-\kappa_k^2}r_k\ket{0}
\end{pmatrix}.
\label{eq:generalized-conjugated-action}
\end{align}
At the final step, we have $\kappa_{N-1}=\frac{c_{N-1}}{r_{N-1}}=1$. The lower component is therefore completely emptied, and the ideal output becomes
\begin{equation}
\ket{\Psi_{N-1}^{\mathrm{gen}}}=
\begin{pmatrix}
\displaystyle\sum_{j=0}^{N-1}u_j\ket{a_j}\\[0.9em]
0
\end{pmatrix}=
\begin{pmatrix}
\displaystyle\sum_{j=0}^{N-1}c_j\mathrm{e}^{\mathrm{i}\Gamma_j}\ket{a_j}\\[0.9em]
0
\end{pmatrix}.
\label{eq:generalized-final-state}
\end{equation}

\section{Explicit protocols}
\subsection{Two explicit protocols for approximate 5-fold cat states}\label{app:N5}
We provide the gate sequences for the \(N=5\) protocol.
For convenience, we group the results into two families of gates acting on the initial state $\ket{0}$. In $U^{(i)}$, the index $(i)$ specifies the Diophantine condition used: $U^{(1)}$ corresponds to $\varepsilon_{\mathrm{ph}}=0.029$, whereas $U^{(2)}$ corresponds to $\varepsilon_{\mathrm{ph}}=0.065$. Both of these gate sequences act on the state $|0\rangle\otimes|0\rangle$.
\begin{align}
U_{N=5}^{(1)}(\alpha)={}&
D\!\left(\frac{\alpha \mathrm{e}^{8\pi \mathrm{i}/5}}{2}Z\right)\,
\mathrm{e}^{-\mathrm{i}(2.190)Y}\,
D\!\left(\frac{5.373+1.746\,\mathrm{i}}{\alpha}Y\right)\,
D\!\left(-\frac{\alpha \mathrm{e}^{8\pi \mathrm{i}/5}}{2}Z\right) \notag \\
&\times
D\!\left(\frac{\alpha \mathrm{e}^{6\pi \mathrm{i}/5}}{2}Z\right)\,
\mathrm{e}^{-\mathrm{i}(2.557)Y}\,
D\!\left(\frac{0.411-5.324\,\mathrm{i}}{\alpha}Y\right)\,
D\!\left(-\frac{\alpha \mathrm{e}^{6\pi \mathrm{i}/5}}{2}Z\right) \notag \\
&\times
D\!\left(\frac{\alpha \mathrm{e}^{4\pi \mathrm{i}/5}}{2}Z\right)\,
\mathrm{e}^{-\mathrm{i}(3.135)Y}\,
D\!\left(\frac{-4.793+6.597\,\mathrm{i}}{\alpha}Y\right)\,
D\!\left(-\frac{\alpha \mathrm{e}^{4\pi \mathrm{i}/5}}{2}Z\right) \notag \\
&\times
D\!\left(\frac{\alpha \mathrm{e}^{2\pi \mathrm{i}/5}}{2}Z\right)\,
\mathrm{e}^{-\mathrm{i}(5.545)Y}\,
D\!\left(\frac{-1.002+1.000\,\mathrm{i}}{\alpha}Y\right)\,
D\!\left(-\frac{\alpha \mathrm{e}^{2\pi \mathrm{i}/5}}{2}Z\right) \notag \\
&\times
D\!\left(\frac{\alpha}{2}\right)\,
D\!\left(\frac{\alpha}{2}Z\right)\mathrm{e}^{-\mathrm{i}(\arctan 2)Y}. \\ 
U_{N=5}^{(2)}(\alpha)={}&
D\!\left(\frac{\alpha \mathrm{e}^{8\pi \mathrm{i}/5}}{2}Z\right)\,
\mathrm{e}^{-\mathrm{i}(2.790)Y}\,
D\!\left(\frac{-2.783+4.419\,\mathrm{i}}{\alpha}Y\right)\,
D\!\left(-\frac{\alpha \mathrm{e}^{8\pi \mathrm{i}/5}}{2}Z\right) \notag \\ 
&\times
D\!\left(\frac{\alpha \mathrm{e}^{6\pi \mathrm{i}/5}}{2}Z\right)\,
\mathrm{e}^{-\mathrm{i}(2.252)Y}\,
D\!\left(\frac{-5.097-1.839\,\mathrm{i}}{\alpha}Y\right)\,
D\!\left(-\frac{\alpha \mathrm{e}^{6\pi \mathrm{i}/5}}{2}Z\right) \notag \\
&\times
D\!\left(\frac{\alpha \mathrm{e}^{4\pi \mathrm{i}/5}}{2}Z\right)\,
\mathrm{e}^{-\mathrm{i}(2.267)Y}\,
D\!\left(\frac{-3.141+4.324\,\mathrm{i}}{\alpha}Y\right)\,
D\!\left(-\frac{\alpha \mathrm{e}^{4\pi \mathrm{i}/5}}{2}Z\right) \notag \\
&\times
D\!\left(\frac{\alpha \mathrm{e}^{2\pi \mathrm{i}/5}}{2}Z\right)\,
\mathrm{e}^{-\mathrm{i}(5.545)Y}\,
D\!\left(\frac{-1.002+1.000\,\mathrm{i}}{\alpha}Y\right)\,
D\!\left(-\frac{\alpha \mathrm{e}^{2\pi \mathrm{i}/5}}{2}Z\right) \notag \\
&\times
D\!\left(\frac{\alpha}{2}\right)\,
D\!\left(\frac{\alpha}{2}Z\right)\mathrm{e}^{-\mathrm{i}(\arctan 2)Y}.
\end{align}

\subsection{An explicit phase-alignment solution for $N=3$}
\label{app:N3-exact-solution}
For $N=3$, the phase-alignment conditions can be solved exactly. The protocol begins from
\begin{equation}
\ket{\Psi_0}=
\begin{pmatrix}
\dfrac{1}{\sqrt{3}}\ket{a_0}\\[0.7em]
\sqrt{\dfrac{2}{3}}\ket{0}
\end{pmatrix}.
\end{equation}
where $a_k$ is defined in \eqref{eq:protocol-target-locations}.
At the first step, $\kappa_1=\frac{1}{\sqrt{2}}$ as in \eqref{eq:protocol-kappa-schedule}. We could choose
\begin{equation}
b_1=-\frac{\mathrm{i}\pi}{12\alpha}.
\label{eq:N3-b1-short}
\end{equation}
Then as in \eqref{eq:circular-phase-parametrization}, it gives
\begin{equation}
\theta_0^{(1)}=-\frac{\pi}{12},\qquad\theta_1^{(1)}=\frac{\pi}{24},
\end{equation}
and hence the diophantine condition is satisfied as in \eqref{eq:phase-align-error}. The corresponding qubit-rotation angle \eqref{eq:rotation_angle} is
\begin{equation}
\beta_1=\theta_1^{(1)}-2\theta_0^{(1)}=\frac{5\pi}{24}.
\end{equation}
Thus, with $V_1=D\left(\frac{a_1}{2}Z\right)\mathrm{e}^{\mathrm{i}\beta_1Y}D(b_1Y)D\left(-\frac{a_1}{2}Z\right)$, we obtain
\begin{equation}
V_1\ket{\Psi_0}
\approx
\begin{pmatrix}
\dfrac{1}{\sqrt{3}}
\left(
\ket{a_0}+\ket{a_1}
\right)
\\[0.8em]
\dfrac{1}{\sqrt{3}}\ket{0}
\end{pmatrix}.
\end{equation}
At the final step, we know $\kappa_2=1$. Choosing
\begin{equation}
b_2=\frac{\pi}{6\alpha}\mathrm{e}^{-\mathrm{i}\pi/6}.
\label{eq:N3-b2-short}
\end{equation}
Then it gives
\begin{equation}
\theta_0^{(2)}=\theta_1^{(2)}=-\frac{\pi}{12},\qquad\theta_2^{(2)}=\frac{\pi}{6}.
\end{equation}
The diophantine condition is also satisfied. The required qubit-rotation angle is
\begin{equation}
\beta_2=\theta_2^{(2)}-2\theta_0^{(2)}=\frac{\pi}{3}.
\end{equation}
After $V_2=D\left(\frac{a_2}{2}Z\right)\mathrm{e}^{\mathrm{i}\beta_2Y}D(b_2Y)D\left(-\frac{a_2}{2}Z\right)$, we get
\begin{equation}
V_2V_1\ket{\Psi_0}
\approx
\begin{pmatrix}
\dfrac{1}{\sqrt{3}}
\left(
\ket{a_0}+\ket{a_1}+\ket{a_2}
\right)
\\[0.8em]
0
\end{pmatrix}.
\end{equation}
The phase-misalignment error vanishes at both steps. The remaining
trigonometric-reduction error satisfies
\begin{equation}
\left\|
\ket{\mathcal E_{\mathrm{trig}}}
\right\|
\leq
\frac{\pi(7+\sqrt{2})}
{12\sqrt{3}\,\alpha},
\end{equation}
The explicit protocol for this is 
\begin{align}
U_{N=3}(\alpha)
={}&
D\!\left(
\frac{\alpha\mathrm{e}^{4\pi\mathrm{i}/3}}{2}Z
\right)\,
\mathrm{e}^{\mathrm{i}(\pi/3)Y}\,
D\!\left(
\frac{\pi/(4\sqrt{3})-\mathrm{i}\pi/12}{\alpha}Y
\right)\,
D\!\left(
-\frac{\alpha\mathrm{e}^{4\pi\mathrm{i}/3}}{2}Z
\right)
\notag\\
&\times
D\!\left(
\frac{\alpha\mathrm{e}^{2\pi\mathrm{i}/3}}{2}Z
\right)\,
\mathrm{e}^{\mathrm{i}(5\pi/24)Y}\,
D\!\left(
-\frac{\mathrm{i}\pi}{12\alpha}Y
\right)\,
D\!\left(
-\frac{\alpha\mathrm{e}^{2\pi\mathrm{i}/3}}{2}Z
\right)
\notag\\
&\times
D\!\left(\frac{\alpha}{2}\right)\,
D\!\left(\frac{\alpha}{2}Z\right)\,
\mathrm{e}^{-\mathrm{i}(\arctan\sqrt{2})Y}
\label{eq:N3-explicit-protocol}
\end{align}
\end{appendix}
acting on the state $|0\rangle\otimes|0\rangle$.
\phantomsection
\section*{References}
\addcontentsline{toc}{section}{References}
\makeatletter
\let\bibsection\relax
\makeatother
\bibliography{ref}
\end{document}